\newtheorem{thm}{Theorem}[section]
\newtheorem{prop}{Proposition}[section]
\newtheorem{coro}{Corollary}[section]
\newtheorem{lemma}{Lemma}[section]
\newtheorem{rem}{Remark}[section]
\newtheorem{defi}{Definition}[section]
\newcommand{\R}{\mathbb{R}}             
\newcommand{\N}{\mathbb{N}}             
\newcommand{\Z}{\mathbb{Z}}             
\newcommand{\C}{\mathbb{C}}             
\renewcommand{\H}{\mathcal{H}}          
\newcommand{\D}{\mathcal{D}}            
\renewcommand{\S}{\mathcal{S}}          
\newcommand{\half}{\frac{1}{2}}
\newcommand{\ds}{\displaystyle}
\newcommand{\Section}[1]{\section{#1} \setcounter{equation}{0}}
\begin{document}

\title{Non-uniqueness results for the anisotropic Calderon problem with data measured on disjoint sets}
\author{Thierry Daud\'e \footnote{Research supported by the French National Research Projects AARG, No. ANR-12-BS01-012-01, and Iproblems, No. ANR-13-JS01-0006} $^{\,1}$, Niky Kamran \footnote{Research supported by NSERC grant RGPIN 105490-2011} $^{\,2}$ and Francois Nicoleau \footnote{Research supported by the French National Research Project NOSEVOL, No. ANR- 2011 BS0101901} $^{\,3}$\\[12pt]
 $^1$  \small D\'epartement de Math\'ematiques. UMR CNRS 8088. \\
\small Universit\'e de Cergy-Pontoise \\
\small 95302 Cergy-Pontoise, France  \\
\small thierry.daude@u-cergy.fr\\
$^2$ \small Department of Mathematics and Statistics\\
\small  McGill
University\\ \small Montreal, QC, H3A 2K6, Canada\\
\small nkamran@math.mcgill.ca\\
$^3$  \small  Laboratoire de Math\'ematiques Jean Leray, UMR CNRS 6629 \\
      \small  2 Rue de la Houssini\`ere BP 92208 \\
      \small  F-44322 Nantes Cedex 03 \\
\small francois.nicoleau@math.univ-nantes.fr \\}





\maketitle


\begin{abstract}

In this paper, we give some simple counterexamples to uniqueness for the Calderon problem on Riemannian manifolds with boundary when the Dirichlet and Neumann data are measured on disjoint sets of the boundary. We provide counterexamples in the case of two and three dimensional Riemannian manifolds with boundary having the topology of circular cylinders in dimension two and toric cylinders in dimension three. The construction could be easily extended to higher dimensional Riemannian manifolds.


\vspace{1cm}

\noindent \textit{Keywords}. Anisotropic Calderon problem, Helmholtz equation on a Riemannian manifold, Sturm-Liouville problems, Weyl-Titchmarsh function. \\


\noindent \textit{2010 Mathematics Subject Classification}. Primaries 81U40, 35P25; Secondary 58J50.

\end{abstract}

\tableofcontents
\newpage


\Section{Introduction}

This paper is devoted to the study of various anisotropic Calderon problems on some simple Riemannian manifolds to be described below. Let us first recall some basic facts about the anisotropic Calderon problem in this setting. We refer for instance to \cite{DSFKSU, DSFKLS, GSB, GT1, KS1, LaTU, LaU, LeU} for important contributions to the subject and to the surveys \cite{GT2, KS2, Sa, U1} for the current state of the art.

Let $(M, g)$ be an $n$ dimensional smooth compact Riemannian manifold with smooth boundary $\partial M$. Let us denote by $\Delta_{LB}$ the positive Laplace-Beltrami operator on $(M,g)$. In a local coordinate system $(x^i)_{i = 1,\dots,n}$, the Laplace-Beltrami operator $\Delta_{LB}$ has the expression
$$
\Delta_{LB}=  -\Delta_g = -\frac{1}{\sqrt{|g|}} \partial_i \left( \sqrt{|g|} g^{ij} \partial_j \right),
$$
where $\left(g^{ij}\right)$ is the matrix inverse of the metric tensor $\left(g_{ij}\right)$, where  $|g| = \det \left(g_{ij}\right)$ is the determinant of $g$ and where we use the Einstein summation convention. We recall that the Laplace-Beltrami operator $-\Delta_g$ with Dirichlet boundary conditions is selfadjoint on $L^2(M, dVol_g)$ and has pure point spectrum $\{ \lambda_j^2\}_{j \geq 1}$ so that $0 < \lambda_1^2 \leq \dots \leq \lambda_j^2 \to +\infty$ (see for instance \cite{KKL}).

We consider the Dirichlet problem at a frequency $\lambda^2 \in \R$ on $(M,g)$ such that $\lambda^2 \notin \{ \lambda_j^2\}_{j \geq 1}$. We are interested in the solutions $u$ of
\begin{equation} \label{Eq00}
  \left\{ \begin{array}{cc} -\Delta_g u = \lambda^2 u, & \textrm{on} \ M, \\ u = \psi, & \textrm{on} \ \partial M. \end{array} \right.
\end{equation}
It is well known (see for instance \cite{Sa}) that for any $\psi \in H^{1/2}(\partial M)$, there exists a unique weak solution $u \in H^1(M)$ of (\ref{Eq00}). This allows us to define the Dirichlet-to-Neumann (DN) map as the operator $\Lambda_{g}(\lambda^2)$ from $H^{1/2}(\partial M)$ to $H^{-1/2}(\partial M)$ defined for all $\psi \in \H^{1/2}(\partial M)$ by
\begin{equation} \label{DN-Abstract}
  \Lambda_{g}(\lambda^2) (\psi) = \left( \partial_\nu u \right)_{|\partial M},
\end{equation}
where $u$ is the unique solution of (\ref{Eq00}) and $\left( \partial_\nu u \right)_{|\partial M}$ is its normal derivative with respect to the unit outer normal vector $\nu$ on $\partial M$. Here $\left( \partial_\nu u \right)_{|\partial M}$ is interpreted in the weak sense as an element of $H^{-1/2}(\partial M)$ by
$$
  \left\langle \Lambda_{g}(\lambda^2) \psi | \phi \right \rangle = \int_M \langle du, dv \rangle_g dVol_g,
$$
for any $\psi \in H^{1/2}(\partial M)$ and $\phi \in H^{1/2}(\partial M)$ such that $u$ is the unique solution of (\ref{Eq00}) and $v$ is any element of $H^1(M)$ so that $v_{|\partial M} = \phi$. If $\psi$ is sufficiently smooth, we can check that
$$
  \Lambda_{g}(\lambda^2) \psi = g(\nu, \nabla u)_{|\partial M} = du(\nu)_{|\partial M} = \nu(u)_{|\partial M},
$$
where $\nu$ represents the unit outer normal vector to $\partial M$. Clearly, in that case, an expression in local coordinates for the normal derivative is thus
\begin{equation} \label{DN-Coord}
\partial_\nu u = \nu^i \partial_i u.
\end{equation}

We shall be interested in fact in the \emph{partial} DN maps defined as follows. Let $\Gamma_D$ and $\Gamma_N$ be two open subsets of $\partial M$. We then define the partial DN map $\Lambda_{g,\Gamma_D,\Gamma_N}(\lambda^2)$ as the restriction of the global DN map $\Lambda_g(\lambda^2)$ to Dirichlet data given on $\Gamma_D$ and Neumann data measured on $\Gamma_N$. Precisely, consider the Dirichlet problem
\begin{equation} \label{Eq0}
  \left\{ \begin{array}{cc} -\Delta_g u = \lambda^2 u, & \textrm{on} \ M, \\ u = \psi, & \textrm{on} \ \Gamma_D, \\ u = 0, & \textrm{on} \ \partial M \setminus \Gamma_D. \end{array} \right.
\end{equation}
We thus define $\Lambda_{g,\Gamma_D,\Gamma_N}(\lambda^2)$ as the operator acting on the functions $\psi \in H^{1/2}(\partial M)$ with $\textrm{supp}\,\psi \subset \Gamma_D$ by
\begin{equation} \label{Partial-DNmap}
  \Lambda_{g,\Gamma_D,\Gamma_N}(\lambda^2) (\psi) = \left( \partial_\nu u \right)_{|\Gamma_N},
\end{equation}
where $u$ is the unique solution of (\ref{Eq0}).

The anisotropic partial Calderon problem can be initially stated as: \emph{does the knowledge of the partial DN map $\Lambda_{g,\Gamma_D, \Gamma_N}(\lambda^2)$ at a frequency $\lambda^2$ determine uniquely the metric $g$}? One can think of three subcases of the above problem:
\begin{itemize}
\item \textbf{Full data}: $\Gamma_D = \Gamma_N = \partial M$. In that case, we simply denote by $\Lambda_g(\lambda^2)$ the DN map.

\item \textbf{Local data}: $\Gamma_D = \Gamma_N = \Gamma$, where $\Gamma$ can be any nonempty open subset of $\partial M$. In that case, we denote by $\Lambda_{g, \Gamma}(\lambda^2)$ the DN map.

\item \textbf{Data on disjoint sets}: $\Gamma_D$ and $\Gamma_N$ are disjoint open sets of $\partial M$.
\end{itemize}

Due to a number of gauge invariances, the answer to the above questions is no. Indeed, it is clear from the definition (\ref{Eq0}) - (\ref{Partial-DNmap}) that in any dimension, the partial DN map $\Lambda_{g, \Gamma_D, \Gamma_N}(\lambda^2)$ is invariant under pullback of the metric by the diffeomorphisms of $M$ that are the identity on $\Gamma_D \cup \Gamma_N$, \textit{i.e.}
\begin{equation} \label{Inv-Diff}
  \forall \phi \in \textrm{Diff}(M) \ \textrm{such that} \ \phi_{|\Gamma_D \cup \Gamma_N} = Id, \quad \Lambda_{\phi^*g, \Gamma_D, \Gamma_N}(\lambda^2) = \Lambda_{g, \Gamma_D, \Gamma_N}(\lambda^2).
\end{equation}

In the two dimensional case and for zero frequency $\lambda^2 = 0$, there is another gauge invariance of the DN map due to the conformal invariance of the Laplace-Beltrami operator. More precisely, recall that in dimension $2$
$$
  \Delta_{cg} = \frac{1}{c} \Delta_g,
$$
for any smooth function $c >0$. Therefore, we have in dimension $2$
\begin{equation} \label{Inv-Conf}
  \forall c \in C^\infty(M) \ \textrm{such that} \ c >0 \ \textrm{and} \ c_{|\Gamma_N} = 1, \quad \Lambda_{c g, \Gamma_D, \Gamma_N}(0) = \Lambda_{g, \Gamma_D, \Gamma_N}(0),
\end{equation}
since the unit outer normal vectors $\nu_{cg}$ and $\nu_g$ coincide on $\Gamma_N$ in that case.

Hence the appropriate question (called the \emph{anisotropic Calderon conjecture}) to adress is the following. \\

\noindent \textbf{(Q1)}: \emph{Let $M$ be a smooth compact manifold with smooth boundary $\partial M$ and let $g_{1},\,g_{2}$ be smooth Riemannian metrics on $M$. Let $\Gamma_D, \Gamma_N$ be any open sets of $\partial M$ and $\lambda^2$ be a fixed frequency that does not belong to $\sigma(-\Delta_g)$. If
$$
  \Lambda_{g_1,\Gamma_D, \Gamma_N}(\lambda^2) = \Lambda_{g_2,\Gamma_D, \Gamma_N}(\lambda^2),
$$
then is it true that
$$
  g_1 = g_2,
$$
up to the invariance (\ref{Inv-Diff}) if $\dim M \geq 3$ and up to the invariances (\ref{Inv-Diff}) - (\ref{Inv-Conf}) if $\dim M = 2$ and $\lambda^2 = 0$}?

In $\dim M \geq 3$, we can adress another relevant (and simpler!) problem by assuming that the Riemannian manifolds $(M,g_1)$ and $(M,g_2)$ belong to the same conformal class, \textit{i.e.} there exists a smooth positive function $c$ (called the conformal factor) such that $g_2 = c g_1$. In that case, $g_1$ is considered as the background known metric and the problem consists in determining the unknown scalar function $c$ from the DN map $\Lambda_{c g_1,\Gamma_D, \Gamma_N}(\lambda^2)$. Precisely, the question becomes: \\

\noindent \textbf{(Q2)}: \emph{Let $(M,g)$ be a smooth compact Riemannian manifold with boundary $\partial M$ and let $\Gamma_D, \Gamma_N$ be any open sets of $\partial M$. Let $c$ be a smooth positive function on $M$. If
$$
  \Lambda_{c g,\Gamma_D, \Gamma_N}(\lambda^2) = \Lambda_{g,\Gamma_D, \Gamma_N}(\lambda^2),
$$
show that there exists a diffeomorphism $\phi: \, M \longrightarrow M$ with $\phi_{| \, \Gamma_D \cup \Gamma_N} = Id$ so that}
\begin{equation} \label{Inv-Conformal}
  \phi^* g = cg.
\end{equation}

Note that in the case of full data $\Gamma_D = \Gamma_N = \partial M$ or more generally in the case where $\Gamma_D \cup \Gamma_N = \partial M$, it is known that any diffeomorphism $\phi: \, M \longrightarrow M$  which satisfies $\phi_{|\partial M} = Id$ and $\phi^* g = cg$ must be the identity \cite{Li}. Therefore, in either of these particular cases, there is no ambiguity arising from diffeomorphisms and (\ref{Inv-Conformal}) should be replaced by the condition
\begin{equation} \label{Inv-Conformal-1}
  c = 1, \quad \textrm{on} \ M.
\end{equation}

A last version of the anisotropic Calderon problem which in some sense generalizes \textbf{(Q2)} is the following. Consider the solution of the Schr\"odinger equation on $(M,g)$ with potential $V \in L^\infty(M)$
\begin{equation} \label{Eq0-Schrodinger}
  \left\{ \begin{array}{cc} (-\Delta_g + V) u = \lambda^2 u, & \textrm{on} \ M, \\ u = \psi, & \textrm{on} \ \Gamma_D, \\ u = 0, & \textrm{on} \ \partial M \setminus \Gamma_D. \end{array} \right.
\end{equation}
It is well known (see again for instance \cite{DSFKSU, Sa}) that if $\lambda^2$ does not belong to the Dirichlet spectrum of $-\Delta_g +V$, then for any $\psi \in H^{1/2}(\partial M)$, there exists a unique weak solution $u \in H^1(M)$ of (\ref{Eq0-Schrodinger}). This allows us to define the partial Dirichlet-to-Neumann map $\Lambda_{g, V, \,\Gamma_D, \Gamma_N}(\lambda^2)$ for all $\psi \in \H^{1/2}(\partial M)$ with supp $\psi \subset \Gamma_D$ by
\begin{equation} \label{DN-Abstract-Schrodinger}
  \Lambda_{g, V,\Gamma_D, \Gamma_N}(\lambda^2) (\psi) = \left( \partial_\nu u \right)_{|\Gamma_N},
\end{equation}
where $u$ is the unique solution of (\ref{Eq0-Schrodinger}) and $\left( \partial_\nu u \right)_{|\Gamma_N}$ is its normal derivative with respect to the unit outer normal vector $\nu$ on $\Gamma_N$. Once again, we assume that $g$ is a known background metric and the problem consists in determining the unknown potential $V \in L^\infty(M)$ from the DN map $\Lambda_{g, V, \,\Gamma_D, \Gamma_N}(\lambda^2)$. Precisely, the question is: \\

\noindent \textbf{(Q3)}: \emph{Let $(M,g)$ be a smooth compact Riemannian manifold with smooth boundary $\partial M$ and let $\Gamma_D, \Gamma_N$ be any open sets of $\partial M$. Let $V_1$ and $V_2$ be potentials in $L^\infty(M)$. If
$$
  \Lambda_{g, V_1, \Gamma_D, \Gamma_N}(\lambda^2) = \Lambda_{g, V_2, \Gamma_D, \Gamma_N}(\lambda^2),
$$
is it true that}
$$
  V_1 = V_2?
$$

There is a straightforward link between \textbf{(Q2)} and \textbf{(Q3)} in the case of zero frequency $\lambda^2 = 0$ and global data. The main point is the observation that the Laplace-Beltrami operator transforms under conformal scalings of the metric by
\begin{equation} \label{ConformalScaling}
  \Delta_{cg} u = c^{-\frac{n+2}{4}} \left( \Delta_g + q \right) \left( c^{\frac{n-2}{4}} u \right),
\end{equation}
where
\begin{equation} \label{q}
  q = c^{-\frac{n-2}{4}} \Delta_{g} c^{\frac{n-2}{4}}.
\end{equation}	
Then we can show that if $c$ is a positive smooth function on $M$ such that
$$
  c_{|\Gamma_D \cup \Gamma_N} = 1, \quad \left( \partial_{\nu} c \right)_{|\Gamma_N} = 0,
$$
then
\begin{equation} \label{Inv-Conf-V}
  \Lambda_{cg, V,\,\Gamma_D,\Gamma_N}(0) = \Lambda_{g, cV-q,\,\Gamma_D,\Gamma_N}(0),
\end{equation}
where $q$ is given by (\ref{q}). The proof of (\ref{Inv-Conf-V}) is an immediate adaptation of the proof of the Prop 8.2 in \cite{DSFKSU}. In particular, we have (using the preceding notations)
\begin{equation} \label{Link-Q23}
  \Lambda_{cg,\,\Gamma_D,\Gamma_N}(0) = \Lambda_{g, -q,\,\Gamma_D,\Gamma_N}(0),
\end{equation}
where $q$ is given by (\ref{q}).

Let us show that \textbf{(Q3)} implies \textbf{(Q2)} in the case of zero frequency and global data (\textit{i.e.} when $\Gamma_D = \Gamma_N = \partial M$). Assume that \textbf{(Q3)} is true and assume that for two metrics $g$ and $cg$, we have
\begin{equation} \label{u1}
  \Lambda_{c g}(0) = \Lambda_{g}(0).
\end{equation}
Then by boundary determination (\cite{DSFKSU, KY, LeU}, we can show that $c_{|\partial M} = 1$ and $\left( \partial_{\nu} c \right)_{|\partial M} = 0$. Hence, we can use (\ref{Link-Q23}) to show that (\ref{u1}) is equivalent to
 \begin{equation} \label{u2}
  \Lambda_{g, -q}(0) = \Lambda_{g, 0}(0),
\end{equation}
with $q$ given by (\ref{q}) and $\Lambda_{g, -q}(0)$ stands for the global DN map. Finally, our hypothesis that \textbf{(Q3)} holds true now asserts that $q = 0$, or in other words that $\Delta_g c^{\frac{n-2}{4}} = 0$. Since $c_{|\partial M} = 1$, uniqueness of solutions for the Dirichlet problem shows that $c = 1$ on $M$ and \textbf{(Q2)} is proved. \\

The most complete results on the anisotropic Calderon problems \textbf{(Q1)}, \textbf{(Q2)} and \textbf{(Q3)} have been obtained in the case of full data ($\Gamma_D = \Gamma_N = \partial M$) and local data ($\Gamma_D = \Gamma_N = \Gamma$ with $\Gamma$ any open subset of $M$) for \emph{vanishing frequency $\lambda^2 = 0$}. In dimension $2$, the anisotropic Calderon problem \textbf{(Q1)} for global and local data with $\lambda^2 = 0$ was shown to be true for connected Riemannian surfaces in \cite{LaU}. We also refer to \cite{ALP} for similar results answering \textbf{(Q1)} for global and local data in the case of anisotropic conductivities which are only $L^\infty$ on bounded domains of $\R^n$.
A positive answer to \textbf{(Q1)} for global and local data and zero frequency $\lambda^2 = 0$ in dimension $3$ or higher has been given for compact connected real analytic Riemannian manifolds with real analytic boundary first in \cite{LeU} under some topological assumptions relaxed later in \cite{LaU, LaTU} and for compact connected Einstein manifolds with boundary in \cite{GSB}. Note that Einstein manifolds are real analytic in their interior. Let us point out here that no connectedness assumption on the  measurement set $\Gamma$ was made in the works \cite{GSB, LaU}.

The general full or local data anisotropic Calderon problem \textbf{(Q1)} in dimension $3$ or higher remains a major open problem. A few deep results concerning the partial questions \textbf{(Q2)} and \textbf{(Q3)} have been obtained recently in \cite{DSFKSU, DSFKLS} for some classes of smooth compact Riemannian manifolds with boundary that are \emph{conformally transversally anisotropic}, \textit{i.e.} Riemannian manifolds $(M,g)$ such that
$$
  M \subset \subset \R \times M_0, \quad g = c ( e \oplus g_0),
$$
where $(M_0,g_0)$ is a $n-1$ dimensional smooth compact Riemannian manifold with boundary, $e$ is the Euclidean metric on the real line, and
$c$ is a smooth positive function in the cylinder $\R \times M_0$. Under some conditions on the transverse manifold $(M_0, g_0)$ such as for instance simplicity\footnote{We say that a compact manifold $(M_0,g_0)$ is simple if any two points in $M_0$ are connected by a unique geodesic depending smoothly on the endpoints and if $\partial M_0$ is strictly convex (its second fundamental form is positive definite).}, the Riemannian manifold $(M,g)$ is said to be \emph{admissible}. In that framework, the authors of \cite{DSFKSU, DSFKLS} were able to determine uniquely the conformal factor $c$ from the knowledge of the DN map at zero frequency $\lambda^2 = 0$, that is, they answered both \textbf{(Q2)} and \textbf{(Q3)} for the class of admissible Riemannian manifolds. Moreover, these results were extended recently to anisotropic Calderon problems with partial data in \cite{KS1} (see below). We also refer to \cite{GT1, Is, IUY1} for other local data results and to the surveys \cite{GT2, KS2} for more references on the subject.

Concerning the anisotropic Calderon problem with data measured on distinct (not necessarily disjoint) sets $\Gamma_D, \Gamma_N$ of $\partial M$, we refer to \cite{KSU} for some positive results of \textbf{(Q3)} in the case of bounded domains $\Omega$ of $\R^n, \ n \geq 3$ equipped with the Euclidean metric. Roughly speaking, in \cite{KSU}, the sets $\Gamma_D, \Gamma_N$ where the measurements are made must overlap a little bit. Precisely, $\Gamma_D \subset \partial \Omega$ can possibly be very small and $\Gamma_N$ must then be slightly larger than $\partial \Omega \setminus \Gamma_D$. We also refer to \cite{KS1} for the generalization of \cite{KSU} to admissible Riemannian manifolds. To explain the result of \cite{KS1}, we recall first that admissible manifolds admit certain functions $\varphi$ which are called \emph{limiting Carleman weights} (LCW) and which are useful for constructing complex geometrical optic solutions. We refer to \cite{DSFKSU} for the definition and properties of limiting Carleman weights on manifolds and their applications. Thanks to the existence of LCW $\varphi$, we can decompose the boundary of $M$ as
$$
  \partial M = \partial M_+ \cup \partial M_{\textrm{tan}} \cup \partial M_-,
$$
where
$$
  \partial M_\pm = \{ x \in \partial M: \ \pm \partial_\nu \varphi(x) > 0 \}, \quad \partial M_{\textrm{tan}} = \{ x \in \partial M: \ \pm \partial_\nu \varphi(x) = 0 \}.
$$
Roughly speaking, the authors of \cite{KS1} show that \textbf{(Q3)} is true\footnote{In fact, additional geometric assumptions on the transverse manifold $(M_0,g_0)$ are needed to give a full proof of this result. We refer to \cite{KS1} Theorem 2.1 for the precise statement.} if the set of Dirichlet data $\Gamma_D$ contains $\partial M_- \cup \Gamma_a$ and the set of Neumann measurements $\Gamma_N$ contains $\partial M_+ \cup \Gamma_a$ where $\Gamma_a$ is some open subset of $\partial M_{\textrm{tan}}$. Hence in particular, the sets $\Gamma_D$ and $\Gamma_N$ must overlap in order to have uniqueness. The only exception occurs in the case where $\partial M_{\textrm{tan}}$ has zero measure. In that case, it is enough to take $\Gamma_D = \partial M_-$ and $\Gamma_N = \partial M_+$ to have uniqueness in \textbf{(Q3)} (see Theorem 2.3 of \cite{KS1}). Note in this case that $\Gamma_D \cap \Gamma_N = \partial M_- \cap \partial M_+ = \emptyset$.

We conclude our survey of anisotropic Calderon problems with the case of data measured on \emph{disjoint sets} for which only a few results are known in the case of zero frequency $\lambda^2 = 0$. We already mentioned above the recent paper \cite{KS1} which concerns a certain subclass of admissible Riemannian manifolds. The only other result we are aware is due to Imanuvilov, Uhlmann and Yamamoto \cite{IUY2} who, roughly speaking, showed in the $2$ dimensional case, that the potential of a Schr\"odinger equation on a two-dimensional domain homeomorphic to a disc, where the boundary is partitioned into eight clockwise-ordered parts $\Gamma_1, \Gamma_2, \dots, \Gamma_8$ is determined by boundary measurements with sources supported on $S = \Gamma_2 \cup \Gamma_6$ and fields observed on $R = \Gamma_4 \cup \Gamma_8$, hence answering \textbf{(Q3)} in this particular setting.

Let us also mention some related papers by Rakesh \cite{Rak} and by Oksanen, Lassas \cite{LO1, LO2} dealing with the \emph{hyperbolic} anisotropic Calderon problem, that is to say in our language, the case where we assume the knowledge of the partial DN map at all frequencies $\lambda^2$. We refer to \cite{KKL} for a thorough account on hyperbolic anisotropic Calderon problem and to \cite{KKLM} for the link between the hyperbolic DN map and the elliptic DN map at all frequencies. For instance, Oksanen and Lassas showed in \cite{LO2} that $(M,g)$ is uniquely determined (up to the gauge invariance (\ref{Inv-Diff})) from the knowledge of $\Lambda_{g,\Gamma_D,\Gamma_N}(\lambda^2)$ at all frequencies $\lambda^2$ under the Hassell-Tao type assumption
$$
  \exists C_0 > 0, \quad \quad \lambda^2_j \leq C_0 \| \partial_\nu \phi_j \|^2_{L^2(\Gamma_D)}, \quad \forall j \geq 1, \dots
$$
where $\lambda^2_j$ are the eigenvalues of $-\Delta_g$ with Dirichlet boundary conditions and $\phi_j$ are the associated normalized eigenfunctions.

Finally, in \cite{Rak},  Rakesh proved that the coefficients of a wave equation on a one-dimensional interval are determined by boundary measurements with sources supported on one end of the interval and the waves observed on the other end. Here again, the uniqueness result entails to know the hyperbolic DN map or equivalently the DN map at all frequencies. \\


In this paper, we are mainly concerned with the anisotropic Calderon problems \textbf{(Q1)} and \textbf{(Q2)} with Dirichlet and Neumann data measured on \emph{disjoint sets}. We provide some simple counterexamples to uniqueness in \textbf{(Q1)} and \textbf{(Q2)} for Riemannian surfaces (in the case of nonzero frequency) and for $3$ dimensional Riemannian manifolds (without restriction on the frequency). In fact, similar counterexamples to uniqueness can be found for any $n$ dimensional Riemannian manifold, but we only give the details in the $3$ dimensional case to keep things concise.

First, we consider a smooth compact Riemannian surface $(\S,g)$ having the topology of a cylinder $M = [0,1] \times T^1$ and that is equipped with a Riemannian metric given in global isothermal coordinates $(x,y)$ by
\begin{equation}\label{Metric-Intro}
  g = f(x) [dx^2 + dy^2].
\end{equation}
Here $f$ is a smooth positive function on $\S$ of the variable $x$ only and $T^1$ stands for the one dimensional torus. The boundary $\partial \S$ of $\S$ is not connected and consists in two copies of $T^1$, precisely
$$
  \partial \S = \Gamma_0 \cup \Gamma_1, \quad \Gamma_0 = \{0\} \times T^1, \quad \Gamma_1 = \{1\} \times T^1.
$$
Let $\Gamma_D$ and $\Gamma_N$ be nonempty open subsets of $\partial M$. We denote $\Lambda_{g,\Gamma_D,\Gamma_N}(\lambda^2)$ the associated partial DN map corresponding to Dirichlet data given on $\Gamma_D$ and Neumann data measured on $\Gamma_N$. We shall prove

\begin{thm} \label{MainThm-2D}
Let $(\S,g)$ denote a Riemannian surface of the form (\ref{Metric-Intro}), \textit{i.e.}
$$
 g = f(x) [dx^2 + dy^2],
$$
with $f$ a smooth positive function on $\S$. Let $\lambda^2 \ne 0$ be a fixed frequency. Let $\Gamma_D$ and $\Gamma_N$ be nonempty open subsets of $\partial M$ that belong to distinct connected components of $\partial M$. Then there exists an infinite dimensional family of non-isometric metrics $\tilde{g_c} = cg$ of the form (\ref{Metric-Intro}), parametrized by smooth positive functions $c = c(x)$, which satisfy $c(0) = c(1) = 1$ such that
$$
	\Lambda_{g,\Gamma_D,\Gamma_N}(\lambda^2) = \Lambda_{\tilde{g_c},\Gamma_D,\Gamma_N}(\lambda^2).
$$
\end{thm}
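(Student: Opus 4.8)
\emph{Overall strategy.} The plan is to use the rotational symmetry of the cylinder to diagonalize the relevant DN map, reducing it to a one-parameter family of Sturm--Liouville problems on $[0,1]$, and then to observe that the only spectral data it encodes is a single Dirichlet spectrum, which is famously underdetermined. \emph{First I would reduce to an off-diagonal global DN map.} Since $\Gamma_D$ and $\Gamma_N$ lie in different components of $\partial M$, assume $\Gamma_D\subset\Gamma_0$ and $\Gamma_N\subset\Gamma_1$ (the reverse inclusion is identical). For $\psi\in H^{1/2}(\partial M)$ with $\textrm{supp}\,\psi\subset\Gamma_D$, the solution $u$ of (\ref{Eq0}) is exactly the solution of the global Dirichlet problem with data $\psi$ on $\Gamma_0$ and $0$ on $\Gamma_1$, so $\Lambda_{g,\Gamma_D,\Gamma_N}(\lambda^2)\psi=\big(\Lambda_{g,\Gamma_0,\Gamma_1}(\lambda^2)\psi\big)_{|\Gamma_N}$. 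Moreover $c(0)=c(1)=1$ forces the unit outer normals of $g$ and $\tilde{g_c}=cg$ to coincide along $\partial M$, so it suffices to prove $\Lambda_{g,\Gamma_0,\Gamma_1}(\lambda^2)=\Lambda_{\tilde{g_c},\Gamma_0,\Gamma_1}(\lambda^2)$ as operators $H^{1/2}(\Gamma_0)\to H^{-1/2}(\Gamma_1)$.

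\emph{Next I would diagonalize by separation of variables.} Writing $-\Delta_g u=\lambda^2 u$ as $(\partial_x^2+\partial_y^2)u+\lambda^2 f(x)u=0$ and expanding in the Fourier modes $e^{iny}$, $n\in\Z$, the operator $\Lambda_{g,\Gamma_0,\Gamma_1}(\lambda^2)$ becomes a Fourier multiplier $(a_n)\mapsto(M_n^g a_n)$. Let $\phi_f(\cdot\,;\mu)$ be the solution of $-w''-\lambda^2 f(x)w=\mu w$ with $w(1)=0$, $w'(1)=1$; then the $n$-th separated solution with the boundary values $w(0)=1,w(1)=0$ is $w_n=\phi_f(\cdot\,;-n^2)/\phi_f(0;-n^2)$, the denominator being nonzero precisely because $\lambda^2\notin\sigma(-\Delta_g)$, and a short computation gives $M_n^g=\big(\sqrt{f(1)}\,\phi_f(0;-n^2)\big)^{-1}$. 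The same formula holds for $\tilde{g_c}$ with $f$ replaced by $cf$ and, by $c(1)=1$, with the \emph{same} prefactor $\sqrt{f(1)}$. Hence the two off-diagonal DN maps coincide as soon as $\phi_{cf}(0;-n^2)=\phi_f(0;-n^2)$ for all $n\in\Z$, which in turn holds whenever the Schr\"odinger operators $-\frac{d^2}{dx^2}-\lambda^2 f$ and $-\frac{d^2}{dx^2}-\lambda^2 cf$ on $[0,1]$ have the same Dirichlet characteristic function $\mu\mapsto\phi(0;\mu)$, equivalently the same Dirichlet spectrum (this also shows $\lambda^2\notin\sigma(-\Delta_{\tilde{g_c}})$ automatically, so the new DN map is well defined). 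Heuristically, the off-diagonal block of the DN map on a cylinder sees only the characteristic function of the transverse problem, whereas the Weyl--Titchmarsh functions that would pin down $f$ live in the diagonal blocks; this is why $\lambda^2\ne0$ is essential — at $\lambda^2=0$ the multiplier is independent of $f$.

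\emph{Then I would build the family from inverse spectral theory.} Put $V_0:=-\lambda^2 f\in C^\infty([0,1])$; since $\lambda^2\ne0$ and $f>0$, $V_0$ is nowhere zero and of constant sign. It remains to produce an infinite-dimensional set of potentials $V\in C^\infty([0,1])$ with the same Dirichlet characteristic function as $V_0$, with $V(0)=V_0(0)$, $V(1)=V_0(1)$, and with $V/V_0>0$; then $c:=V/V_0$ is smooth, positive, equal to $1$ at $0$ and $1$, and $\tilde{g_c}=cg$ does the job. This is exactly the content of the classical isospectral theory of one-dimensional Schr\"odinger operators: the set of smooth potentials on $[0,1]$ with a prescribed Dirichlet spectrum is an infinite-dimensional manifold, realized for instance by the P\"oschel--Trubowitz deformations $V_t=V_0-2(\log\theta_t)''$ with $\theta_t(x)=1+(e^t-1)\int_0^x g_k(s)^2\,ds$, $g_k$ a normalized Dirichlet eigenfunction of $-\frac{d^2}{dx^2}+V_0$; since $g_k$ and $g_k'$ vanish at the endpoints, so do $\theta_t'$ and $\theta_t''$, whence $V_t=V_0$ at $0$ and $1$, and for small parameters $V_t$ stays $C^\infty$ and $C^0$-close to $V_0$, hence of the same sign. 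Superposing such deformations over distinct indices $k$ yields an infinite-dimensional family. Distinct members are generically non-isometric: an isometry between two such conformally flat cylinders is in particular a conformal automorphism of $[0,1]\times T^1$, hence, up to a rotation or reflection in $y$, either the identity or the flip $x\mapsto 1-x$, which would force $c_1 f=c_2 f$ or $c_1(x)f(x)=c_2(1-x)f(1-x)$; both fail for generic parameters. (Equivalently, $\textrm{dist}_{\tilde{g_c}}(\Gamma_0,\Gamma_1)=\int_0^1\sqrt{c(x)f(x)}\,dx$ is a geometric invariant that is not an isospectral invariant of $V_0$, so it already takes infinitely many values along the family.)

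\emph{Main obstacle.} The crux is the diagonalization step, more precisely the recognition that the off-diagonal DN map on the cylinder encodes \emph{only} the Dirichlet characteristic function of the transverse Sturm--Liouville family — a single spectrum's worth of information — so that the entire flexibility of the infinite-dimensional isospectral manifold survives intact. Granting this, together with the standard fact that a Dirichlet spectrum on a fixed interval determines the characteristic function (via its Hadamard product and the universal high-energy asymptotics), the remaining steps — the reduction of Step~1, the explicit multiplier computation, and the non-isometry argument — are routine.
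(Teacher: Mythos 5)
Your proposal is correct and follows essentially the same route as the paper: diagonalization on the Fourier modes $e^{iny}$, identification of the off-diagonal block of the DN map with the reciprocal of the Dirichlet characteristic function of the transverse operator $-\frac{d^2}{dx^2}-\lambda^2 f$ (so that only the Dirichlet spectrum of $q_\lambda=-\lambda^2 f$ is seen), and the P\"oschel--Trubowitz isospectral deformations $\theta_{k,t}$ to produce the infinite-dimensional family of smooth positive conformal factors with $c(0)=c(1)=1$. One small slip: a Dirichlet eigenfunction $g_k$ does \emph{not} have $g_k'$ vanishing at the endpoints (that would force $g_k\equiv 0$), but your conclusion survives because $\theta_t'=(e^t-1)g_k^2$ and $\theta_t''=2(e^t-1)g_k g_k'$ both vanish at $x=0,1$ already from $g_k(0)=g_k(1)=0$, which is all that is needed to get $V_t=V_0$ at the endpoints.
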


We emphasize that the family of metrics $\tilde{g_c}$ parametrized by the smooth positive functions $c = c(x)$ satisfying $c(0) = c(1) = 1$ is \emph{explicit}. We refer to the end of the proof of Theorem \ref{MainThm-T-2D} below for its complete description and more particularly to the formulae (\ref{Iso3}) and (\ref{Iso4}) for a countable family of examples. Conversely, at zero frequency $\lambda^2 = 0$ and still for Dirichlet and Neumann data measured on distinct connected components of $\partial M$, we show that a metric $g = f(x) [dx^2 + dy^2]$ can be uniquely determined by $\Lambda_{g,\Gamma_D,\Gamma_N}(0)$ up to the gauge invariance (\ref{Inv-Conf}). We mention finally that in the case where the data $\Gamma_D$ and $\Gamma_N$ belong to the same connected component of $\partial M$, we can show that a metric $g = f(x) [dx^2 + dy^2]$ can be uniquely determined by $\Lambda_{g,\Gamma_D,\Gamma_N}(\lambda^2)$ for any frequency $\lambda^2$ (up to the gauge invariance (\ref{Inv-Conf}) when $\lambda^2 = 0$). This is proved in Theorem \ref{MainThm-R-2D}.

In which sense is our family of metrics $\tilde{g_c}$ a counterexample to \textbf{(Q1)} and \textbf{(Q2)} when the data are measured on disjoint sets? Since we work at a fixed nonzero frequency $\lambda^2 \ne 0$, the only gauge invariance for the above partial anisotropic Calderon problem is a priori (\ref{Inv-Diff}), \textit{i.e.} $g$ and $\tilde{g}_c$ coincide up to the existence of a diffeomorphism $\phi: \, M \longrightarrow M$ such that $\phi_{|\Gamma_D \cup \Gamma_N} = Id$ and $\phi^* g = \tilde{g}_c = cg$.

Assume the existence of such a diffeomorphism $\phi$ in the case $\Gamma_D \subset \Gamma_0$ and $\Gamma_N \subset \Gamma_1$. Then $\phi$ must send the boundary $\partial M$ into itself, \textit{i.e.} $\phi(\partial M) = \partial M$. More precisely, since by hypothesis $\phi_{|\Gamma_D} = Id$ and $\phi_{|\Gamma_N} = Id$, we see that in fact $\phi$ must send the connected components $\Gamma_0$ and $\Gamma_1$ of the boundary into themselves, \textit{i.e.} $\phi(\Gamma_0) = \Gamma_0$ and $\phi(\Gamma_1) = \Gamma_1$. Let us denote now by $g_0 = f(0) dy^2$ and $g_1 = f(1) dy^2$ the metrics on $\Gamma_0$ and $\Gamma_1$ induced by $g$. Taking the restrictions to $\Gamma_0$ and $\Gamma_1$ of $\phi^* g = \tilde{g}_c = cg$, using $c(0) = c(1) = 1$ and our previous result, we get $\phi^* g_0 = g_0$ and $\phi^* g_1 = g_1$. In other words, $\phi_{|\Gamma_0}$ and $\phi_{|\Gamma_1}$ are isometries of $(\Gamma_0,g_0)$ and $(\Gamma_1,g_1)$ respectively. But the isometries of $(\Gamma_0,g_0)$ and $(\Gamma_1,g_1)$ are simply the transformations $y \mapsto \pm y + a$ for any constant $a$. Using our hypothesis $\phi_{|\Gamma_D} = Id$ and $\phi_{|\Gamma_N} = Id$ again, we see that the unique possibility is $\phi_{|\Gamma_0} = \phi_{|\Gamma_1} = Id$. We thus have $\phi_{|\partial M} = Id$. According to \cite{Li}, the only diffeomorphism $\phi$ satisfying the previous properties is $Id$. We thus conclude that our family of metrics $\tilde{g}_c = c g$ cannot come from the pull back of the initial metric $g$ by such a diffeomorphism and thus provide the claimed counterexamples. \\

We also solve the anisotropic Calderon problem \textbf{(Q3)} in the class of smooth compact Riemannian surfaces (\ref{Metric-Intro}) for potentials $V \in L^\infty$ that only depend on the variable $x$. We show similarly that if the Dirichlet and Neumann data $\Gamma_D$ and $\Gamma_N$ belong to two distinct components of $\partial \S$, then there exists an infinite dimensional family of potentials $\tilde{V} = \tilde{V}(x)$ that satisfies
$$
	\Lambda_{g,V,\Gamma_D,\Gamma_N}(\lambda^2) = \Lambda_{g,\tilde{V},\Gamma_D,\Gamma_N}(\lambda^2).
$$
This is done in Theorem \ref{MainThm-Schrodinger-2D}.

In $3$ dimensions, we consider the family of smooth compact Riemannian manifolds $(M,g)$ which have the topology of a toric cylinder $M = [0,1] \times T^2$ and that are equipped with a metric
\begin{equation} \label{Metric-3D-Intro}
  g = f(x) dx^2 + f(x) dy^2 + h(x) dz^2,
\end{equation}
where $f, h$ are smooth positive functions on $M$. Once again, the boundary $\partial M$ of $M$ is disconnected and consists in two copies of $T^2$, precisely
$$
  \partial M = \Gamma_0 \cup \Gamma_1, \quad \Gamma_0 = \{0\} \times T^2, \quad \Gamma_1 = \{1\} \times T^2.
$$
Let $\Gamma_D$ and $\Gamma_N$ be nonempty open subsets of $\partial M$. We denote $\Lambda_{g,\Gamma_D,\Gamma_N}(\lambda^2)$ the associated partial DN map corresponding to Dirichlet data given on $\Gamma_D$ and Neumann data measured on $\Gamma_N$. We shall prove

\begin{thm} \label{MainThm-3D}
Let $(M,g)$ and $(M,\tilde{g})$ denote two generic Riemannian manifolds of the form (\ref{Metric-3D-Intro}), \textit{i.e.}
$$
	g = f(x) dx^2 + f(x) dy^2 + h(x) dz^2, \quad \tilde{g} = \tilde{f}(x) dx^2 + \tilde{f}(x) dy^2 + \tilde{h}(x) dz^2.
$$
Let $\lambda^2 \in \R$ be a fixed frequency and let $\Gamma_D$ and $\Gamma_N$ be nonempty open subsets of $\partial M$ such that $\Gamma_D \cap \Gamma_N = \emptyset$. Then \\

\noindent 1) There exist infinitely many pairs of non isometric Riemannian manifolds $(M,g)$ and $(M,\tilde{g})$ with $\tilde{g} = c g$ for some smooth positive strictly increasing or decreasing functions $c = c(x)$ satisfying
\begin{itemize}
\item $c(0) = 1$ if $\Gamma_D, \Gamma_N \subset \Gamma_0$,
\item $c(1) = 1$ if $\Gamma_D, \Gamma_N \subset \Gamma_1$,
\item $c(1)^3 = c(0) \ne 1$ if $\Gamma_D \subset \Gamma_0$ and $\Gamma_N \subset \Gamma_1$,
\item $c(0)^3 = c(1) \ne 1$ if $\Gamma_D \subset \Gamma_1$ and $\Gamma_N \subset \Gamma_0$,
\end{itemize}
such that
$$
	\Lambda_{g,\Gamma_D,\Gamma_N}(\lambda^2) = \Lambda_{\tilde{g},\Gamma_D,\Gamma_N}(\lambda^2).
$$

\noindent 2) If moreover $\lambda^2 = 0$, there exists a one parameter family of Riemannian manifolds $(M,\tilde{g}_a)_{a > 0}$ non isometric with the given Riemannian manifold $(M,g)$, satisfying
$$
	\Lambda_{g,\Gamma_D,\Gamma_N}(0) = \Lambda_{\tilde{g}_a,\Gamma_D,\Gamma_N}(0).
$$
The one parameter family of metrics $(\tilde{g}_a)_{a > 0}$ has the form $\tilde{g}_a = c_a g$ where $c_a = c(x,a)$ are smooth positive strictly increasing or decreasing functions satisfying
\begin{itemize}
\item $c(0,a) = 1$ if $\Gamma_D, \Gamma_N \subset \Gamma_0$,
\item $c(1,a) = 1$ if $\Gamma_D, \Gamma_N \subset \Gamma_1$,
\item $c(x,a) > 1$ or $c(x,a) < 1$ if $\Gamma_D \subset \Gamma_0$ (resp. $\Gamma_D \subset \Gamma_1)$ and $\Gamma_N \subset \Gamma_1$ (resp. $\Gamma_D \subset \Gamma_0)$,
\end{itemize}
that are explicitly given in terms of $g$.
\end{thm}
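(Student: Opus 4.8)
The plan is to use separation of variables to reduce the equality of partial DN maps to a family of one–dimensional spectral identities indexed by the Fourier modes on $T^2$, and then to produce the conformal factors $c$ by explicit inverse–spectral constructions. Writing a solution of $-\Delta_g u=\lambda^2 u$ on $M=[0,1]\times T^2$ as $u(x,y,z)=\sum_{(m,n)\in\Z^2}u_{mn}(x)e^{imy}e^{inz}$ and using the explicit form of $\Delta_g$ for the metric (\ref{Metric-3D-Intro}), each $u_{mn}$ solves a Sturm--Liouville equation
$$-\big(\sqrt{h}\,u_{mn}'\big)'+\Big(m^2\sqrt{h}+n^2\tfrac{f}{\sqrt h}-\lambda^2 f\sqrt h\Big)u_{mn}=0 ,$$
which the Liouville change of variable $X=\int_0^x\sqrt f\,dt$ together with the rescaling $u_{mn}=(fh)^{-1/4}w_{mn}$ brings to the normal form $-w_{mn}''+Q_{mn}w_{mn}=0$ on $[0,A]$, $A=\int_0^1\sqrt f\,dt$, with $Q_{mn}=\tfrac{m^2}{f}+\tfrac{n^2}{h}+\beta-\lambda^2$ and $\beta=\beta[f,h]$ independent of $(m,n,\lambda)$. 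Since $g=g(x)$, the partial DN map is block–diagonalised by the Fourier basis, so it suffices to match the scalar multipliers on each mode; equality of these multipliers then yields equality of $\Lambda_{g,\Gamma_D,\Gamma_N}(\lambda^2)$ for \emph{every} choice of nonempty open $\Gamma_D,\Gamma_N$ in the relevant components. Writing $\theta_{mn},\phi_{mn}$ for the normal–form solutions with $(\theta_{mn},\theta_{mn}')(0)=(0,1)$ and $(\phi_{mn},\phi_{mn}')(0)=(1,0)$, one computes that the $(m,n)$–multiplier is proportional to $\phi_{mn}(A)/\theta_{mn}(A)$ (a diagonal Weyl function) when $\Gamma_D,\Gamma_N$ lie in the same component, and to $1/\theta_{mn}(A)$ times an explicit boundary factor involving $f(0),f(1),h(0),h(1)$ when they lie in disjoint components; comparing these boundary factors for $g$ and $\tilde g=cg$, using $\nu_{cg}=c^{-1/2}\nu_g$ on $\partial M$, is exactly what forces the four normalisations $c(0)=1$, $c(1)=1$, $c(1)^3=c(0)$, $c(0)^3=c(1)$ of the statement.

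Next I would invoke the conformal scaling law (\ref{ConformalScaling})--(\ref{q}): with $w=c^{1/4}u$ one has $-\Delta_{cg}u=\lambda^2u$ iff $(-\Delta_g+\tilde V)w=\lambda^2w$, where $\tilde V=-q-\lambda^2(c-1)$ and $q=c^{-1/4}\Delta_g c^{1/4}$; by the boundary bookkeeping for $c^{1/4}$ used in the proof of (\ref{Inv-Conf-V}), once $c$ meets the normalisations above, the identity $\Lambda_{cg,\Gamma_D,\Gamma_N}(\lambda^2)=\Lambda_{\tilde g,\Gamma_D,\Gamma_N}(\lambda^2)$ reduces to a non–uniqueness statement for problem \textbf{(Q3)} with $x$–dependent potentials — the three–dimensional analogue of the surface result underlying Theorem \ref{MainThm-2D}. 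In the disjoint–component case at $\lambda^2=0$ this is completely explicit: requiring $q\equiv0$ forces $c^{1/4}(x)=b+a\int_0^x\frac{dt}{\sqrt{h(t)}}$, the general $g$–harmonic function of $x$, whence $\tilde V\equiv0$; then $c(0)=b^4$, $c(1)=\big(b+a\int_0^1 dt/\sqrt h\big)^4$, and imposing $c(1)^3=c(0)$ (resp. $c(0)^3=c(1)$) leaves precisely a one–parameter family $\tilde g_a=c_ag$ with $c_a$ smooth, positive, strictly monotone and lying entirely on one side of $1$ — this is part 2 and the $\lambda^2=0$ instance of part 1.

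For general $\lambda^2$ I would then carry out the one–dimensional construction. In the disjoint case the $(m,n)$–multiplier sees the normal form only through $\theta_{mn}(A)$, so one needs $c>0$, smooth, strictly monotone and with the prescribed boundary values such that the potentials $Q_{mn}$ and $\widetilde Q_{mn}$ attached to $g$ and $cg$ produce the same characteristic values $\theta_{mn}(A)$, up to one mode–independent constant, for every $(m,n)\in\Z^2$; this is supplied by the classical, explicit description of the infinite–dimensional isospectral manifolds of one–dimensional Schr\"odinger operators, and gives the explicit families alluded to after Theorem \ref{MainThm-2D}. In the same–component case the relevant datum is instead the full diagonal Weyl function $\phi_{mn}(A)/\theta_{mn}(A)$ (up to a constant Robin shift coming from $\partial_\nu c$ on $\Gamma_N$), and the extra freedom needed is obtained by tuning both $f$ and $h$, which is why part 1 produces pairs $(g,\tilde g)$ rather than prescribing $g$. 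Finally, that each $(M,cg)$ so obtained is genuinely non–isometric to $(M,g)$ relative to $\Gamma_D\cup\Gamma_N$ follows as in the two–dimensional discussion: a boundary–fixing diffeomorphism with $\phi^*g=cg$ would restrict to isometries of the flat tori $\Gamma_0,\Gamma_1$, hence — being the identity on $\Gamma_D\cup\Gamma_N$ and with $c(0),c(1)$ as above — to the identity on all of $\partial M$, so $\phi=Id$ by \cite{Li}, forcing $c\equiv1$, a contradiction.

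The main obstacle is this last step for $\lambda^2\neq0$: one must realise the required spectral deformation as an honest conformal factor $c(x)>0$ that is smooth, strictly monotone, meets the exact boundary conditions, and — the real point — matches the relevant Weyl or characteristic datum for \emph{all} modes $(m,n)\in\Z^2$ simultaneously. This is feasible only because of the rigid form $Q_{mn}=\tfrac{m^2}{f}+\tfrac{n^2}{h}+\beta-\lambda^2$ of the separated potentials and because the DN map probes only the lattice $(m,n)\in\Z^2$ at the single frequency $\lambda^2$; keeping track of how the normalising length $A=\int_0^1\sqrt f\,dt$ and the correction $\beta[f,h]$ themselves change under $c$ is the delicate bookkeeping that has to be controlled.
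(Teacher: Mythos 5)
Your setup is sound and matches the paper's: separation onto the Fourier modes $Y_{mn}$, reduction of each block of the DN map to $1$D boundary data (a Weyl function on the diagonal, a reciprocal characteristic function times an explicit boundary constant off the diagonal), the observation that the local Robin-type term is invisible when $\Gamma_D\cap\Gamma_N=\emptyset$, and the derivation of the four normalisations of $c$ from the boundary prefactors. Your treatment of part 2 ($\lambda^2=0$ via $g$-harmonic conformal factors $c^{1/4}=b+a\int_0^x dt/\sqrt{h}$) is essentially the paper's Theorems \ref{MainThm-NonUniquenessZeroFrequency-R-3D} and \ref{MainThm-NonUniquenessZeroFrequency-T-3D}.

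The gap is in part 1 for $\lambda^2\neq 0$, exactly at the step you yourself flag as ``the main obstacle.'' You propose to match the characteristic data of $Q_{mn}$ and $\widetilde Q_{mn}$ for all $(m,n)\in\Z^2$ simultaneously by appealing to the P\"oschel--Trubowitz isospectral manifolds. That machinery deforms \emph{one} potential within its Dirichlet isospectral class; here you have a one-parameter family of potentials indexed by $n$ (with $m^2$ playing the role of the spectral parameter), and the deformations for different $n$ must all be induced by a \emph{single} conformal factor $c(x)$. Nothing in the isospectral theory synchronises these deformations, and in fact the paper's Lemma \ref{Uni-Pot-q} shows the situation is far more rigid: already $q_{\lambda n}=\tilde q_{\lambda n}$ for \emph{two} values of $n$ forces $\tilde f=c^4f$, $\tilde h=c^4h$ with $c$ solving the nonlinear ODE $c''+\tfrac12(\log h)'c'+\lambda^2 f(c-c^5)=0$. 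The paper's actual construction bypasses isospectrality entirely: it fixes $f$ and a strictly monotone $c$ and \emph{defines} $h=C\exp\bigl(-2\int_0^x\frac{c''+\lambda^2 f(c-c^5)}{c'}\,ds\bigr)$, so that the potentials $q_{\lambda n}$ and $\tilde q_{\lambda n}$ are \emph{identical} for every $n$ and all spectral data trivially coincide; only the boundary prefactors remain, which the normalisations on $c$ handle. This is also where the strict monotonicity of $c$ (which your proposal carries along but never uses) and the phrase ``pairs of manifolds'' in the statement come from: $h$ is obtained by dividing by $c'$, so $c'$ must not vanish, and $g$ itself must have $h$ of this special form --- one cannot deform an arbitrary prescribed $g$ when $\lambda^2\neq0$. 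Relatedly, your reduction to \textbf{(Q3)} via (\ref{Inv-Conf-V}) is not directly available here, since in the disjoint-component case $c$ is not $1$ on $\Gamma_D\cup\Gamma_N$ and in the same-component case $\partial_\nu c\neq0$ on $\Gamma_N$; the paper works instead with the explicit mode-by-mode formula (\ref{DN-Partiel-5}).
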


As in the $2$ dimensional case, the only gauge invariance for the above partial anisotropic Calderon problems \textbf{(Q1)} and \textbf{(Q2)} with data measured on disjoint sets is a priori (\ref{Inv-Diff}). Assume that there exists a diffeomorphism $\phi: \, M \longrightarrow M$ such that $\phi_{|\Gamma_D \cup \Gamma_N} = Id$ and $\phi^* g = \tilde{g} = cg$. In particular, since $\phi$ is a diffeomorphism, we see that $Vol_g(M) = Vol_{\phi^*g}(M) = Vol_{cg}(M)$. Hence we must have
$$
  \int_M \sqrt{|g|} dx dy dz = \int_M c^{3/2} \sqrt{|g|} dx dy dz.
$$
But in any case, we know that $c > 1$ or $c < 1$ on $(0,1)$. Hence the above equality is impossible. We conclude that our family of metrics is not captured by this gauge invariance and thus provides counterexamples to uniqueness.

We emphasize that we could extend the results of Theorem \ref{MainThm-3D} to higher dimensional Riemannian manifolds. Consider for instance $n$ dimensional Riemannian manifolds $(M,g)$ which have the topology of a toric cylinder $M = [0,1] \times T^{n-1}$ and that are equipped with a metric
\begin{equation} \label{Metric-nD-Intro}
  g = f(x) dx^2 + f(x) dy_1^2 + h_2(x) dy_2^2 + \dots + h_{n-1}(x) dy_{n-1}^2,
\end{equation}
where $f, h_2, \dots, h_{n-1}$ are smooth positive functions on $M$. The analysis given for the 3 dimensional models extend in a straightforward way to such Riemannian manifolds and the results are basically the same.

We also solve the anisotropic Calderon problem \textbf{(Q3)} in the class of smooth compact Riemannian manifolds (\ref{Metric-3D-Intro}) for potentials $V \in L^\infty$ that only depend on the variable $x$. Contrary to Theorem \ref{MainThm-3D}, we show
in Theorems \ref{MainThm-Schrodinger-2D} and \ref{MainThm-Schrodinger-3D} that if the Dirichlet and Neumann data $\Gamma_D$ and $\Gamma_N$ belong to the same connected component of $\partial M$ and if $	\Lambda_{g,V,\Gamma_D,\Gamma_N}(\lambda^2) = \Lambda_{g,\tilde{V},\Gamma_D,\Gamma_N}(\lambda^2)$, (and a technical asumption on $\Gamma_N$  in Theorem \ref{MainThm-Schrodinger-3D}) , then
$$
	V = \tilde{V}.
$$
The anisotropic Calderon problems \textbf{(Q2)} and \textbf{(Q3)} are thus not equivalent in our 3 dimensional models. Moreover, if the Dirichlet and Neumann data $\Gamma_D$ and $\Gamma_N$ does not belong to the same connected component of $\partial M$, we show there exists an explicit infinite dimensional family of potentials $\tilde{V}$ that satisfies $\Lambda_{g,V,\Gamma_D,\Gamma_N}(\lambda^2) = \Lambda_{g,\tilde{V},\Gamma_D,\Gamma_N}(\lambda^2)$.

The remainder of this paper is devoted to the proofs of Theorems \ref{MainThm-2D} and \ref{MainThm-3D}. In Section \ref{Preliminary}, we collect some results on the inverse spectral problem of one dimensional Schr\"odinger operators. In particular, we define the notion of characteristic and Weyl-Titchmarsh functions associated with one dimensional equation Schr\"odinger equations with certain boundary conditions and state a version of the Borg-Marchenko Theorem that will be needed for our proof. We refer for instance to \cite{Be, ET, GS, KST, Lev, Te} for a detailed account of these results.

The interest in one dimensional inverse spectral theory comes from the presence of symmetries in our models. These symmetries allow to decompose the solution of the Dirichlet problem and the DN map onto the Fourier modes $\{e^{imy}\}_{m \in \Z}$ in 2D and the Fourier modes $\{e^{imy} e^{inz}\}_{m,n \in \Z}$ in 3D. Hence we are able to reduce our initial problem to an infinite number of 1D inverse spectral problems for which the results recalled in Section \ref{Preliminary} will be useful.

In Section \ref{2D}, we describe the $2$ dimensional models and construct first the global and partial DN maps that we aim to study as well as their restrictions onto the Fourier modes $\{e^{imy}\}_{m \in \Z}$. Using essentially the Complex Angular Momentum Method (see \cite{Re, Ra, DKN1, DN3, DN4, DGN}) that consists in allowing the angular momentum $m$ to be complex, we solve in Theorem \ref{MainThm-R-2D} an anisotropic Calderon problem with Dirichlet and Neumann data measured on some open subsets $\Gamma_D$ and $\Gamma_N$ of $\partial M$. We continue with the proof of our main Theorem \ref{MainThm-2D} in two dimensions. We finish this Section solving the anisotropic Calderon problem \textbf{(Q3)} in Theorem \ref{MainThm-Schrodinger-2D} and giving new counterexamples to uniqueness in this setting.

In Section \ref{3D}, we perform a similar analysis for the $3$ dimensional models. We construct first the global and partial DN maps that we aim to study as well as their restrictions onto the Fourier modes $\{e^{imy + inz}\}_{m,n \in \Z}$. In Theorem \ref{MainThm-R-3D}, we solve the anisotropic Calderon problem with Dirichlet and Neumann data measured on some overlaping open subsets $\Gamma_D$ or $\Gamma_N$ belonging to the same connected component of $\partial M$. Once again here, the main tool is the Complex Angular Momentum method that makes possible to complexify the angular momenta $m$ and $n$. We then proceed to prove our second main Theorem \ref{MainThm-3D} in three dimensions. We finish this Section showing that the anisotropic Calderon problem \textbf{(Q3)} has a unique solution in the case where the Dirichlet and Neumann data belong to the same connected component of the boundary. This is done in Theorem \ref{MainThm-Schrodinger-3D}.


\Section{Preliminary results on 1D inverse spectral problems} \label{Preliminary}

We consider the class of ODE on the interval $[0,1]$ given by
\begin{equation} \label{Equation}
  -v'' + q(x) v = - \mu^2 v,  \quad q \in L^2([0,1]), \quad q \ \textrm{real},
\end{equation}
with Dirichlet boundary conditions
\begin{equation} \label{BC}
  v(0) = 0, \quad v(1) = 0.
\end{equation}

Since the potential $q$ belongs to $L^2([0,1]) \subset L^1([0,1])$, we can define for all $\mu \in \C$ two fundamental systems of solutions (FSS)
$$
  \{ c_0(x,\mu^2), s_0(x,\mu^2)\}, \quad \{ c_1(x,\mu^2), s_1(x,\mu^2)\},
$$
of (\ref{Equation}) by imposing the Cauchy conditions
\begin{equation} \label{FSS}
  \left\{ \begin{array}{cccc} c_0(0,\mu^2) = 1, & c_0'(0,\mu^2) = 0, & s_0(0,\mu^2) = 0, & s_0'(0,\mu^2) = 1, \\
 	                  c_1(1,\mu^2) = 1, & c'_1(1,\mu^2) = 0, & s_1(1,\mu^2) = 0, & s'_1(1,\mu^2) = 1. \end{array} \right.
\end{equation}
It follows from (\ref{FSS}) that
\begin{equation} \label{Wronskian-FSS}
  W(c_0, s_0) = 1, \quad W(c_1, s_1) = 1, \quad \forall \mu \in \C,
\end{equation}
where $W(u,v) = uv' - u'v$ is the Wronskian of $u,v$. Moreover, the FSS $\{ c_0(x,\mu^2), s_0(x,\mu^2)\}$ and $\{ c_1(x,\mu^2), s_1(x,\mu^2)\}$ are entire functions with respect to the variable $\mu^2 \in \C$ by standard properties of ODEs depending analytically on a parameter.

We then define the characteristic function of (\ref{Equation}) with boundary conditions (\ref{BC}) by
\begin{equation} \label{Char}
  \Delta(\mu^2) = W(s_0, s_1).
\end{equation}
The characteristic function $\mu^2 \mapsto \Delta(\mu^2)$ is entire with respect to $\mu^2$ and its zeros $(\alpha^2_{k})_{k \geq 1}$ correspond to "minus" the eigenvalues of the selfadjoint operator $-\frac{d^2}{dx^2} + q$ with Dirichlet boundary conditions. The eigenvalues $\alpha_k^2$ are thus real and satisfy $-\infty < \dots < \alpha^2_2 < \alpha_1^2 < \infty$.

We next define two Weyl-Titchmarsh functions by the following classical prescriptions. Let the Weyl solutions $\Psi$ and $\Phi$ be the unique solutions of (\ref{Equation}) having the form
\begin{equation} \label{WeylFunction}
  \Psi(x,\mu^2) = c_0(x,\mu^2) + M(\mu^2) s_0(x,\mu^2), \quad \Phi(x,\mu^2) = c_1(x,\mu^2) - N(\mu^2) s_1(x,\mu^2),
\end{equation}
which satisfy the Dirichlet boundary condition at $x = 1$ and $x=0$ respectively. Then a short calculation using (\ref{FSS}) shows that the Weyl-Titchmarsh functions $M(\mu^2)$ and $N(\mu^2)$ are uniquely defined by
\begin{equation} \label{WT}
  M(\mu^2) = - \frac{W(c_0, s_1)}{\Delta(\mu^2)}, \quad N(\mu^2) = -\frac{W(c_1, s_0)}{\Delta(\mu^2)}.
\end{equation}
For later use, we introduce the functions $D(\mu^2) = W(c_0, s_1)$ and $E(\mu^2) = W(c_1, s_0)$ which also turn out to be entire functions of $\mu^2$. We then have the following notation for the Weyl-Titchmarsh functions
$$
  M(\mu^2) = - \frac{D(\mu^2)}{\Delta(\mu^2)}, \quad N(\mu^2) = - \frac{E(\mu^2)}{\Delta(\mu^2)}.
$$

We now collect some analytic results involving the functions $\Delta(\mu^2), D(\mu^2), E(\mu^2), M(\mu^2), N(\mu^2)$ and give a version of the Borg-Marchenko Theorem we shall need later.

\begin{lemma} \label{Asymp}
  The FSS $\{ c_0(x,\mu^2), s_0(x,\mu^2)\}$ and $\{ c_1(x,\mu^2), s_1(x,\mu^2)\}$ have the following asymptotics uniformly with respect to $x \in [0,1]$ as $|\mu| \to \infty$ in the complex plane $\C$.
	\begin{equation} \label{Asymp-0}
	  \left\{ \begin{array}{ccc}
		  c_0(x,\mu^2) & = & \cosh(\mu x) + O \left( \frac{e^{|\Re(\mu)| x}}{\mu} \right), \\
			c'_0(x,\mu^2) & = & \mu \sinh(\mu x) + O \left( e^{|\Re(\mu)| x} \right), \\
			s_0(x,\mu^2) & = & \frac{\sinh(\mu x)}{\mu} + O \left( \frac{e^{|\Re(\mu)| x}}{\mu^2} \right), \\
			s'_0(x,\mu^2) & = & \cosh(\mu x) + O \left( \frac{e^{|\Re(\mu)| x}}{\mu} \right),
		\end{array} \right.
	\end{equation}
	and
	\begin{equation} \label{Asymp-1}
	  \left\{ \begin{array}{ccc}
		  c_1(x,\mu^2) & = & \cosh(\mu (1-x)) + O \left( \frac{e^{|\Re(\mu)| (1-x)}}{\mu} \right), \\
			c'_1(x,\mu^2) & = & -\mu \sinh(\mu (1-x)) + O \left( e^{|\Re(\mu)| (1-x)} \right), \\
			s_1(x,\mu^2) & = & -\frac{\sinh(\mu (1-x))}{\mu} + O \left( \frac{e^{|\Re(\mu)| (1-x)}}{\mu^2} \right), \\
			s'_1(x,\mu^2) & = & \cosh(\mu (1-x)) + O \left( \frac{e^{|\Re(\mu)| (1-x)}}{\mu} \right).
		\end{array} \right.
	\end{equation}
\end{lemma}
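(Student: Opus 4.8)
The plan is to derive the asymptotics from the standard Volterra integral equation representation of the fundamental systems of solutions, treating the potential term $q(x)v$ as a perturbation of the free equation $-v'' = -\mu^2 v$. First I would write down, for the FSS based at $x=0$, the integral equations
\begin{equation*}
  c_0(x,\mu^2) = \cosh(\mu x) + \int_0^x \frac{\sinh(\mu(x-t))}{\mu}\, q(t)\, c_0(t,\mu^2)\, dt,
\end{equation*}
and similarly
\begin{equation*}
  s_0(x,\mu^2) = \frac{\sinh(\mu x)}{\mu} + \int_0^x \frac{\sinh(\mu(x-t))}{\mu}\, q(t)\, s_0(t,\mu^2)\, dt,
\end{equation*}
which hold because the right-hand sides solve \eqref{Equation} with the Cauchy data \eqref{FSS}; the analogous equations based at $x=1$ govern $c_1, s_1$ with the free solutions $\cosh(\mu(1-x))$ and $-\sinh(\mu(1-x))/\mu$. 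The elementary bound $|\sinh(\mu(x-t))| \le e^{|\Re(\mu)|(x-t)} \le e^{|\Re(\mu)|x}$ for $0 \le t \le x$ is the only inequality needed on the kernel.

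Next I would set up the Picard iteration for these Volterra equations. Writing $c_0 = \sum_{k\ge 0} c_0^{(k)}$ with $c_0^{(0)}(x,\mu^2) = \cosh(\mu x)$ and $c_0^{(k)}$ obtained by applying the integral operator to $c_0^{(k-1)}$, one shows by induction that
\begin{equation*}
  |c_0^{(k)}(x,\mu^2)| \le e^{|\Re(\mu)| x}\, \frac{1}{k!}\left( \frac{1}{|\mu|}\int_0^x |q(t)|\, dt \right)^{\! k},
\end{equation*}
so the series converges absolutely and uniformly for $x \in [0,1]$ and $|\mu|$ bounded away from $0$, with $c_0 - \cosh(\mu x)$ bounded by $e^{|\Re(\mu)| x}(e^{\|q\|_{L^1}/|\mu|} - 1) = O(e^{|\Re(\mu)|x}/\mu)$. (Here $\|q\|_{L^1} < \infty$ since $q \in L^2([0,1]) \subset L^1([0,1])$, as noted in the excerpt.) The estimate for $s_0$ is identical except that the leading term $\sinh(\mu x)/\mu$ already carries an extra factor $1/\mu$, which propagates through the iteration and yields the $O(e^{|\Re(\mu)|x}/\mu^2)$ remainder. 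The derivative asymptotics follow by differentiating the integral equations in $x$: the boundary term of the differentiated kernel contributes $\cosh(\mu(x-t))q(t)$, and plugging in the already-established sup-norm bounds on $c_0, s_0$ gives the stated $O(e^{|\Re(\mu)|x})$ and $O(e^{|\Re(\mu)|x}/\mu)$ errors for $c_0'$ and $s_0'$ respectively. The estimates \eqref{Asymp-1} for $\{c_1, s_1\}$ are obtained verbatim after the change of variable $x \mapsto 1-x$, which turns the Cauchy problem at $x=1$ into one of the same form at $x=0$.

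I do not expect a genuine obstacle here — this is the classical transmutation/iteration argument (see e.g.\ the references \cite{Lev, Te} cited in the excerpt). The only point requiring a little care is keeping the $\mu$-dependence of the remainders uniform in $x$ and correctly tracking the extra power of $1/\mu$ for $s_0, s_1$ versus $c_0, c_1$; the bookkeeping is cleanest if one estimates $\mu\, s_0$ and $c_0$ simultaneously so that both satisfy the same Gronwall-type inequality. Uniformity down to the real axis is not needed, only $|\mu| \to \infty$, so no subtlety arises from small $|\mu|$.
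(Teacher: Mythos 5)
Your argument is correct and is precisely the classical integral-equation/Picard-iteration proof that the paper relies on: the paper's own ``proof'' is just a citation to P\"oschel--Trubowitz (Theorem 3, p.~13), where these asymptotics are established by exactly the Volterra representation and iterated estimates you write down. Your bookkeeping of the extra factor of $1/\mu$ for $s_0,s_1$, the differentiated kernel for the derivative estimates, and the reflection $x \mapsto 1-x$ for the system based at $x=1$ are all as in the standard reference, so there is nothing to add.
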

\begin{proof}
  These asymptotics are classical and can be found in \cite{PT}, Theorem 3, p. 13.
\end{proof}

\begin{coro} \label{OrderHalf}
  1. For each fixed $x \in [0,1]$, the FSS $\{ c_0(x,\mu^2), s_0(x,\mu^2)\}$ and $\{ c_1(x,\mu^2), s_1(x,\mu^2)\}$ are entire functions of order $\frac{1}{2}$ with respect to the variable $\mu^2$. \\
	2. The characteristic function $\Delta(\mu^2)$ and the functions $D(\mu^2)$ and $E(\mu^2)$ are entire functions of order $\frac{1}{2}$ with respect to the variable $\mu^2$. \\
	3. We have the following asymptotics in the complex plane $\C$:
	$$
	  \left\{ \begin{array}{c} \Delta(\mu^2) = \frac{\sinh{\mu}}{\mu} + O \left( \frac{e^{|\Re(\mu)|}}{\mu^2} \right), \\
		D(\mu^2) = \cosh (\mu) +  O \left( \frac{e^{|\Re(\mu)|}}{\mu} \right), \quad E(\mu^2) = \cosh (\mu) +  O \left( \frac{e^{|\Re(\mu)|}}{\mu} \right).
		\end{array} \right.
	$$
In particular, $M(\mu^2) = \mp \mu + O(1), \quad N(\mu^2) = \mp \mu + O(1)$ when $\mu \to \pm \infty,$ $\mu \in \R$.
\end{coro}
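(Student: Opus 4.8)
The plan is to read everything off from the large-$|\mu|$ asymptotics of Lemma \ref{Asymp}, using one elementary structural remark: since equation (\ref{Equation}) has no first order term, the Wronskian $W(u,v)$ of any two of its solutions is independent of $x$, so each of $\Delta,D,E$ may be evaluated at whichever endpoint is convenient. For item 1, fix $x\in[0,1]$. Lemma \ref{Asymp} gives $|c_0(x,\mu^2)|,|s_0(x,\mu^2)|,|c_1(x,\mu^2)|,|s_1(x,\mu^2)|\le C e^{|\Re(\mu)|}\le C e^{|\mu|}$ for $|\mu|$ large, with $C$ depending only on $\|q\|$. Setting $z=\mu^2$, so $|\mu|=|z|^{1/2}$, each of these is an entire function of $z$ satisfying $|F(z)|\le Ce^{|z|^{1/2}}$, hence of order $\le\frac12$. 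To see the order is exactly $\frac12$ one lets $\mu=t\to+\infty$ along the real axis: the leading term of $c_0$ (resp.\ $s_0$) is $\cosh(tx)\sim\frac12 e^{tx}$ (resp.\ $\sinh(tx)/t$), so for $x\in(0,1]$ we get $\log|F(t^2)|\sim x\sqrt z$, incompatible with order $<\frac12$; for $c_1,s_1$ the same argument works with $x$ replaced by $1-x$ (so with $x\in[0,1)$), and the remaining endpoint values are constants or the zero function, which one disposes of directly.

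For items 2 and 3, I would compute each Wronskian at the endpoint where the Cauchy data (\ref{FSS}) trivialize it: evaluating $\Delta=W(s_0,s_1)$ at $x=1$ gives $\Delta(\mu^2)=s_0(1,\mu^2)$ (since $s_1(1,\mu^2)=0$, $s_1'(1,\mu^2)=1$); evaluating $D=W(c_0,s_1)$ at $x=1$ gives $D(\mu^2)=c_0(1,\mu^2)$; and evaluating $E=W(c_1,s_0)$ at $x=0$ gives $E(\mu^2)=c_1(0,\mu^2)$ (since $s_0(0,\mu^2)=0$, $s_0'(0,\mu^2)=1$). Thus $\Delta,D,E$ are, respectively, $s_0(1,\cdot),c_0(1,\cdot),c_1(0,\cdot)$, which are entire in $\mu^2$ of order $\frac12$ by item 1, and their asymptotics come from Lemma \ref{Asymp} (put $x=1$ in (\ref{Asymp-0}) and $x=0$ in (\ref{Asymp-1})), namely $\Delta(\mu^2)=\frac{\sinh\mu}{\mu}+O(e^{|\Re(\mu)|}/\mu^2)$ and $D(\mu^2)=E(\mu^2)=\cosh\mu+O(e^{|\Re(\mu)|}/\mu)$, which is the asserted list.

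Finally, for the statement about $M=-D/\Delta$ and $N=-E/\Delta$, I restrict to real $\mu\to\pm\infty$ and substitute the above expansions: clearing the $1/\mu$ in $\Delta$, one gets $M(\mu^2)=N(\mu^2)=-\mu\coth\mu\,(1+O(1/\mu))$, and since $\coth\mu=\pm1+O(e^{-2|\mu|})$ as $\mu\to\pm\infty$ this equals $\mp\mu+O(1)$. I do not expect a genuine obstacle anywhere in this corollary; the only point that needs a moment's care is the remainder bookkeeping in this last step — dividing a quantity of size $e^{|\mu|}$ by one of size $e^{|\mu|}/|\mu|$ is precisely what produces the linear term $\mp\mu$, and the relative $O(1/\mu)$ errors then contribute exactly $O(1)$ — together with noting that this asymptotic is claimed only along $\R$, where $\Delta$ has no zeros for $|\mu|$ large so that $M,N$ are honestly defined there.
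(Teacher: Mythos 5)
Your proposal is correct and follows exactly the route the paper intends: the paper's own proof is a one-line appeal to (\ref{Char}), (\ref{WT}) and Lemma \ref{Asymp}, and your endpoint evaluations $\Delta(\mu^2)=s_0(1,\mu^2)$, $D(\mu^2)=c_0(1,\mu^2)$, $E(\mu^2)=c_1(0,\mu^2)$ (via constancy of the Wronskian) are precisely the identities the authors record later in (\ref{formule}). The details you supply, including the remainder bookkeeping for $M$ and $N$ and the remark about the degenerate endpoint cases in item 1, are accurate.
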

\begin{proof}
  The proof of 1., 2. and 3. follows directly from (\ref{Char}), (\ref{WT}) and Lemma \ref{Asymp}.
\end{proof}

\begin{coro} \label{Hadamard}
  The characteristic function $\Delta(\mu^2)$ and the functions $D(\mu^2)$ and $E(\mu^2)$ can be written as
	\begin{equation} \label{HadFacto}
	  \begin{array}{ccc}
		  \Delta(\mu^2) & = & A \mu^{2p} \prod_{k = 1}^\infty \left( 1 - \frac{\mu^2}{\alpha_{k}^2} \right), \\
			D(\mu^2) & = & B \mu^{2q} \prod_{k = 1}^\infty \left( 1 - \frac{\mu^2}{\beta_{k}^2} \right), \\
			E(\mu^2) & = & C \mu^{2r} \prod_{k = 1}^\infty \left( 1 - \frac{\mu^2}{\gamma_{k}^2} \right),
		\end{array}
	\end{equation}
	where $(\alpha_{k}^2)_{k \geq 1}$, $(\beta_{k}^2)_{k \geq 1}$ and $(\gamma_{k}^2)_{k \geq 1}$ are the zeros of the entire functions $\Delta(\mu^2)$, $D(\mu^2)$ and $E(\mu^2)$ respectively, $A,B,C$ are constants and $p,q,r$ are the multiplicities of the root at the origin.
\end{coro}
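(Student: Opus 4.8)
The plan is to read off the factorization (\ref{HadFacto}) from the Hadamard factorization theorem applied in the variable $z = \mu^2$.

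First I would recall the relevant special case of Hadamard's theorem: if $F$ is an entire function of a complex variable $z$, of order $\rho < 1$, having a zero of multiplicity $m \geq 0$ at the origin and nonzero zeros $(z_k)_{k \geq 1}$ repeated according to multiplicity, then $\sum_{k \geq 1} |z_k|^{-1} < \infty$, the canonical product $\prod_{k \geq 1}(1 - z/z_k)$ converges locally uniformly on $\C$, and there is a nonzero constant $A$ such that $F(z) = A\, z^m \prod_{k \geq 1}(1 - z/z_k)$. This is the genus-zero instance of Hadamard's theorem: the exponent polynomial appearing in the general statement has degree at most $\rho < 1$ and is therefore constant, while the Weierstrass elementary factors reduce to $1 - z/z_k$.

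Next I would apply this to $\Delta(\mu^2)$, $D(\mu^2)$ and $E(\mu^2)$, each regarded as an entire function of the single variable $z = \mu^2 \in \C$. By parts 1 and 2 of Corollary \ref{OrderHalf}, each of them is entire of order $\frac{1}{2} < 1$ in this variable. Denoting by $p$, $q$, $r$ the multiplicities of their zeros at the origin and by $(\alpha_k^2)_{k \geq 1}$, $(\beta_k^2)_{k \geq 1}$, $(\gamma_k^2)_{k \geq 1}$ their respective sequences of nonzero zeros, Hadamard's theorem then produces (\ref{HadFacto}) with nonzero constants $A$, $B$, $C$.

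There is no genuine obstacle here beyond bookkeeping; the one point that deserves care is that the order must be taken with respect to $\mu^2$ rather than $\mu$. With respect to $\mu$ these functions have order $1$ (they grow like $e^{|\Re(\mu)|}$), and the factorization would then carry a nontrivial exponential factor; it is precisely because the order in the variable $\mu^2$ drops to $\frac{1}{2}$ — the content of parts 1 and 2 of Corollary \ref{OrderHalf} — that the exponential factor collapses to the constants $A$, $B$, $C$ and the canonical products are of genus zero.
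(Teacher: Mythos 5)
Your proposal is correct and matches the paper's proof, which simply invokes Hadamard's factorization theorem for the entire functions $\Delta(\mu^2)$, $D(\mu^2)$, $E(\mu^2)$ of order $\frac{1}{2}$ established in Corollary \ref{OrderHalf}. Your additional remark that the order must be measured in the variable $\mu^2$ (so that the genus is zero and no exponential factor survives) is exactly the point the paper is relying on implicitly.
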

\begin{proof}
  This is a direct consequence of Hadamard's factorization Theorem (see \cite{Boa, Lev, Ru}) for the entire functions $\Delta(\mu^2)$, $D(\mu^2)$ and $E(\mu^2)$ of order $\frac{1}{2}$.
\end{proof}

We shall also need the Cartwright class of analytic functions.

\begin{defi} \label{Cartwright}
  We say that an entire function $f$ belongs to the Cartwright class $\mathcal{C}$ if $f$ is of exponential type (\textit{i.e.} $\forall z \in \C, \quad |f(z)| \leq C e^{A|z|}$ for some positive constants $A, C$) and satisfies
	$$
	  \int_{\R} \frac{\log^+(|f(iy)|)}{1+y^2} dy < \infty,
	$$
	where
	$$
	  \log^+(x) = \left\{ \begin{array}{cc} \log(x) & \log(x) \geq 0, \\ 0, & \log(x) < 0. \end{array} \right.
	$$
\end{defi}

\begin{rem} \label{Nevanlinna}
Note that if we restrict the Cartwright class $\mathcal{C}$ to functions analytic on the half plane $\C^+ = \{ z \in \C, \ \Re(z) \geq 0 \}$, then it reduces to the Nevanlinna class $N^+(\C^+)$ used in \cite{DN3, DGN} (see \cite{Lev}, Remark, p 116).
\end{rem}

It is well known that the zeros of entire functions in the Cartwright class $\mathcal{C}$ have a certain distribution in the complex plane (see \cite{Lev}, Theorem 1, p 127). As a consequence, the following uniqueness Theorem holds.

\begin{prop} \label{UniquenessC}
  Let $f \in \mathcal{C}$ and $(z_n)_n$, $z_n \not=0$, be the zeros  of $f$. If there exists a subset $\mathcal{L} \subset \N$ such that
	$$
	  \sum_{n \in \mathcal{L}} \frac{\Re(z_n)}{|z_n|^2} = \infty,
	$$
	then $f = 0$ on $\C$. In particular, if the function $f\in \mathcal{C}$ vanishes on the set of integers $\N$, then $f$ is identically $0$ on $\mathcal{C}$.
\end{prop}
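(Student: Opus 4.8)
The plan is to reduce the statement to the known distribution of zeros of Cartwright-class functions. First I would dispose of the trivial case $f \equiv 0$ and assume henceforth $f \not\equiv 0$, so that the zeros $(z_n)_n$ form a discrete set and the hypothesis is meaningful. The one substantial ingredient is Levin's theorem on the zeros of functions in $\mathcal{C}$ (\cite{Lev}, Theorem 1, p.~127): with the normalization of Definition \ref{Cartwright}, in which the logarithmic integral is taken along the imaginary axis, it yields the convergence
\begin{equation} \label{ZeroSum}
  \sum_{n} \frac{|\Re(z_n)|}{|z_n|^2} < \infty,
\end{equation}
the sum running over all zeros (all nonzero by hypothesis; a zero at the origin, if present, is simply omitted). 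Alternatively, and this is the route I would spell out in detail, one invokes Remark \ref{Nevanlinna}: the restriction of $f$ to the right half-plane $\C^+$ lies in the Nevanlinna class $N^+(\C^+)$, so --- $f$ being nonzero --- its zeros there obey the half-plane Blaschke condition $\sum_{\Re(z_n)\geq 0} \Re(z_n)/(1+|z_n|^2) < \infty$; applying the same argument to $z \mapsto f(-z)$ controls the zeros with $\Re(z_n) \leq 0$, and since $\Re(z_n)/(1+|z_n|^2)$ and $\Re(z_n)/|z_n|^2$ differ by a bounded factor away from the origin, the two bounds combine into \eqref{ZeroSum}.

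Granting \eqref{ZeroSum}, the first conclusion is a one-line comparison of nonnegative series. If there were a subset $\mathcal{L} \subset \N$ with $\sum_{n \in \mathcal{L}} \Re(z_n)/|z_n|^2 = \infty$, then, using $\Re(z_n) \leq |\Re(z_n)|$ termwise,
$$
  \sum_{n} \frac{|\Re(z_n)|}{|z_n|^2} \;\geq\; \sum_{n \in \mathcal{L}} \frac{|\Re(z_n)|}{|z_n|^2} \;\geq\; \sum_{n \in \mathcal{L}} \frac{\Re(z_n)}{|z_n|^2} \;=\; \infty,
$$
contradicting \eqref{ZeroSum}; hence $f \equiv 0$ on $\C$.

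For the concluding assertion I would simply specialize: if $f \in \mathcal{C}$ vanishes on $\N = \{1,2,3,\dots\}$, then the integers $1,2,3,\dots$ are among its zeros, and for the corresponding index set $\mathcal{L}$ one has $\sum_{n \in \mathcal{L}} \Re(z_n)/|z_n|^2 = \sum_{k \geq 1} k/k^2 = \sum_{k \geq 1} 1/k = \infty$, so the criterion just established forces $f \equiv 0$. The only point requiring care --- hence the ``hard part'' --- is the correct statement and invocation of the zero-distribution estimate \eqref{ZeroSum}: matching the normalization of Definition \ref{Cartwright} (integrability on the imaginary axis, not the real axis) with the form of Levin's theorem, and disposing of the harmless zero at the origin; everything after that is elementary.
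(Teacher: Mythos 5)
Your proposal is correct and follows essentially the same route as the paper, which in fact gives no written proof at all: it simply states the proposition as a consequence of Levin's theorem on the zero distribution of Cartwright-class functions (\cite{Lev}, Theorem 1, p.~127). You correctly adapt the normalization to the paper's Definition \ref{Cartwright} (logarithmic integral on the imaginary axis, hence $\sum_n |\Re(z_n)|/|z_n|^2 < \infty$ rather than the usual imaginary-part condition), and the remaining comparison of series and the specialization to $\N$ are exactly the intended elementary steps.
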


Thanks to Lemma \ref{Asymp}, we can apply these results to the entire functions
\begin{equation} \label{Smalld}
  \mu \mapsto \delta(\mu) := \Delta(\mu^2), \quad \mu \mapsto d(\mu) := D(\mu^2), \quad \mu \mapsto e(\mu) := E(\mu^2).
\end{equation}
Precisely, we can prove

\begin{coro} \label{Cart}
  The functions $\mu \mapsto \delta(\mu)$, $\mu \mapsto d(\mu)$ and $\mu \mapsto e(\mu)$ belong to the Cartwright class $\mathcal{C}$.
\end{coro}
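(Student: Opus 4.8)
The plan is to verify directly the two defining properties of the Cartwright class $\mathcal{C}$ for each of the three entire functions $\delta$, $d$, $e$, using only the asymptotic expansions already recorded in Lemma \ref{Asymp} and Corollary \ref{OrderHalf}, together with the fact that these functions are entire.

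First I would check that $\delta$, $d$ and $e$ are of exponential type. By part 3 of Corollary \ref{OrderHalf}, there are constants $C, R > 0$ such that for $|\mu| \geq R$ one has $|\delta(\mu)| \leq C\, e^{|\Re(\mu)|} \leq C\, e^{|\mu|}$, and likewise $|d(\mu)|, |e(\mu)| \leq C\, e^{|\mu|}$. On the compact disk $\{|\mu| \leq R\}$ these functions, being entire, are bounded. Hence each of $\delta$, $d$, $e$ satisfies a global bound $|f(z)| \leq C' e^{|z|}$ on $\C$, so they are of exponential type (in fact of type at most $1$).

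Next I would establish the logarithmic integrability condition along the imaginary axis. The crucial observation is that for $\mu = iy$ with $y \in \R$ we have $\Re(\mu) = 0$, so $e^{|\Re(\mu)|} = 1$, while $\sinh(iy) = i\sin y$ and $\cosh(iy) = \cos y$. Substituting $\mu = iy$ into the asymptotics of Corollary \ref{OrderHalf} yields, as $|y| \to \infty$,
$$
  \delta(iy) = \frac{\sin y}{y} + O\!\left( \frac{1}{y^2} \right), \qquad d(iy) = \cos y + O\!\left( \frac{1}{y} \right), \qquad e(iy) = \cos y + O\!\left( \frac{1}{y} \right).
$$
In particular $y \mapsto \delta(iy)$, $y \mapsto d(iy)$ and $y \mapsto e(iy)$ are bounded on $\R$: they are continuous, hence bounded on compacts, and bounded near infinity by the displayed expansions. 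Consequently $\log^+|f(iy)|$ is a bounded function of $y$ for each $f \in \{\delta, d, e\}$, and since $\int_{\R} \frac{dy}{1+y^2} = \pi < \infty$, the integral $\int_{\R} \frac{\log^+(|f(iy)|)}{1+y^2}\, dy$ converges. Combining the two properties shows that $\delta$, $d$ and $e$ all belong to $\mathcal{C}$.

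There is essentially no serious obstacle here; the only point requiring a moment's care is that the $O$-terms in Lemma \ref{Asymp}, and hence in Corollary \ref{OrderHalf}, are uniform in the relevant variable, so that restricting the asymptotics to the imaginary axis $\mu = iy$ is legitimate. Everything else is an immediate substitution together with the elementary estimate $\int_{\R}(1+y^2)^{-1}\,dy < \infty$.
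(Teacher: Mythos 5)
Your proof is correct and follows the same route as the paper: both arguments deduce exponential type and boundedness on the imaginary axis from the asymptotics of Lemma \ref{Asymp} and Corollary \ref{OrderHalf}, which immediately gives membership in the Cartwright class. You have simply spelled out the details (the substitution $\mu = iy$ and the convergence of $\int_{\R}(1+y^2)^{-1}\,dy$) that the paper leaves implicit.
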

\begin{proof}
  Thanks to Lemma \ref{Asymp}, the functions $\mu \mapsto \delta(\mu)$, $\mu \mapsto d(\mu)$ and $\mu \mapsto e(\mu)$ are clearly of exponential type and are bounded on the imaginary axis $i\R$. Hence they belong to the Cartwright class $\mathcal{C}$.
\end{proof}

Let us finally recall here the celebrated Borg-Marchenko Theorem which roughly speaking states that the Weyl-Titchmarsh functions $M$ or $N$ of the differential expression $-\frac{d^2}{d x^2} + q$ with regular boundary condition at $x=0$ and $x=1$ and with a real-valued potential $q$ determines uniquely this potential. We refer for instance to \cite{Be, Bo1, Bo2, ET, FY, GS, KST, Ma, Te} for historic and recent developments on the theory of WT functions. We state here a version from \cite{ET} (see Corollary 4.3.) adapted to our particular problem.
\begin{thm}[Borg-Marchenko] \label{BM}
  Consider two boundary value problems (\ref{Equation}) with potentials $q$ and $\tilde{q}$ in $L^1([0,1])$. Let $M(\mu^2)$ and $\tilde{M}(\mu^2)$ the Weyl-Titchmarsh functions associated to (\ref{Equation}) using (\ref{WT}). Then, if
$$
  M(\mu^2) = \tilde{M}(\mu^2) + f(\mu^2), \quad \mu^2 \in \C \setminus \R,
$$	
for any $f$ entire function of growth order at most $\frac{1}{2}$ (here the equality is understood on the domains of holomorphy of both sides of the equality, that is in this case, on $\C$ minus the set of discrete eigenvalues of (\ref{Equation}) that lie on the real axis), then
$$
  q = \tilde{q}, \quad \textrm{on} \ [0,1].
$$
Conversely, if $q = \tilde{q}$ on $[0,1]$, then obviously $M = \tilde{M}$ on $\C \setminus \R$. \\
Of course, the same results hold with $M(\mu^2)$ and $\tilde{M}(\mu^2)$ replaced by $N(\mu^2)$ and $\tilde{N}(\mu^2)$.	
\end{thm}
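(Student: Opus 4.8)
The plan is to prove the two implications separately. The converse is immediate: if $q=\tilde q$ on $[0,1]$ then the fundamental systems normalized by (\ref{FSS}) coincide, hence $\Delta=\tilde\Delta$ and $D=\tilde D$, and so $M=\tilde M$ on $\C\setminus\R$ by (\ref{WT}) (and likewise $N=\tilde N$). For the forward implication, the only feature that distinguishes this statement from the classical Borg--Marchenko theorem is the additive entire function $f$ of growth order at most $\half$; so the proof splits into two stages: first show that $f$ must vanish identically, and then apply the classical fact that the Weyl--Titchmarsh function with fixed regular boundary conditions determines the potential.

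For the first stage, I would begin by noting that, since Wronskians of solutions of (\ref{Equation}) are $x$-independent, evaluation at $x=1$ in (\ref{Char}) and in the definition of $D$ gives $\Delta(\mu^2)=W(s_0,s_1)=s_0(1,\mu^2)$ and $D(\mu^2)=W(c_0,s_1)=c_0(1,\mu^2)$, so that $M(\mu^2)=-c_0(1,\mu^2)/s_0(1,\mu^2)$. Lemma \ref{Asymp} then yields, for any $\e\in(0,1)$, a lower bound $|s_0(1,\mu^2)|\ge c_\e\,e^{|\Re\mu|}/|\mu|$ uniformly on the set $\{\,|\Re\mu|\ge\e|\mu|,\ |\mu|\ge R_\e\,\}$, together with the expansion $M(\mu^2)=-\mu\coth\mu+O_\e(1)$ there --- the leading term $-\mu\coth\mu$ being \emph{independent of $q$}. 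Consequently $f(\mu^2)=M(\mu^2)-\tilde M(\mu^2)=O_\e(1)$ on that set, which in the variable $\nu=\mu^2$ is the sector $\{\,|\arg\nu|\le 2\arccos\e\,\}$; letting $\e$ shrink, its opening approaches $2\pi$. Fixing $\e$ so that the sector is, say, $\{\,|\arg\nu|\le\frac{3\pi}{4}\,\}$, we have $f$ bounded there, while on the complementary sector around the negative real axis (of opening $\frac{\pi}{2}$) $f$ is bounded on the two boundary rays and is entire of order $\le\half<2$, so Phragm\'en--Lindel\"of forces $f$ to be bounded on that sector as well. Hence $f$ is bounded on all of $\C$ and therefore constant by Liouville's theorem; and because $q,\tilde q$ are real, $M$ and $\tilde M$ are real on $\R$, so the constant is real. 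To identify it as $0$, I would sharpen the asymptotics of $c_0(1,\mu^2)$ and $s_0(1,\mu^2)$ by one further order (as in \cite{PT}), obtaining $M(\mu^2)=-\mu+o(1)$ as $\mu\to+\infty$ along $\R$, and the same for $\tilde M$; hence the constant equals $\lim_{\mu\to+\infty}(M(\mu^2)-\tilde M(\mu^2))=0$, i.e. $f\equiv 0$ and $M=\tilde M$ on $\C\setminus\R$.

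For the second stage, with $M=\tilde M$, I would invoke standard one-dimensional inverse spectral theory: by Corollary \ref{Hadamard} the poles of $M=-D/\Delta$ are exactly the $\alpha_k^2$ (minus the Dirichlet eigenvalues of $-\frac{d^2}{dx^2}+q$), and the residues of $M$ there encode the corresponding norming constants, so $M$ determines the full Dirichlet spectral data; the Gelfand--Levitan--Marchenko theory (transformation operators) then recovers $q\in L^1([0,1])$ uniquely --- equivalently, one may use Simon's $A$-amplitude representation $M(-\kappa^2)=-\kappa-\int_0^1A(\alpha)e^{-2\kappa\alpha}\,d\alpha+(\textrm{exponentially small})$, for which $M\mapsto A$ is injective and $q$ is read off near each endpoint, as in \cite{ET}. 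This gives $q=\tilde q$. The statement with $N,\tilde N$ in place of $M,\tilde M$ follows verbatim after the reflection $x\mapsto 1-x$, which exchanges $\{c_0,s_0\}\leftrightarrow\{c_1,s_1\}$, $D\leftrightarrow E$ and $M\leftrightarrow N$. I expect the main obstacle to be the first stage: extracting the $q$-independent leading term $-\mu\coth\mu$ of $M$ uniformly on sectors that exhaust the plane so that $M-\tilde M$ stays bounded, closing the residual thin sector via Phragm\'en--Lindel\"of using precisely the hypothesis that $f$ has order $\le\half$, and finally pinning down the Liouville constant --- this soft part is exactly where the order-$\half$ assumption is both needed and sufficient. (If one regards the classical implication $M=\tilde M\Rightarrow q=\tilde q$ as part of the proof, that inverse-spectral machinery is the heaviest ingredient, but it is entirely standard, cf. \cite{ET}.)
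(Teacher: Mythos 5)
The paper does not actually prove Theorem \ref{BM}: it is quoted as an adaptation of Corollary 4.3 of \cite{ET}, so there is no in-paper argument to compare against; your proposal supplies a self-contained proof, and it is correct. The decomposition into (i) showing the entire perturbation $f$ vanishes and (ii) invoking the classical uniqueness $M=\tilde M\Rightarrow q=\tilde q$ is the natural one. Stage (i) is sound: evaluating the $x$-independent Wronskians at $x=1$ gives $\Delta(\mu^2)=s_0(1,\mu^2)$ and $D(\mu^2)=c_0(1,\mu^2)$, the lower bound on $|s_0(1,\mu^2)|$ and the $q$-independent leading term $-\mu\coth\mu$ do hold on $\{|\Re\mu|\ge\e|\mu|,\ |\mu|\ge R_\e\}$ (note that only finitely many poles $\alpha_k^2$ are positive, so they are avoided for $|\mu|$ large, and $f$, being entire and bounded on the sector off the real axis, is bounded on the closed sector by continuity), and Phragm\'en--Lindel\"of on the residual opening-$\frac{\pi}{2}$ sector applies since $\half<2$. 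Two points are worth making explicit. First, the paper's Corollary \ref{OrderHalf} only gives $M(\mu^2)=-\mu+O(1)$, which does not identify the Liouville constant; you correctly flag that the second-order asymptotics of \cite{PT} are needed, and indeed the $\frac{1}{2}\int_0^1 q$ corrections to $c_0(1,\mu^2)$ and $s_0(1,\mu^2)$ cancel in the ratio, yielding $M(\mu^2)=-\mu+o(1)$ as $\mu\to+\infty$ and hence $f\equiv 0$. Second, stage (ii) is the classical Borg--Marchenko/Gel'fand--Levitan uniqueness (spectrum plus norming constants determine $q\in L^1$), which is precisely what the references \cite{Be,GS,Ma,ET} listed by the paper cover, so delegating it is legitimate. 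In short, your argument reconstructs, by an elementary high-energy-asymptotics plus Phragm\'en--Lindel\"of mechanism, the content that the paper outsources wholesale to \cite{ET}.
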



\Section{The two dimensional case} \label{2D}

\subsection{The model and the Dirichlet-to-Neumann map} \label{Model-DNmap-2D}

In this Section, we work on a Riemannian surface $(\S,g)$ which has the topology of a cylinder $M = [0,1] \times T^1$ and that is equipped with a Riemannian metric given in global isothermal coordinates $(x,y)$ by
\begin{equation} \label{Metric-2D}
  g = f(x) [dx^2 + dy^2].
\end{equation}
Here $f$ is a smooth positive function on $\S$ of the variable $x$ only and $T^1$ stands for the circle. The metric $g$ obviously possesses one Killing vector field $\partial_y$ that generates the cylindrical symmetry of our surface of revolution. The boundary $\partial \S$ of $\S$ consists in two copies of $T^1$, precisely
$$
  \partial \S = \Gamma_0 \cup \Gamma_1, \quad \Gamma_0 = \{0\} \times T^1, \quad \Gamma_1 = \{1\} \times T^1.
$$

In our global coordinates system $(x,y)$, the positive Laplace-Beltrami operator takes the expression
$$
  -\Delta_g = \frac{1}{f} \left( -\partial_x^2 - \partial_y^2 \right).
$$
It is well known that the Laplace-Beltrami operator $-\Delta_g$ with Dirichlet boundary conditions is selfadjoint on $L^2(\S, dVol)$ and has pure point spectrum $\{ \lambda_j^2\}_{j \geq 1}$ so that $0 < \lambda_1^2 < \lambda^2_2 \leq \dots \leq \lambda_j^2 \to +\infty$.

Consider the Dirichlet problem at a frequency $\lambda^2$ such that $\lambda \notin \{ \lambda_j^2\}_{j \geq 1}$. That is we look at the solutions $u$ of the following PDE
$$
  -\Delta_g u = \lambda^2 u, \quad \textrm{on} \ \S,
$$
which can be rewritten in our coordinates system as
\begin{equation} \label{Eq1-2D}
  - \partial_x^2 u - \partial_y^2 u - \lambda^2 f u = 0, \quad \textrm{on} \ \S,
\end{equation}
with the boundary conditions
\begin{equation} \label{BC1-2D}
  u = \psi \quad \textrm{on} \ \partial \S.
\end{equation}
The DN map $\Lambda_g(\lambda^2)$ is then defined by (\ref{DN-Abstract}) as
\begin{equation} \label{DN-2D}
  \Lambda_g(\lambda^2) (\psi) = \left( \partial_\nu u \right)_{|\partial \S},
\end{equation}
where $u$ is the unique solution in $H^1(\S)$ of (\ref{Eq1-2D}) and $\left( \partial_\nu u \right)_{|\partial \S} \in H^{-1/2}(\partial \S)$ is the weak normal derivative of $u$ on $\partial \S$.

In order to find a nice representation of the DN map, we introduce certain notations. Since the boundary $\partial \S$ of $\S$ has two disjoint components $\partial \S = \Gamma_0 \cup \Gamma_1$, we may decompose the Sobolev spaces $H^s(\partial \S)$ as $H^s(\partial \S) = H^s(\Gamma_0) \oplus H^s(\Gamma_1)$ for any $s \in \R$. We also recall that $\Gamma_0 = \Gamma_1 = T^1$. Moreover, we shall use the vector notation
$$
 \varphi = \left( \begin{array}{c} \varphi^0 \\ \varphi^1 \end{array} \right),
$$
for all elements $\varphi$ of $H^s(\partial \S) = H^s(\Gamma_0) \oplus H^s(\Gamma_1)$. Finally, since the DN map is a linear operator from $H^{1/2}(\partial \S)$ to $H^{-1/2}(\partial \S)$, it has the structure of an operator valued $2 \times 2$ matrix
$$
  \Lambda_g(\lambda^2) = \left( \begin{array}{cc} L(\lambda^2) & T_R(\lambda^2) \\ T_L(\lambda^2) & R(\lambda^2) \end{array} \right),
$$
where $L(\lambda^2), R(\lambda^2), T_R(\lambda^2), T_L(\lambda^2)$ are understood as operators from $H^{1/2}(T^1)$ to $H^{-1/2}(T^1)$. The operators $L(\lambda^2), R(\lambda^2)$ correspond to the partial DN map whose measurements are restricted to $\Gamma_0$ and $\Gamma_1$ respectively whereas the operators $T_R(\lambda^2), T_L(\lambda^2)$ correspond to the partial DN maps whose measurements are made on the disjoints sets $\Gamma_0$ and $\Gamma_1$. For instance, $T_L(\lambda^2)$ is the operator from $H^{1/2}(\Gamma_0)$ to $H^{-1/2}(\Gamma_1)$ given by
$$
  T_L(\lambda^2)(\psi^0) = \left( \partial_\nu u \right)_{|\Gamma_1},
$$
where $u$ is the unique solution of (\ref{Eq1-2D}) - (\ref{BC1-2D}) such that $supp \,\psi^0 \subset \Gamma_0$.

To summarize, using the notations in the Introduction, we have the following dictionary
$$
  L(\lambda^2) = \Lambda_{g,\Gamma_0}(\lambda^2), \quad R(\lambda^2) = \Lambda_{g,\Gamma_1}(\lambda^2),
$$
$$	
	T_L(\lambda^2) = \Lambda_{g,\Gamma_0, \Gamma_1}(\lambda^2), \quad T_R(\lambda^2) = \Lambda_{g,\Gamma_1, \Gamma_0}(\lambda^2).
$$

Now we take advantage of the cylindrical symmetry of $(\S,g)$ to find a simple expression of the DN map. We first write $\psi = (\psi^0, \psi^1) \in H^{1/2}(\Gamma_0) \times H^{1/2}(\Gamma_1)$ using their Fourier series representation as
$$
 \psi^0 = \sum_{m \in \Z} \psi^0_{m} Y_{m}, \quad \psi^1 = \sum_{m \in \Z} \psi^1_{m} Y_{m},
$$
where
$$
  Y_{m}(y) = e^{imy}.
$$
Note that for any $s \in \R$, the space $H^{s}(T^1)$ can be described as
$$
  \varphi \in H^{s}(T^1) \ \Longleftrightarrow \ \left\{ \varphi \in \D'(T^1), \ \varphi = \sum_{m \in \Z} \varphi_{m} Y_{m}, \quad \sum_{m \in \Z} (1 + m^2)^{s} |\varphi_{m}|^2 < \infty \ \right\}.
$$
Therefore, the unique solution $u$ of (\ref{Eq1-2D}) - (\ref{BC1-2D}) can be looked for under the form
$$
  u = \sum_{m \in \Z} u_{m}(x) Y_{m}(y),
$$
where for all $m \in \Z$, the functions $u_{m}$ are the unique solutions of the ODEs (w.r.t. $x$) with boundary conditions
\begin{equation} \label{Eq2-2D}
  \left\{ \begin{array}{c} -u_{m}'' + m^2 u_{m} - \lambda^2 f(x) u_{m} = 0, \\
    u_{m}(0) = \psi^0_{m}, \quad u_{m}(1) = \psi^1_{m}. \end{array} \right.
\end{equation}

The DN map can now be diagonalized on the Hilbert basis $\{ Y_{m} \}_{m \in \Z}$ and can be shown to have a very simple expression on each Fourier mode. First, a short calculation using the particular form of the metric (\ref{Metric-2D}) and for smooth enough Dirichlet data $\psi$ shows that
$$
  \Lambda_g(\lambda^2) \left( \begin{array}{c} \psi^0 \\ \psi^1 \end{array} \right) = \left( \begin{array}{c} \left( \partial_\nu u \right)_{|\Gamma_0} \\ \left( \partial_\nu u \right)_{|\Gamma_1} \end{array} \right) = \left( \begin{array}{c} - \frac{1}{\sqrt{f(0)}} (\partial_x u)_{|x=0} \\ \frac{1}{\sqrt{f(1)}} (\partial_x u)_{|x=1} \end{array} \right).
$$
Hence, if we let the DN map act on the vector space generated by the Fourier mode $Y_{m}$, we get
\begin{equation} \label{DN-Partiel-0-2D}
  \Lambda_g(\lambda^2) \left( \begin{array}{c} \psi_{m}^0 Y_{m} \\ \psi_{m}^1 Y_{m} \end{array} \right) = \left( \begin{array}{c} - \frac{1}{\sqrt{f(0)}} u_{m}'(0) Y_{m} \\ \frac{1}{\sqrt{f(1)}} u_{m}'(1) Y_{m} \end{array} \right).
\end{equation}
We denote by
$$
  \Lambda_g(\lambda^2)_{|<Y_{m}>} = \Lambda^{m}_g(\lambda^2) = \left( \begin{array}{cc} L^{m}(\lambda^2) & T^{m}_R(\lambda^2) \\ T^{m}_L(\lambda^2) & R^{m}(\lambda^2) \end{array} \right),
$$
the $2 \times 2$ matrix corresponding to the restriction of the DN map to each Fourier mode $<Y_{m}>$. This operator is clearly defined  for all $m \in \Z$ and for all $\left(\psi_{m}^0, \psi_{m}^1 \right) \in \C \oplus \C$ by
\begin{equation} \label{DN-Partiel-1-2D}
  \Lambda^{m}_g(\lambda^2) \left( \begin{array}{c} \psi_{m}^0  \\ \psi_{m}^1  \end{array} \right) =  \left( \begin{array}{cc} L^{m}(\lambda^2) & T^{m}_R(\lambda^2) \\ T^{m}_L(\lambda^2) & R^{m}(\lambda^2) \end{array} \right) \left( \begin{array}{c} \psi_{m}^0  \\ \psi_{m}^1  \end{array} \right) = \left( \begin{array}{c} - \frac{1}{\sqrt{f(0)}} u_{m}'(0) \\ \frac{1}{\sqrt{f(1)}} u_{m}'(1) \end{array} \right).
\end{equation}

We can further simplify the partial DN maps $\Lambda^{m}_g(\lambda^2)$ by interpreting their coefficients as the characteristic and Weyl-Titchmarsh functions of the 1D equation (\ref{Eq2-2D}) with appropriate boundary conditions. Precisely, consider the ODE
\begin{equation} \label{Eq4-2D}
  -v'' + q_{\lambda}(x) v = - \mu^2 v, \quad \quad q_{\lambda} = - \lambda^2 f,
\end{equation}
with boundary conditions
\begin{equation} \label{BC4-2D}
  v(0) = 0, \quad v(1) = 0.
\end{equation}

Note that the equation (\ref{Eq4-2D}) is nothing but equation (\ref{Eq2-2D}) in which the parameter $-m^2$ is written as $-\mu^2$ and is interpreted as the spectral parameter of the equation. On the other hand, the boundary conditions (\ref{BC1-2D}) have been replaced by Dirichlet boundary conditions at $x = 0$ and $x=1$. Since the potential $q_{\lambda}$ lies in  $L^1([0,1])$ and is real, the boundary value problem (\ref{Eq4-2D}) - (\ref{BC4-2D}) lies in the framework recalled in Section \ref{Preliminary}. Hence, for all $\mu \in \C$, we can define the fundamental systems of solutions
$$
  \{ c_0(x,\mu^2), s_0(x,\mu^2)\}, \quad \{ c_1(x,\mu^2), s_1(x,\mu^2)\},
$$
of (\ref{Eq4-2D}) by imposing the Cauchy conditions
\begin{equation} \label{FSS-2D}
  \left\{ \begin{array}{cccc} c_0(0,\mu^2) = 1, & c_0'(0,\mu^2) = 0, & s_0(0,\mu^2) = 0, & s_0'(0,\mu^2) = 1, \\
 	                  c_1(1,\mu^2) = 1, & c'_1(1,\mu^2) = 0, & s_1(1,\mu^2) = 0, & s'_1(1,\mu^2) = 1. \end{array} \right.
\end{equation}
Note that the dependence of the FSS on $\lambda^2$ is not written for clarity but implicit.

We then define the characteristic function of (\ref{Eq4-2D}) with boundary conditions (\ref{BC4-2D}) by
\begin{equation} \label{Char-2D}
  \Delta(\mu^2) = W(s_0, s_1),
\end{equation}
whereas the Weyl-Titchmarsh functions are defined by
\begin{equation} \label{WT-2D}
  M(\mu^2) = - \frac{W(c_0, s_1)}{\Delta(\mu^2)} = - \frac{D(\mu^2)}{\Delta(\mu^2)}, \quad \quad N(\mu^2) = -\frac{W(c_1, s_0)}{\Delta(\mu^2)} =  -\frac{E(\mu^2)}{\Delta(\mu^2)}.
\end{equation}
We refer to Section \ref{Preliminary} for the notations.

\begin{rem} \label{M-WellDefined-2D}
1. In the case $\lambda^2 = 0$, we get immediately the following explicit formulas for the characteristic and WT functions.
\begin{equation} \label{Lambda=0}
  \begin{array}{l}
	  c_0(x,\mu^2) = \cosh(\mu x), \quad s_0(x,\mu^2) = \frac{\sinh(\mu x)}{\mu}, \\
		c_1(x,\mu^2) = \cosh(\mu (1-x)), \quad s_1(x,\mu^2) = -\frac{\sinh(\mu (1-x))}{\mu}, \\
		\Delta(\mu^2) = \frac{\sinh(\mu)}{\mu}, \quad D(\mu^2) = E(\mu^2) = \cosh(\mu), \\
		M(\mu^2) = -\mu \coth(\mu), \quad N(\mu^2) = -\mu \coth(\mu).
	\end{array}
\end{equation}	
2. The WT function $M(\mu^2)$ is meromorphic on $\C$ and has an infinite and discrete set of poles $\{ \alpha^2_k\}_{k \geq 1}$ corresponding to "minus" the Dirichlet eigenvalues of $-\frac{d}{dx^2} + q_\lambda$ or equivalently, corresponding to the zeros of the characteristic function $\Delta(\mu^2)$. Let us show that the integers $m^2, \ m \in \Z$ cannot be poles of $M(\mu^2)$ under our general assumption. Assume the converse, \textit{i.e.} there exists $m \in \Z$ such that $m^2$ is a pole of $M(\mu^2)$. Thus $-m^2$ is a Dirichlet eigenvalue of $-\frac{d}{dx^2} + q_\lambda$ and we denote by $u_m(x)$ the associated eigenfunction solution of (\ref{Eq2-2D}). Then the function $u(x,y) = u_m(x) Y_m(y)$ is a nontrivial solution of (\ref{Eq1-2D}) with Dirichlet boundary conditions. We conclude that $u$ is an eigenfunction of $-\Delta_g$ associated to the Dirichlet eigenvalue $\lambda^2$. But this case has been ruled out from the beginning since we assume that $\lambda^2$ is not a Dirichlet eigenvalue of $\Delta_g$. Whence the contradiction.

In particular, we see that $\,0$ cannot be a pole of $M(\mu^2)$ or in other words, that the characteristic function $\Delta(\mu^2)$ doesn't vanish at $0$. Using Corollary \ref{Hadamard}, we have thus the following factorization for $\Delta(\mu^2)$ that will be used later.
\begin{equation} \label{Hadamard-Delta}
  \Delta(\mu^2) = \Delta(0) \prod_{k = 1}^\infty \left( 1 - \frac{\mu^2}{\alpha_{k}^2} \right).
\end{equation}
\end{rem}

We now come back to the expression (\ref{DN-Partiel-1-2D}) of the partial DN map $\Lambda^{m}_g(\lambda^2)$. For all $m \in \Z$, we need to express $u_{m}'(0)$ and $u_{m}'(1)$ in terms of $\psi^0_{m}$ and $\psi^1_{m}$ in order to find the expressions of the coefficients $L^{m}(\lambda^2), T^{m}_R(\lambda^2), T^{m}_L(\lambda^2)$ and $R^{m}(\lambda^2)$. But the solution $u_{m}$ of (\ref{Eq2-2D}) can be written as linear combinations of the FSS $\{ c_0(x,m^2), s_0(x,m^2)\}$ and $\{ c_1(x,m^2), s_1(x,m^2)\}$. Precisely, we have
$$
  u_{m}(x) = \alpha \,c_0(x,m^2) + \beta \,s_0(x,m^2) = \gamma \,c_1(x,m^2) + \delta \,s_1(x,m^2),
$$
for some constants $\alpha,\beta,\gamma,\delta$. Using (\ref{Eq2-2D}) and (\ref{FSS-2D}), we first get
\begin{equation} \label{a1-2D}
  \left( \begin{array}{c} u_{m}(0) \\ u_{m}(1) \end{array} \right) = \left( \begin{array}{c} \psi^0_{m} \\ \psi^1_{m} \end{array} \right) = \left( \begin{array}{c} \alpha \\ \gamma \end{array} \right) = \left( \begin{array}{c} \gamma \,c_1(0,m^2) + \delta \, s_1(0,m^2),  \\ \alpha \,c_0(1,m^2) + \beta \,s_0(1,m^2) \end{array} \right).
\end{equation}
From (\ref{a1-2D}), we obtain in particular
\begin{equation} \label{a2-2D}
  \left( \begin{array}{c} \beta \\ \delta  \end{array} \right) = \left( \begin{array}{cc} -\frac{c_0(1,m^2)}{s_0(1,m^2)} & \frac{1}{s_0(1,m^2)} \\ \frac{1}{s_1(0,m^2)} & -\frac{c_1(0,m^2)}{s_1(0,m^2)} \end{array} \right) \left( \begin{array}{c} \psi^0_{m} \\ \psi^1_{m}  \end{array} \right).
\end{equation}
Also using (\ref{FSS-2D}) and (\ref{a2-2D}), we have
\begin{equation} \label{a3-2D}
  \left( \begin{array}{c} u_{m}'(0) \\ u_{m}'(1)  \end{array} \right) = \left( \begin{array}{c} \beta \\ \delta  \end{array} \right) = \left( \begin{array}{cc} -\frac{c_0(1,m^2)}{s_0(1,m^2)} & \frac{1}{s_0(1,m^2)} \\ \frac{1}{s_1(0,m^2)} & -\frac{c_1(0,m^2)}{s_1(0,m^2)} \end{array} \right) \left( \begin{array}{c} \psi^0_{m} \\ \psi^1_{m}  \end{array} \right).
\end{equation}
Therefore by (\ref{DN-Partiel-1-2D}) and (\ref{a1-2D}) - (\ref{a3-2D}), we obtain.
\begin{equation} \label{DN-Partiel-3-2D}
  \Lambda^{m}_g(\lambda^2) \left( \begin{array}{c} \psi_{m}^0 \\ \psi_{m}^1  \end{array} \right) = \left( \begin{array}{cc} \frac{1}{\sqrt{f(0)}} \frac{c_0(1,m^2)}{s_0(1,m^2)} &  - \frac{1}{\sqrt{f(0)}} \frac{1}{s_0(1,m^2)} \\ \frac{1}{\sqrt{f(1)}} \frac{1}{s_1(0,m^2)} &   - \frac{1}{\sqrt{f(1)}} \frac{c_1(0,m^2)}{s_1(0,m^2)} \end{array} \right) \left( \begin{array}{c} \psi_{m}^0 \\ \psi_{m}^1  \end{array} \right).
\end{equation}
Equivalently, we obtain the following expression for the partial DN map $\Lambda^{m}_g(\lambda^2)$.
\begin{equation} \label{DN-Partiel-4-2D}
  \Lambda^{m}_g(\lambda^2) = \left( \begin{array}{cc} L^{m}(\lambda^2) & T^{m}_R(\lambda^2) \\ T^{m}_L(\lambda^2) & R^{m}(\lambda^2) \end{array} \right) = \left( \begin{array}{cc} \frac{1}{\sqrt{f(0)}} \frac{c_0(1,m^2)}{s_0(1,m^2)} &  - \frac{1}{\sqrt{f(0)}} \frac{1}{s_0(1,m^2)} \\ \frac{1}{\sqrt{f(1)}} \frac{1}{s_1(0,m^2)} &   - \frac{1}{\sqrt{f(1)}} \frac{c_1(0,m^2)}{s_1(0,m^2)} \end{array} \right).
\end{equation}
Finally, using (\ref{FSS-2D}) again, we easily show that
\begin{equation}\label{formule}
  \Delta(m^2) = s_0(1,m^2) = -s_1(0,m^2), \quad M(m^2) = - \frac{c_0(1,m^2)}{s_0(1,m^2)}, \quad N(m^2) =  \frac{c_1(0,m^2)}{s_1(0,m^2)}.
\end{equation}
Therefore
\begin{equation} \label{DN-Partiel-5-2D}
  \Lambda^{m}_g(\lambda^2) = \left( \begin{array}{cc} L^{m}(\lambda^2) & T^{m}_R(\lambda^2) \\ T^{m}_L(\lambda^2) & R^{m}(\lambda^2) \end{array} \right) = \left( \begin{array}{cc} - \frac{1}{\sqrt{f(0)}} \, M(m^2) &  - \frac{1}{\sqrt{f(0)}} \, \frac{1}{\Delta(m^2)} \\ -\frac{1}{\sqrt{f(1)}} \, \frac{1}{\Delta(m^2)} &  -\frac{1}{\sqrt{f(1)}} \, N(m^2) \end{array} \right).
\end{equation}
We emphasize that the quantities $\Delta(m^2)$ and $M(m^2)$ are well defined for all $m \in \Z$ thanks to our assumption on $\lambda^2$, \textit{i.e.} $\lambda^2$ is not a Dirichlet eigenvalue of $\Delta_g$ (see Remark \ref{M-WellDefined-2D} for this point).

The result above shows that the coefficients of $\Lambda^{m}_g(\lambda^2)$ which correspond to the global DN map $\Lambda_g(\lambda^2)$ restricted to the Fourier mode $< Y_m >$, can be simply interpreted in terms of the characteristic and Weyl-Titchmarsh functions $\Delta(m^2),\, M(m^2)$ and $N(m^2)$ associated to the simple ODE (\ref{Eq4-2D}) with boundary data (\ref{BC4-2D}). In particular, the partial DN maps $L(\lambda^2)$ and $R(\lambda^2)$ which correspond to the global DN map with data restricted to $\Gamma_0$ and $\Gamma_1$ respectively, only depend on the Weyl-Titchmarsh functions $M(m^2)$ and $N(m^2)$ respectively (modulo some explicit constants). In fact, it is clear in our model that the knowledge of $L(\lambda^2)$ and $R(\lambda^2)$ is equivalent to the knowledge of the sequence of Weyl-Titchmarsh functions $\{M(m^2)\}_{m \in \Z}$ and $\{N(m^2)\}_{m \in \Z}$ respectively (modulo some constants).

Similarly, the knowledge of the partial DN maps $T_L(\lambda^2)$ and $T_R(\lambda^2)$ which correspond to the global DN map whose data are measured on the disjoint sets $\Gamma_0$ and $\Gamma_1$, is equivalent to the knowledge of the characteristic functions $\{\Delta(m^2)\}_{m \in \Z}$ (modulo some explicit constants). We emphasize that this is the key point that explains our non-uniqueness result for the Calderon problem with data measured on disjoint sets belonging to two distinct connected components of the boundary. Indeed, it is well known that the characteristic function $\Delta(\mu^2)$ does not contain enough information to determine uniquely the potential $q_\lambda$ in (\ref{Eq4-2D}) and thus the metric $g$.


\subsection{The 2D anisotropic Calderon problem with data measured on the same connected component.} \label{Calderon-R-2D}

In this Section, we solve the Calderon problem from the knowledge of  the partial DN map $\Lambda_{g, \Gamma_D, \Gamma_N}$, where $\Gamma_D$ and $\Gamma_N$ are any open subsets living in the same connected component of the boundary, (i.e, we assume that $\Gamma_D, \ \Gamma_N$ are subsets of, either $\{0\}\times T^1$, or $\{1\}\times T^1$). Precisely, we prove the following result.

\begin{thm} \label{MainThm-R-2D}
  Let $(\S,g)$ and $(\S,\tilde{g})$ two Riemannian surfaces of the form (\ref{Metric-2D}), \textit{i.e.}
	$$
	  g = f(x) [dx^2 + dy^2], \quad \tilde{g} = \tilde{f}(x) [dx^2 +  dy^2].
	$$
	We shall add the subscript $\ \tilde{}$ to all the quantities referring to $(\S,\tilde{g})$. Let the frequency $\lambda^2$ be fixed and let
    $\Gamma_D$, $\Gamma_N$ be subsets belonging to the same connected component of $\partial \S$. We assume that
	$$
    \Lambda_{g, \Gamma_D, \Gamma_N} = \Lambda_{\tilde{g}, \Gamma_D, \Gamma_N}.
	$$
    Then \\
	1) If $\lambda^2 = 0$, there exists a function $c > 0$ with $c(0) = 1$ if $\Gamma_D, \ \Gamma_N \subset \{0\}\times T^1$, and $c(1) = 1$ if $\Gamma_D, \ \Gamma_N \subset \{1\}\times T^1$ such that
	$$
	  \tilde{g} = c g.
	$$
	2) If $\lambda^2 \ne 0$, we have $\tilde{g} = g$.
\end{thm}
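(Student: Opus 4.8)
\emph{Proof proposal.} Up to exchanging the two boundary components $\Gamma_0=\{0\}\times T^1$ and $\Gamma_1=\{1\}\times T^1$ — which swaps $f(0)\leftrightarrow f(1)$, exchanges the Weyl--Titchmarsh functions $M\leftrightarrow N$ and the normalizations at the two endpoints — we may assume $\Gamma_D,\Gamma_N$ are nonempty open subsets of $\Gamma_0$. Given $\psi^0\in C_0^\infty(\Gamma_D)$, extended by $0$ on $\Gamma_0$ and by $0$ on $\Gamma_1$, the solution of (\ref{Eq1-2D})--(\ref{BC1-2D}) decomposes in the Fourier modes $e^{imy}$, and solving the decoupled problems (\ref{Eq2-2D}) with $u_m(1)=0$ exactly as in the derivation of (\ref{DN-Partiel-5-2D}) gives $\Lambda_{g,\Gamma_D,\Gamma_N}(\lambda^2)\psi^0=\bigl(L(\lambda^2)\psi^0\bigr)_{|\Gamma_N}$, where $L(\lambda^2)$ is the Fourier multiplier on $T^1$ with symbol $m\mapsto-\tfrac{1}{\sqrt{f(0)}}M(m^2)$ and $M$ is the Weyl--Titchmarsh function of the one-dimensional problem (\ref{Eq4-2D})--(\ref{BC4-2D}). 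Hence $\Lambda_{g,\Gamma_D,\Gamma_N}=\Lambda_{\tilde g,\Gamma_D,\Gamma_N}$ is equivalent to: the convolution operator on $T^1$ with Fourier symbol $a_m:=\tfrac{1}{\sqrt{f(0)}}M(m^2)-\tfrac{1}{\sqrt{\tilde f(0)}}\tilde M(m^2)$ maps $C_0^\infty(\Gamma_D)$ into functions vanishing on $\Gamma_N$; equivalently, the distribution $\rho$ on $T^1$ with Fourier coefficients $(a_m)_{m\in\Z}$ vanishes on the nonempty open set $\Gamma_N-\Gamma_D=\{y-y':y\in\Gamma_N,\ y'\in\Gamma_D\}$.

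I would deal with $\lambda^2=0$ first, where everything is explicit: by Remark \ref{M-WellDefined-2D}, $M(\mu^2)=\tilde M(\mu^2)=-\mu\coth\mu$ for \emph{every} profile, so $a_m=\bigl(\tfrac{1}{\sqrt{f(0)}}-\tfrac{1}{\sqrt{\tilde f(0)}}\bigr)(-m\coth m)$ and $\rho$ is that constant times a fixed distribution which on $T^1\setminus\{0\}$ coincides with a real-analytic function that is not identically $0$ (the kernel of $-\sqrt{-\partial_y^2}$, real-analytic off $0$, plus an exponentially convergent, hence real-analytic, correction). Picking an arc $J\subset\Gamma_N-\Gamma_D$ with $0\notin\bar J$, the hypothesis $\rho_{|J}=0$ together with real-analyticity on the connected set $T^1\setminus\{0\}$ forces the constant to vanish, i.e.\ $f(0)=\tilde f(0)$; writing $c:=\tilde f/f$, a smooth positive function on $\S$, we obtain $\tilde g=cg$ with $c(0)=1$ (and $c(1)=1$ in the symmetric case). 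This is part 1).

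For $\lambda^2\neq0$ — which by the standing assumption of Section \ref{2D} is not a Dirichlet eigenvalue of $-\Delta_g$ or $-\Delta_{\tilde g}$, so that $\Delta(m^2),M(m^2),\tilde\Delta(m^2),\tilde M(m^2)$ are all well defined (Remark \ref{M-WellDefined-2D}) — the \textbf{Complex Angular Momentum method} enters. The sequence $a_m=A(m)$ is the restriction to $\Z$ of the meromorphic function $A(\nu):=-\tfrac{1}{\sqrt{f(0)}}\tfrac{D(\nu^2)}{\Delta(\nu^2)}+\tfrac{1}{\sqrt{\tilde f(0)}}\tfrac{\tilde D(\nu^2)}{\tilde\Delta(\nu^2)}$, which by Corollary \ref{OrderHalf} satisfies $A(\nu)=O(|\nu|)$ as $|\nu|\to\infty$ away from its poles, and whose poles, by the Hadamard factorization of Corollary \ref{Hadamard} together with the Sturm--Liouville asymptotics $\alpha_k^2\sim-k^2\pi^2$, are simple for $k$ large, located near $\pm ik\pi$, and carry polynomially bounded residues. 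Inserting the partial-fraction expansion of $A$ into $\rho=\tfrac1{2\pi}\sum_m A(m)e^{im\theta}$ and summing each elementary term $\sum_m \tfrac{e^{im\theta}}{m-p}$ by its classical closed form (valid for $0<\theta<2\pi$), the factor $1/|\sin(\pi\alpha_k)|\sim e^{-k\pi^2}$ makes the series converge uniformly on compact subsets of a complex neighbourhood of $(0,2\pi)$, so that $\rho$ is real-analytic on $T^1\setminus\{0\}$. As before, $\rho_{|J}=0$ for an arc $J$ with $0\notin\bar J$ then forces $\rho$ to be supported at $\{0\}$, hence $a_m$ to be a polynomial in $m$; being $O(|m|)$ and even, that polynomial is a constant $c$. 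Applying Proposition \ref{UniquenessC} to the entire Cartwright function $(A(\nu)-c)\Delta(\nu^2)\tilde\Delta(\nu^2)$ (a combination of the functions of Corollary \ref{Cart}), which vanishes on $\Z$, gives $A\equiv c$; comparing then the asymptotics $M(\nu^2)=\mp\nu+O(1)$ of Corollary \ref{OrderHalf} on both sides of $\tfrac{1}{\sqrt{f(0)}}M(\nu^2)=\tfrac{1}{\sqrt{\tilde f(0)}}\tilde M(\nu^2)+c$ forces $f(0)=\tilde f(0)$ and $c=0$, i.e.\ $M\equiv\tilde M$ on $\C\setminus\R$. The Borg--Marchenko Theorem \ref{BM} (with vanishing entire correction; with $N,\tilde N$ in the symmetric case) finally yields $q_\lambda=\tilde q_\lambda$, i.e.\ $-\lambda^2f=-\lambda^2\tilde f$, hence $f=\tilde f$ and $\tilde g=g$, which is part 2).

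The main obstacle is the Complex Angular Momentum step in the $\lambda^2\neq0$ case: upgrading the \emph{local} vanishing of the kernel $\rho$ on an arc of $T^1$ to the \emph{global} vanishing of all its Fourier coefficients. The delicate points are the uniform control on the location and residues of the poles of the Weyl--Titchmarsh functions needed to establish that $\rho$ is real-analytic off the diagonal — this rests squarely on the order-$\tfrac12$ estimates of Corollary \ref{OrderHalf} and the factorization of Corollary \ref{Hadamard} — and the endgame through the Cartwright class and Proposition \ref{UniquenessC}; once $A\equiv0$ is in hand, the extraction of $f(0)=\tilde f(0)$ and the appeal to Borg--Marchenko are routine. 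The case $\lambda^2=0$ is genuinely easier because there $M(m^2)=-m\coth m$ is explicit and its off-diagonal kernel is manifestly real-analytic.
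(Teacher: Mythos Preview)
Your overall strategy is sound and reaches the same endpoint as the paper (the identity $\tfrac{1}{\sqrt{f(0)}}M(\mu^2)-\tfrac{1}{\sqrt{\tilde f(0)}}\tilde M(\mu^2)=\mathrm{const}$, then $f(0)=\tilde f(0)$ from the asymptotics of Corollary~\ref{OrderHalf}, then Borg--Marchenko), but the route you take to that identity is genuinely different from the paper's and the hardest step in your version is not quite nailed down.

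\textbf{What the paper does instead.} The paper never studies the Schwartz kernel $\rho$. It fixes a single test function $\Psi\in C_0^\infty(\Gamma_D)$ and observes that the \emph{output} difference $\Phi(y)=\sum_m a_m\hat\Psi(m)e^{imy}$ is itself compactly supported in an arc of length $<2\pi$ (since it vanishes on $\Gamma_N$). Thus both $\hat\Psi$ and $\hat\Phi$ extend by Paley--Wiener to entire functions of exponential type $<\pi$, and Carlson's theorem applied to $F(\mu)=\Delta(\mu^2)\tilde\Delta(\mu^2)\hat\Phi(\mu)+\bigl(\ldots\bigr)\hat\Psi(\mu)$ gives $F\equiv 0$ directly. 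The freedom in $\Psi$ is then exploited: choosing $\Psi$ with $\hat\Psi(\alpha_k)=1$ matches the zeros of $\Delta,\tilde\Delta$ and then their residues, and a Mittag--Leffler expansion of $M$ (not of $\rho$) yields $M(\mu^2)-M(0)=\tilde M(\mu^2)-\tilde M(0)$. This bypasses entirely the question of off-diagonal regularity of the kernel.

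\textbf{Where your argument is exposed.} Your key step is the claim that $\rho$ is real-analytic on $T^1\setminus\{0\}$, obtained by inserting a partial-fraction expansion of $A(\nu)$ into $\sum_m A(m)e^{im\theta}$ and summing termwise. This is plausible but you have not actually justified (i) that the residues of $A$ at the poles $\pm\alpha_k$ are polynomially bounded (this needs the asymptotics of $D(\alpha_k^2)$ and $\Delta'(\alpha_k^2)$, not just Corollary~\ref{OrderHalf}); (ii) that finitely many poles may be \emph{real} when $\lambda^2>0$ (the potential $q_\lambda=-\lambda^2 f$ is negative, so low Dirichlet eigenvalues can be negative, giving $\alpha_k^2>0$; these are not integers by Remark~\ref{M-WellDefined-2D}, but they are not ``near $\pm ik\pi$'' either, and must be handled separately); and (iii) that the double series can be interchanged and the ``polynomial part'' of the Mittag--Leffler expansion is controlled. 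None of this is false, but it is a nontrivial analytic package that the paper avoids by working with $\hat\Phi$ rather than with $\rho$. Once past this point your endgame via Proposition~\ref{UniquenessC} is clean and equivalent to the paper's Carlson step; the conclusion for $\lambda^2=0$ and the Borg--Marchenko finish are the same.
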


\begin{rem}
In the case $\lambda^2 = 0$ and when $\Gamma_D$ and $\Gamma_N$ are not disjoint, this result is well known and has been proved first for any smooth Riemannian surface by Lassas and Uhlmann in \cite{LaU}. We also refer to the work of Guillarmou and Tzou \cite{GT1} for similar results and to \cite{GT2} for a recent survey on 2D inverse Calderon problem. 
\end{rem}

\begin{proof}
Assume for instance that  $\Gamma_D, \ \Gamma_N \subset \{0\}\times T^1$. As usually, we identify $\{0\}\times T^1$ with $[-\pi,\pi]$. Without loss of generality, we can always assume that $0 \in \Gamma_D$ and $\pm \pi \in \Gamma_N$. The general case requires minor modifications. It follows from the hypothesis that, for all $\Psi \in C_0^{\infty}(-R,R)$, $R$ small enough, ($R<\pi$), and for all $y \in \Gamma_N$,
\begin{equation}\label{egalitepartial}
\sum_{m \in \Z} \frac{1}{\sqrt{f(0)}} M(m^2) \hat{\Psi}(m) \ e^{imy} = \sum_{m \in \Z} \frac{1}{\sqrt{\tilde{f}(0)}} \tilde{M}(m^2) \hat{\Psi}(m) \ e^{imy},
\end{equation}
where
\begin{equation}\label{coefFourierPsi}
\hat{\Psi}(m) = \frac{1}{2\pi} \ \int_{-\pi}^{\pi} \Psi(y) \ e^{-imy} \ dy
\end{equation}
is the $m^{th}$ Fourier coefficient of $\Psi$. We set
\begin{equation}
\Phi(y) = \sum_{m \in \Z} \left( \frac{1}{\sqrt{f(0)}} M(m^2) -   \frac{1}{\sqrt{\tilde{f}(0)}} \tilde{M}(m^2) \right)    \hat{\Psi}(m) \ e^{imy}.
\end{equation}
From (\ref{egalitepartial}), we deduce that $supp\ \Phi \subset (-R',R')$ for some $R'<\pi$, and clearly, the $m^{th}$ Fourier coefficient of $\Phi$ is
\begin{equation}\label{coefFourierPhi}
\hat{\Phi}(m) =  \left( \frac{1}{\sqrt{f(0)}} M(m^2) -   \frac{1}{\sqrt{\tilde{f}(0)}} \tilde{M}(m^2) \right)   \hat{\Psi}(m).
\end{equation}
By the Paley-Wiener Theorem (\cite{Boa}, Theorem 6.8.1), $\hat{\Psi}(m)$, (resp. $\hat{\Phi}(m)$) can be extended as an entire function, $\mu \rightarrow \hat{\Psi}(\mu)$, (resp. $\mu \rightarrow \hat{\Phi}(\mu)$) of order $1$, of type strictly less than $\pi$. More precisely, there exists $A>0$ and $B <\pi$ such that
\begin{equation}\label{orderpi}
\mid \hat{\Psi}(\mu) \mid \leq A \ e^{B \mid \mu \mid} \ , \  \mid \hat{\Phi}(\mu) \mid \leq A \ e^{B \mid \mu \mid} \ ,\ \forall \mu \in \C.
\end{equation}
Now, we can establish the following elementary lemma:

\begin{lemma}\label{prolongement}
For all $\mu \in \C$,
\begin{equation*}
F(\mu):=\Delta(\mu^2) \tilde{\Delta}(\mu^2) \hat{\Phi}(\mu) + \left( \frac{1}{\sqrt{f(0)}} D(\mu^2) \tilde{\Delta}(\mu^2)
- \frac{1}{\sqrt{\tilde{f}(0)}} \tilde{D}(\mu^2) \Delta(\mu^2)  \right) \hat{\Psi}(\mu) =0.
\end{equation*}
\end{lemma}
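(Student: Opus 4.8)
The plan is to show that $F$ is entire, that it vanishes at every integer, and that it grows at most like $e^{B|\mu|}$ with $B<\pi$ along the imaginary axis; a Carlson-type uniqueness theorem (Proposition \ref{UniquenessC}) will then force $F\equiv 0$ on $\C$. Note first that $F$ is \emph{visibly} entire: it is a finite sum of products of the functions $\mu\mapsto\Delta(\mu^2),\,D(\mu^2),\,\tilde\Delta(\mu^2),\,\tilde D(\mu^2)$, which are entire by Corollary \ref{OrderHalf}(2), and of $\hat\Psi(\mu),\,\hat\Phi(\mu)$, which are entire by the Paley--Wiener Theorem; no division occurs.

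First I would check that $F(m)=0$ for every $m\in\Z$. By Remark \ref{M-WellDefined-2D}, $\Delta(m^2)$ and $\tilde\Delta(m^2)$ are nonzero for all $m\in\Z$ (this is where the hypothesis that $\lambda^2$ is not a Dirichlet eigenvalue of $-\Delta_g$ or of $-\Delta_{\tilde g}$ is used), so the identities $\Delta(\mu^2)M(\mu^2)=-D(\mu^2)$ and $\tilde\Delta(\mu^2)\tilde M(\mu^2)=-\tilde D(\mu^2)$ coming from (\ref{WT-2D}) are valid at $\mu=m$. Plugging the formula (\ref{coefFourierPhi}) for $\hat\Phi(m)$ into $F(m)$ and using these identities, the term $\Delta(m^2)\tilde\Delta(m^2)\hat\Phi(m)$ collapses to $\left(-\frac{1}{\sqrt{f(0)}}D(m^2)\tilde\Delta(m^2)+\frac{1}{\sqrt{\tilde f(0)}}\tilde D(m^2)\Delta(m^2)\right)\hat\Psi(m)$, which is exactly the opposite of the second summand of $F(m)$; hence $F(m)=0$. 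Informally, $F=\Delta(\mu^2)\tilde\Delta(\mu^2)\,G(\mu)$, where $G(\mu)=\hat\Phi(\mu)-\left(\frac{1}{\sqrt{f(0)}}M(\mu^2)-\frac{1}{\sqrt{\tilde f(0)}}\tilde M(\mu^2)\right)\hat\Psi(\mu)$ vanishes on $\Z$ by (\ref{coefFourierPhi}), and multiplying by $\Delta\tilde\Delta$ clears the poles of $M$ and $\tilde M$.

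Next I would control the growth of $F$ on the imaginary axis. By Corollaries \ref{OrderHalf} and \ref{Cart}, the functions $\mu\mapsto\Delta(\mu^2),\,D(\mu^2),\,\tilde\Delta(\mu^2),\,\tilde D(\mu^2)$ are of exponential type and bounded on $i\R$; more precisely, Corollary \ref{OrderHalf}(3) gives $\Delta(-y^2)=\frac{\sin y}{y}+O(y^{-2})$ and $D(-y^2)=\cos y+O(|y|^{-1})$ as $|y|\to\infty$ (and similarly for the tilded functions), so all four are uniformly bounded for $y\in\R$. On the other hand $\hat\Psi$ and $\hat\Phi$ satisfy the Paley--Wiener bound (\ref{orderpi}), namely $|\hat\Psi(\mu)|,\,|\hat\Phi(\mu)|\le A\,e^{B|\mu|}$ for all $\mu\in\C$ with $B<\pi$. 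Combining these, $F$ is entire of exponential type and $|F(iy)|\le C\,e^{B|y|}$ for all $y\in\R$ with the same $B<\pi$. Since moreover $F$ vanishes on $\N\subset\Z$, and $\sum_{n\ge1}\Re(n)/|n|^2=\sum_{n\ge1}1/n=\infty$, Proposition \ref{UniquenessC} (i.e.\ a Carlson-type theorem) yields $F\equiv0$ on $\C$.

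The only delicate point is the strict inequality $B<\pi$ in the imaginary-axis estimate: this is precisely what makes the Carlson-type argument go through, and it rests on the hypothesis that $\Gamma_D$ is a \emph{proper} open subset of the circle, so that $\Psi$ — and hence, thanks to the equality (\ref{egalitepartial}) that confines $\operatorname{supp}\Phi$, also $\Phi$ — is supported in an interval strictly shorter than $(-\pi,\pi)$. If $\Gamma_D$ were allowed to fill the whole circle, the exponential type of $F$ along the imaginary axis would reach $\pi$ and the uniqueness conclusion would fail. Everything else is routine bookkeeping with the estimates of Section \ref{Preliminary}.
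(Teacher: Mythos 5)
Your argument follows the same route as the paper: you verify that $F$ is entire of exponential type, that $F(m)=0$ for all $m\in\Z$ (your computation substituting (\ref{coefFourierPhi}) and using $M(m^2)=-D(m^2)/\Delta(m^2)$, $\tilde M(m^2)=-\tilde D(m^2)/\tilde\Delta(m^2)$ is correct, and Remark \ref{M-WellDefined-2D} does guarantee these identities make sense at integer points), and that $|F(iy)|\le Ce^{B|y|}$ with $B<\pi$, which is exactly the paper's chain of reasoning. The one flaw is the final step: you invoke Proposition \ref{UniquenessC}, but that proposition requires $F$ to lie in the Cartwright class $\mathcal{C}$, i.e.\ $\int_\R \log^+(|F(iy)|)(1+y^2)^{-1}\,dy<\infty$. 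Your bound $|F(iy)|\le Ce^{B|y|}$ with $B>0$ does not give this — indeed $\int_\R B|y|(1+y^2)^{-1}\,dy$ diverges, and since $\hat\Psi(iy)$ genuinely grows exponentially for compactly supported $\Psi$, the function $F$ is in general \emph{not} in $\mathcal{C}$. Proposition \ref{UniquenessC} is the right tool for $\delta$, $d$, $e$ (which are bounded on $i\R$, cf.\ Corollary \ref{Cart}), but not for $F$. The correct conclusion device here, and the one the paper uses, is Carlson's theorem (\cite{Boa}, Theorem 9.2.1), whose hypotheses — exponential type, growth $O(e^{B|y|})$ on the imaginary axis with $B<\pi$, and vanishing on the non-negative integers — are precisely the facts you established. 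You parenthetically call Proposition \ref{UniquenessC} ``a Carlson-type theorem,'' so this is a mis-citation rather than a missing idea, but as written the last step is not justified; replace the reference by Carlson's theorem and the proof is complete. Your closing remark about why $B<\pi$ is essential (properness of $\Gamma_D$, and the support confinement of $\Phi$ coming from (\ref{egalitepartial})) is accurate and consistent with the paper's setup.
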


\begin{proof}
Using (\ref{orderpi}) and Corollary \ref{OrderHalf}, we see that $F(\mu)$ is an entire function of order $1$
satisfying the following estimate on the imaginary axis :
\begin{equation}
\mid F(iy) \mid \leq C \ e^{B \mid  y\mid}, \ \forall y \in \R,
\end{equation}
for some $C>0$. Moreover, since $M(m^2) = - \frac{D(m^2)}{\Delta(m^2)}$, we  deduce from (\ref{coefFourierPhi}) that $F(m)=0, \ \forall m \in \Z$. Since $B<\pi$, using Carlson's theorem, (see \cite{Boa}, Theorem 9.2.1), we have $F(\mu)=0$ for all $\mu \in \C$.
\end{proof}

\vspace{0.5cm} \noindent
We can deduce from this lemma:

\begin{prop} \label{equalityWT}
Assume that $\Lambda_{g, \Gamma_D, \Gamma_N} = \Lambda_{\tilde{g}, \Gamma_D, \Gamma_N}$,  with $\Gamma_D, \ \Gamma_N \subset \{0\}\times T^1$. Then :
\par\noindent
(1) The zeros of $\Delta(\mu^2)$ and $\tilde{\Delta}(\mu^2)$ coincide, and in particular, for all $\mu \in \C$,
\begin{equation}\label{caractegal}
\Delta (\mu^2) = \tilde{\Delta}(\mu^2).
\end{equation}
(2) For all $\mu \in \C \backslash \R$,
\begin{equation} \label{WTegal}
M (\mu^2)- M(0) = \tilde{M}(\mu^2) - \tilde{M}(0)
\end{equation}
\end{prop}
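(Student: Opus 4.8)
The plan is to run everything off the single identity $F\equiv 0$ of Lemma~\ref{prolongement}, exploiting the fact that it holds for \emph{every} admissible Dirichlet datum $\Psi\in C_0^\infty(\Gamma_D)$, hence for a whole family of entire functions $\hat\Psi$ (of exponential type $<\pi$, by Paley--Wiener); in particular, for any prescribed point $a\in\C$ one can choose $\Psi$ with $\hat\Psi(a)\neq 0$, since $\Psi\mapsto\hat\Psi(a)$ is not the zero functional on $C_0^\infty(\Gamma_D)$. For part~(1): let $a$ be a zero of $\mu\mapsto\Delta(\mu^2)$. Evaluating $F(a)=0$ kills the first term and leaves $\tfrac{1}{\sqrt{f(0)}}D(a^2)\tilde\Delta(a^2)\hat\Psi(a)=0$. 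Since $\Delta(\mu^2)=W(s_0,s_1)$ and $D(\mu^2)=W(c_0,s_1)$ have no common zero (a common zero would make $c_0$ and $s_0$ both proportional to $s_1$, contradicting $W(c_0,s_0)=1$), we get $D(a^2)\neq 0$, so choosing $\hat\Psi(a)\neq 0$ forces $\tilde\Delta(a^2)=0$. Swapping the roles of $g$ and $\tilde g$, the entire functions $\Delta(\mu^2)$ and $\tilde\Delta(\mu^2)$ have the same zeros, which are the simple Dirichlet eigenvalues of the two one-dimensional operators and hence coincide with multiplicities; the Hadamard factorisation (\ref{Hadamard-Delta}) then gives $\Delta(\mu^2)/\tilde\Delta(\mu^2)\equiv\Delta(0)/\tilde\Delta(0)$, and letting $\mu\to+\infty$ on $\R$ with $\Delta(\mu^2),\tilde\Delta(\mu^2)=\sinh\mu/\mu+O(e^{|\Re\mu|}/\mu^2)$ (Corollary~\ref{OrderHalf}) pins this constant to $1$, which is (\ref{caractegal}).

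For part~(2): using $\Delta=\tilde\Delta$, the identity becomes $\Delta(\mu^2)\bigl[\Delta(\mu^2)\hat\Phi(\mu)+\bigl(\tfrac{1}{\sqrt{f(0)}}D(\mu^2)-\tfrac{1}{\sqrt{\tilde f(0)}}\tilde D(\mu^2)\bigr)\hat\Psi(\mu)\bigr]=0$, so the bracket vanishes identically. Evaluating the bracket at a zero $a$ of $\Delta$ and choosing $\hat\Psi(a)\neq 0$ shows that the entire function $g(\mu):=\tfrac{1}{\sqrt{f(0)}}D(\mu^2)-\tfrac{1}{\sqrt{\tilde f(0)}}\tilde D(\mu^2)$ vanishes at every (simple) zero of $\Delta(\mu^2)$, hence $H(\mu):=g(\mu)/\Delta(\mu^2)$ is entire. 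I would then show $H$ is constant: one has $|g(\mu)|\le Ce^{|\Re\mu|}$ from Corollary~\ref{OrderHalf}, while along a sequence of circles $|\mu|=R_n\to\infty$ kept at distance $\geq\epsilon$ from the zeros of $\Delta(\mu^2)$ (which cluster near the imaginary axis, around the points $ik\pi$), the asymptotics $\Delta(\mu^2)=(\sinh\mu/\mu)(1+O(1/\mu))$ and the elementary bound $|\sinh\mu|\ge c\,e^{|\Re\mu|}$ off the zeros give $|\Delta(\mu^2)|\ge c'\,e^{|\Re\mu|}/|\mu|$ there; therefore $|H(\mu)|\le C''|\mu|$ on these circles, hence on all of $\C$ by the maximum modulus principle, so $H$ is a polynomial of degree $\le 1$ which, being a function of $\mu^2$, must be the constant $H(0)$. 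Finally, from $g=H(0)\Delta$ and the leading asymptotics $g(\mu)=\bigl(\tfrac{1}{\sqrt{f(0)}}-\tfrac{1}{\sqrt{\tilde f(0)}}\bigr)\cosh\mu+O(e^{|\Re\mu|}/\mu)$ versus $\Delta(\mu^2)=\sinh\mu/\mu+O(e^{|\Re\mu|}/\mu^2)$, letting $\mu\to+\infty$ forces $f(0)=\tilde f(0)$; then $H(\mu)=\tfrac{1}{\sqrt{f(0)}}(D(\mu^2)-\tilde D(\mu^2))/\Delta(\mu^2)=-\tfrac{1}{\sqrt{f(0)}}(M(\mu^2)-\tilde M(\mu^2))$, which being constant and finite at $\mu=0$ (since $\Delta(0)\neq 0$ by Remark~\ref{M-WellDefined-2D}) gives $M(\mu^2)-\tilde M(\mu^2)\equiv M(0)-\tilde M(0)$ on $\C\setminus\R$, i.e. (\ref{WTegal}). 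The case $\Gamma_D,\Gamma_N\subset\{1\}\times T^1$ is entirely analogous, with $N,E,\tilde N,\tilde E$ and the normalisation $f(1)$ replacing $M,D,\tilde M,\tilde D$ and $f(0)$.

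The main obstacle I expect is the minimum-modulus estimate $|\Delta(\mu^2)|\gtrsim e^{|\Re\mu|}/|\mu|$ on circles avoiding the zeros of $\Delta(\mu^2)$: one has to locate those zeros accurately enough (they sit near $ik\pi$, only $O(1/k)$ away) to choose the radii $R_n$, and then verify the lower bound uniformly over each circle, where the competition between the genuine $e^{|\Re\mu|}$ growth away from the imaginary axis and the oscillatory factor $\sin(\Im\mu)$ near it must be controlled — this is exactly what compensates the $O(e^{|\Re\mu|}/\mu)$ size of $D(\mu^2)-\tilde D(\mu^2)$ and forces $H$ to be bounded. Everything else is bookkeeping with the asymptotics collected in Section~\ref{Preliminary} and repeated use of the freedom in the choice of $\Psi$.
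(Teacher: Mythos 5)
Your proposal is correct. Part (1) is essentially the paper's argument: evaluate the identity of Lemma \ref{prolongement} at a zero of $\Delta(\mu^2)$, use that $\Delta=W(s_0,s_1)$ and $D=W(c_0,s_1)$ cannot vanish simultaneously (your Wronskian justification is the right one), and pin down the Hadamard constant by the $\sinh\mu/\mu$ asymptotics. Part (2), however, takes a genuinely different route. The paper first matches the residues of $\frac{1}{\sqrt{f(0)}}M(\mu^2)$ and $\frac{1}{\sqrt{\tilde f(0)}}\tilde M(\mu^2)$ at the common poles $\pm\alpha_k$ (by choosing $\hat\Psi(\pm\alpha_k)\neq 0$ and using that $\hat\Phi$ is entire), and then reconstructs $M(\mu^2)-M(0)$ from those residues via a Mittag--Leffler expansion of $(m(\mu)-m(0))/\mu$, obtaining $f(0)=\tilde f(0)$ at the very end from $M(\mu^2)=\mp\mu+O(1)$. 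You instead factor out $\Delta=\tilde\Delta$ from the identity, observe that $g(\mu)=\frac{1}{\sqrt{f(0)}}D(\mu^2)-\frac{1}{\sqrt{\tilde f(0)}}\tilde D(\mu^2)$ vanishes at every (simple) zero of $\Delta(\mu^2)$, and show the entire quotient $H=g/\Delta$ is constant by a Liouville-type bound on circles $|\mu|=(n+\tfrac12)\pi$; the evenness of $H$ then kills the linear term, and $f(0)=\tilde f(0)$ drops out earlier from comparing the $\cosh\mu$ and $\sinh\mu/\mu$ leading terms. The minimum-modulus estimate $|\Delta(\mu^2)|\gtrsim e^{|\Re\mu|}/|\mu|$ off the zeros, which you flag as the main obstacle, is genuine but standard (the zeros sit at $ik\pi+O(1/k)$ by the counting lemma of P\"oschel--Trubowitz, and $|\sinh\mu|\geq c_\delta e^{|\Re\mu|}$ at distance $\delta$ from $i\pi\Z$); note that the paper needs exactly the same lower bound, implicitly, to justify its estimate (\ref{carre}) of $(m(z)-m(0))/z$ on the squares $C_N$, so your approach costs nothing extra there while dispensing with the residue bookkeeping and the partial-fraction expansion.
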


\begin{proof}

First, we recall that the zeros  of $\Delta (\mu^2)$, (resp. $\tilde{\Delta}(\mu^2)$) are simple (see for instance Theorem 2, p. 30 of \cite{PT}). Second, let us show that the zeros  of $\Delta (\mu^2)$  and $\tilde{\Delta}(\mu^2)$) coincide. For instance, assume that $\Delta(\alpha_k^2) =0, \ k \geq 1$. Taking in Lemma \ref{prolongement}, $\Psi \in C_0^{\infty}(-R,R)$ such that $\hat{\Psi}(\alpha_k)=1$, we obtain
$$
  D(\alpha_k^2) \tilde{\Delta}(\alpha_k^2)=0.
$$	
But, by definition, $D(\mu^2)$ and $\Delta(\mu^2)$ cannot vanish at the same time. We deduce then that $\tilde{\Delta}(\alpha_k^2)=0$.

\vspace{0.2cm}
Hence we infer from Corollary \ref{Hadamard} and (\ref{Hadamard-Delta}) that
\begin{equation}
\Delta(\mu^2) = \frac{\Delta(0)}{\tilde{\Delta}(0)} \ \tilde{\Delta}(\mu^2) \ ,\forall \mu \in \C.
\end{equation}
But we know from Corollary \ref{OrderHalf} that $\Delta(\mu^2) \sim \tilde{\Delta}(\mu^2)$ when $\mu \to +\infty$. We deduce that $\Delta(0) = \tilde{\Delta}(0)$ and then $ \Delta(\mu^2) = \tilde{\Delta}(\mu^2)$ for all $\mu \in \C$. This equality will be useful in the next section.

\vspace{0.5cm}
Now, let us prove the assertion $(2)$ of Proposition \ref{equalityWT}. First, using Lemma \ref{prolongement}, we obtain: for all $\mu \in \C \backslash \R$,
\begin{equation}\label{simplifiee}
\hat{\Phi}(\mu) + \left( \frac{1}{\sqrt{f(0)}} M(\mu^2)
- \frac{1}{\sqrt{\tilde{f}(0)}} \tilde{M}(\mu^2)   \right) \hat{\Psi}(\mu) =0
\end{equation}
We set $m(\mu)=M(\mu^2)$, (resp. $\tilde{m}(\mu) =\tilde{M}(\mu^2))$. Clearly, $m$, (resp. $\tilde{m}$) is a meromorphic function with simple poles $\pm \alpha_k$, (we recall that $\alpha_k \not=0$). Moreover, $Res(m; \ -\alpha_k)= -Res(m;\ \alpha_k)$, (resp. $Res(\tilde{m}; \ -\alpha_k)= -Res(\tilde{m};\ \alpha_k)$.
Now, in (\ref{simplifiee}), taking again a function $\Psi$ such that $\hat{\Psi}(\alpha_k)=1$ or $\hat{\Psi}(-\alpha_k)=1$,  and since $\hat{\Phi}(\mu)$ is entire, we see that the residues at $\pm \alpha_k$ of $\frac{1}{\sqrt{f(0)}} m(\mu)$ and $\frac{1}{\sqrt{\tilde{f}(0)}}\tilde{m}(\mu)$ must be the same. So, we have proved that:
\begin{equation}\label{residuegaux}
\frac{1}{\sqrt{f(0)}}  Res(m; \pm \alpha_k) = \frac{1}{\sqrt{\tilde{f}(0)}}  Res(\tilde{m}; \pm \alpha_k).
\end{equation}


\vspace{0.3cm}\noindent
Secondly, we set
\begin{equation}
F(\mu) =\frac{m(\mu)-m(0)}{\mu}.
\end{equation}
Clearly, $F(\mu)$ is a meromorphic function with simple poles at the zeros $\pm \alpha_k$, $k \geq 1$. Moreover, using the asymptotics of $D(\mu^2)$ and $\Delta(\mu^2)$ given in Corollary \ref{OrderHalf}, a standard calculation (see for instance, \cite{SP}, Chapter 7, p. 227) shows there exists a constant $C>0$ such that for all $N \in \N$:
\begin{equation}\label{carre}
\mid F(z) \mid \ \leq \ C, \ \forall z \in C_N,
\end{equation}
where $C_N$ is the square with vertices $z= (\pm 1\pm i) (N+\half)\pi$. We deduce from (\ref{carre}) that
\begin{equation}
\lim_{N \rightarrow + \infty} \int_{C_N} \frac{F(z)}{z(z-\mu)} \ dz = 0,
\end{equation}
where $\mu \not= 0, \ \pm \alpha_k$ is fixed. Then, using the Cauchy's residue theorem, we obtain the well-known Mittag-Leffler's expansion (see \cite{SP}, Chapter 7, p. 210 for details):
\begin{equation}\label{ML1}
F(\mu) = F(0) + \sum_{k =1}^{+\infty} Res (F;\  \alpha_k) \left( \frac{1}{\mu -\alpha_k} + \frac{1}{\alpha_k} \right) +
\sum_{k =1}^{+\infty} Res (F;\ -\alpha_k) \left( \frac{1}{\mu +\alpha_k} - \frac{1}{\alpha_k} \right).
\end{equation}


\noindent
Clearly, $F(0)=0$ since $m(\mu)$ is an even function of $\mu$ and we also have:
\begin{equation}
Res (F; \pm\alpha_k) = \pm \frac{Res(m; \pm \alpha_k)}{\alpha_k}.
\end{equation}
So, we can rewrite (\ref{ML1}) as
\begin{equation}
M(\mu^2) - M(0) =2  \sum_{k=1}^{+\infty} \frac{Res(m; \alpha_k)}{\alpha_k} \frac{\mu^2}{\mu^2 -\alpha_k^2}.
\end{equation}
Thus, with the help of (\ref{residuegaux}), we obtain:
\begin{equation}
\frac{1}{\sqrt{f(0)}} \left( M(\mu^2)-M(0) \right) = \frac{1}{\sqrt{\tilde{f}(0)}} \left( \tilde{M}(\mu^2)-\tilde{M}(0) \right).
\end{equation}
Using the asymptotics of the Weyl-Titchmarsh function $M(\mu^2)$ as $\mu \to \infty$ given in Corollary \ref{OrderHalf}, we get $f(0)= \tilde{f}(0)$ and the Proposition is proved.
\end{proof}


\vspace{0.5cm}
Now, we can finish the proof of Theorem \ref{MainThm-R-2D}. First, let us examine the obvious case $\lambda^2=0$, (which can  probably be done differently).
When $\lambda^2=0$, the WT functions $M(\mu^2)$, (resp. $\tilde{M}(\mu^2)$) do not depend on the metric, (see (\ref{Lambda=0})). So the unique condition we can get from the hypothesis $\Lambda_{g, \Gamma_D, \Gamma_N} = \Lambda_{\tilde{g}, \Gamma_D, \Gamma_N}$ is  $f(0)= \tilde{f}(0)$, which was obtained above. Thus, we have:
$$
  \tilde{g} = c g, \quad c(x) = \frac{\tilde{f}(x)}{f(x)},
$$
with
$$
c(0) = 1.
$$

\vspace{0.3cm}
Now, let us study the more interesting case $\lambda^2\not=0$. Recalling that the WT functions $M(\mu^2)$ and $\tilde{M}(\mu^2)$ are associated to equation (\ref{Eq4-2D}) with potentials $q_\lambda$ and $\tilde{q_\lambda}$, we thus conclude from (\ref{WTegal}) and the Borg-Marchenko Theorem \ref{BM} that
\begin{equation} \label{d6}
  q_\lambda(x) = \tilde{q_\lambda}(x), \quad \forall x \in [0,1].
\end{equation}	
In other words, $f(x) = \tilde{f}(x), \ \forall x \in [0,1]$ and thus $\tilde{g} = g$. This proves the result.
\end{proof}


\subsection{The 2D anisotropic Calderon problem with data measured on distinct connected components.} \label{Calderon-T-2D}

In this Section, we prove our first main Theorem, namely we give a counterexample to uniqueness for the anisotropic Calderon problem, at a nonzero frequency, for Riemannian surfaces with partial data measured on sets belonging to two distinct connected components. Precisely, we prove:

\begin{thm} \label{MainThm-T-2D}
Let $(\S,g)$ and $(\S,\tilde{g})$ two Riemannian surfaces of the form (\ref{Metric-2D}), \textit{i.e.}
$$
	g = f(x) [dx^2 + dy^2], \quad \tilde{g} = \tilde{f}(x) [dx^2 +  dy^2].
$$
We shall add the subscript $\ \tilde{}$ to all the quantities referring to $(\S,\tilde{g})$. Let the frequency $\lambda^2$ be fixed and let
$\Gamma_D$, $\Gamma_N$ be subsets belonging to distinct connected components of $\partial \S$. Then \\

\noindent 1) If $\lambda^2 = 0$ and $\Lambda_{g, \Gamma_D, \Gamma_N} (0)= \Lambda_{\tilde{g}, \Gamma_D, \Gamma_N}(0)$, there exists a function $c > 0$ with $c(0) = 1$ if $\Gamma_D  \subset \{1\}\times T^1$, and $\Gamma_N \subset \{0\}\times T^1$, (resp.  $c(1) = 1$ if $\Gamma_D \subset \{0\}\times T^1$ and $\Gamma_N \subset \{1\}\times T^1$)  such that
$$
	\tilde{g} = c g.
$$

\noindent 2) If $\lambda^2 \ne 0$, there exists an explicit infinite dimensional family of metrics $\tilde{g} = cg$ with $c>0$ and $c(0) = c(1) = 1$ that satisfies
$$
  \Lambda_{g, \Gamma_D, \Gamma_N}(\lambda^2) = \Lambda_{\tilde{g}, \Gamma_D, \Gamma_N}(\lambda^2).
$$	
\end{thm}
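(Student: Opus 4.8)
The plan is to reduce the statement to a one–dimensional inverse problem for the characteristic function $\Delta(\mu^2)$ of Section \ref{Preliminary}, and then to exploit that, unlike the Weyl–Titchmarsh function, $\Delta$ alone underdetermines the potential. Say $\Gamma_D\subset\Gamma_0$ and $\Gamma_N\subset\Gamma_1$ (the case $\Gamma_D\subset\Gamma_1$, $\Gamma_N\subset\Gamma_0$ is identical with $0$ and $1$ interchanged). By the block decomposition (\ref{DN-Partiel-5-2D}), only the off–diagonal block $T_L(\lambda^2)$ enters, and on the Fourier mode $\langle Y_m\rangle$ it is multiplication by $-\frac{1}{\sqrt{f(1)}}\frac{1}{\Delta(m^2)}$. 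Testing the hypothesis against $\Psi\in C_0^\infty(\Gamma_D)$ and reading the output on $\Gamma_N$ gives, for $y\in\Gamma_N$,
$$\sum_{m\in\Z}\frac{1}{\sqrt{f(1)}}\frac{\hat\Psi(m)}{\Delta(m^2)}e^{imy}=\sum_{m\in\Z}\frac{1}{\sqrt{\tilde f(1)}}\frac{\hat\Psi(m)}{\tilde\Delta(m^2)}e^{imy}.$$
From here I would rerun the argument of Lemma \ref{prolongement} and Proposition \ref{equalityWT}, with the base point of $T^1$ chosen in $\Gamma_N$ and away from $\mathrm{supp}\,\Psi$: then $\mathrm{supp}\,\Psi$ and the support of the difference $\Phi$ of the two sides (which vanishes on $\Gamma_N$) both lie in $(-\pi,\pi)$, so by Paley–Wiener $\hat\Psi$ and $\hat\Phi$ extend to entire functions of exponential type $<\pi$; multiplying through by $\Delta(\mu^2)\tilde\Delta(\mu^2)$ and using Corollary \ref{OrderHalf} (whence $\mu\mapsto\Delta(\mu^2),\,\tilde\Delta(\mu^2)$ are of exponential type $1$ and bounded on $i\R$) produces an entire $F$ with $|F(iy)|\le Ce^{B|y|}$, $B<\pi$, vanishing on $\Z$, so $F\equiv0$ by Carlson's theorem. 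Feeding into $F\equiv0$ test functions with $\hat\Psi$ nonzero at the zeros of $\Delta$ and of $\tilde\Delta$ shows these zeros coincide, hence $\Delta\equiv\tilde\Delta$ on $\C$ exactly as in the proof of Proposition \ref{equalityWT} (Hadamard factorization (\ref{Hadamard-Delta}) plus the common asymptotics $\sim\sinh\mu/\mu$), and then $F\equiv0$ forces $f(1)=\tilde f(1)$. Conversely, if $\tilde\Delta(m^2)=\Delta(m^2)$ for all $m\in\Z$ and $\tilde f(1)=f(1)$ (resp. $\tilde f(0)=f(0)$), the two partial DN maps coincide. Thus the problem is governed entirely by the pair $\bigl(\{\Delta(m^2)\}_{m\in\Z},\ \text{value of }f\text{ at the endpoint carrying }\Gamma_N\bigr)$.

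For part (1), $\lambda^2=0$, one has $q_\lambda\equiv0$, so by (\ref{Lambda=0}) $\Delta(\mu^2)=\sinh\mu/\mu$ is independent of the metric. The only surviving constraint from the reduction is then $\tilde f(1)=f(1)$ when $\Gamma_N\subset\Gamma_1$, and $\tilde f(0)=f(0)$ when $\Gamma_N\subset\Gamma_0$. Since $\tilde g$ is of the form (\ref{Metric-2D}) we may write $\tilde g=cg$ with $c=\tilde f/f$ smooth and positive, and the constraint reads $c(1)=1$ (resp. $c(0)=1$), which is the assertion of part (1); conversely any smooth $c>0$ with that single normalization produces the same partial DN map.

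For part (2), $\lambda^2\ne0$, I would impose $c(0)=c(1)=1$, i.e. $\tilde f(0)=f(0)$ and $\tilde f(1)=f(1)$. By the reduction above it then suffices to arrange $\tilde\Delta(m^2)=\Delta(m^2)$ for all $m\in\Z$; since $\mu\mapsto\Delta(\mu^2)-\tilde\Delta(\mu^2)$ lies in the Cartwright class (Corollary \ref{Cart}), Proposition \ref{UniquenessC} makes this equivalent to $\tilde\Delta\equiv\Delta$ on $\C$, i.e. by Hadamard to the statement that $q_\lambda=-\lambda^2 f$ and $\tilde q_\lambda=-\lambda^2\tilde f$ have the same Dirichlet spectrum on $[0,1]$ (and then automatically $\lambda^2\notin\sigma(-\Delta_{\tilde g})$, so $\Lambda_{\tilde g,\Gamma_D,\Gamma_N}(\lambda^2)$ is well defined). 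The whole construction therefore reduces to producing an infinite–dimensional family of potentials Dirichlet–isospectral to $q_\lambda$ on $[0,1]$, sharing the values of $q_\lambda$ at $x=0,1$, and of the same constant sign as $q_\lambda$ — the last condition guaranteeing that $c:=\tilde q_\lambda/q_\lambda=\tilde f/f$ is a genuine smooth positive conformal factor with $c(0)=c(1)=1$. This is exactly supplied by the classical description of isospectral sets (Pöschel–Trubowitz, \cite{PT}): for each eigenvalue index $k$ and each real $t$ close to $1$,
$$\tilde q_\lambda(x)=q_\lambda(x)-2\frac{d^2}{dx^2}\log\Bigl(1+(t-1)\int_0^x\chi_k(s)^2\,ds\Bigr),$$
with $\chi_k$ the $L^2([0,1])$–normalized $k$-th Dirichlet eigenfunction of $-\frac{d^2}{dx^2}+q_\lambda$, is isospectral and smooth, agrees with $q_\lambda$ at $x=0$ and $x=1$ (the correction vanishing there since $\chi_k(0)=\chi_k(1)=0$), and is a $C^0$–small perturbation of $q_\lambda$ for $t$ near $1$, hence of the same sign. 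Setting $\tilde g=cg$ with $c=\tilde q_\lambda/q_\lambda$ yields the desired metrics, non-isometric to $g$ by the gauge/volume argument of the Introduction; letting $k$ run over $\N$ and fixing $t$ gives the countable explicit list (formulae (\ref{Iso3})--(\ref{Iso4})), and superposing several such norming–constant deformations (the isospectral manifold being infinite–dimensional) gives the full infinite–dimensional family.

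The reduction in the first paragraph is essentially a rerun of the Paley–Wiener/Carlson machinery already set up for Theorem \ref{MainThm-R-2D}, so the genuine content is part (2): the observation that on disjoint boundary components the DN map sees only the "transmission coefficient" $1/\Delta(m^2)$ — classically not enough to recover the potential — and the realization of the isospectral deformations \emph{inside} the conformal class $\tilde g=cg$ with $c>0$, $c(0)=c(1)=1$. The one point requiring genuine care is the verification that the Pöschel–Trubowitz deformations preserve the endpoint values of the potential (which they do, because the Dirichlet eigenfunctions vanish at $0$ and $1$) and can be kept sign–definite, so that the resulting $c$ is indeed a smooth positive conformal factor.
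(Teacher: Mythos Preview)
Your proposal is correct and follows essentially the same route as the paper: reduce via the Paley--Wiener/Carlson machinery to the pair of conditions $\Delta\equiv\tilde\Delta$ and equality of $f$ at the $\Gamma_N$-endpoint, then for $\lambda^2\ne0$ invoke the P\"oschel--Trubowitz description of the Dirichlet isospectral class, checking that the deformations preserve the endpoint values (hence $c(0)=c(1)=1$) and the sign of the potential for small parameter. The only cosmetic differences are your choice of which component carries $\Gamma_D$, the parametrization $(t-1)\int_0^x$ in place of the paper's $(e^t-1)\int_x^1$, and your appeal to the Cartwright class where the paper goes directly through Hadamard---none of which constitutes a genuinely different idea.
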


\begin{proof}
Assume for instance that  $\Gamma_D \subset \{1\}\times T^1$ and $\Gamma_N \subset \{0\}\times T^1$. We follow the same strategy as in the previous section and we use  the same notation. We assume that $0 \in \Gamma_D$ and $\pm \pi \in \Gamma_N$ where $T^1$ is identified with $[-\pi,\pi]$.  Thus, for all $\Psi \in C_0^{\infty}(-R,R)$, $R$ small enough, ($R<\pi$), and for all $y \in \Gamma_N$,
\begin{equation}\label{egalitepartial2}
\sum_{m \in \Z} \frac{1}{\sqrt{f(0)}} \frac{1}{\Delta(m^2)} \hat{\Psi}(m) \ e^{imy} = \sum_{m \in \Z} \frac{1}{\sqrt{\tilde{f}(0)}} \frac{1}{\tilde{\Delta}(m^2)} \hat{\Psi}(m) \ e^{imy}.
\end{equation}
Then, repeating exactly the argument of the previous section, we see that
\begin{equation} \label{e2}
  f(0) = \tilde{f}(0),
\end{equation}
and
\begin{equation} \label{e3}
  \alpha_{k}^2 = \tilde{\alpha}_{k}^2, \quad \forall k \geq 1,
\end{equation}
where $(\alpha_{k}^2)_{k \geq 1}$ and $(\tilde{\alpha}_{k}^2)_{k \geq 1}$ are the zeros of $\Delta(\mu^2)$ and $\tilde{\Delta}(\mu^2)$.

\vspace{0.3cm}
Obviously, in the case $\lambda^2=0$, since the characteristic functions $\Delta(\mu^2)$ and $\tilde{\Delta}(\mu^2)$ do not depend on the metrics (see (\ref{Lambda=0})), the unique condition we can get from the hypothesis $\Lambda_{g, \Gamma_D, \Gamma_N} = \Lambda_{\tilde{g}, \Gamma_D, \Gamma_N}$ is
$f(0) = \tilde{f}(0)$. Hence, we have
$$
  \tilde{g} = c g, \quad c(x) = \frac{\tilde{f}(x)}{f(x)},
$$
with
$$
c(0) = 1.
$$
This proves the assertion $(1)$.

\vspace{0.2cm}
Now, let us examine the case $\lambda^2\not=0$. First, we make the following remark : if the conditions (\ref{e2}) and (\ref{e3}) are satisfied, we can prove easily that
$\Lambda_{g, \Gamma_D, \Gamma_N} = \Lambda_{\tilde{g}, \Gamma_D, \Gamma_N}$. Indeed, from the proof of Proposition \ref{equalityWT} and (\ref{e3}), we deduce that
\begin{equation}
\Delta(\mu^2) = \tilde{\Delta}(\mu^2) \ ,\ \forall \mu \in \C.
\end{equation}
It follows that, under the conditions (\ref{e2}) and (\ref{e3}), we have  $T_R^m(\lambda^2) = \tilde{T}_R^m(\lambda^2)$, for all $m \in \Z$, or equivalently $T_R(\lambda^2)= \tilde{T}_R(\lambda^2)$ which is a stronger condition than $\Lambda_{g, \Gamma_D, \Gamma_N} = \Lambda_{\tilde{g}, \Gamma_D, \Gamma_N}$. To summarize, this shows that the hypothesis $\Lambda_{g, \Gamma_D, \Gamma_N} = \Lambda_{\tilde{g}, \Gamma_D, \Gamma_N}$ is equivalent to both conditions (\ref{e2}) and (\ref{e3}).

\vspace{0.2cm}
We would like to emphasize now that the condition (\ref{e3}) is nothing but a condition of isospectrality of the potentials $q_\lambda$ and $\tilde{q}_\lambda$ appearing in equation (\ref{Eq4-2D}). Indeed, the sequences $(\alpha_{k}^2)_{k \geq 1}$ and $(\tilde{\alpha}_{k}^2)_{k \geq 1}$ correspond to "minus" the Dirichlet spectra of the Schr\"odinger operators $-\frac{d^2}{dx^2} + q_\lambda$ and $-\frac{d^2}{dx^2} + \tilde{q}_\lambda$. Hence, condition (\ref{e3}) means exactly that the Dirichlet spectra of $-\frac{d^2}{dx^2} + q_\lambda$ and $-\frac{d^2}{dx^2} + \tilde{q}_\lambda$ coincide, (we recall that the eigenvalues of these Schr\"odinger operators are always simple, (see for instance \cite{PT}, Theorem 2, p 30 or \cite{Ze}, Theorem 4.3.1)). In other words, the potentials $q_\lambda$ and $\tilde{q}_\lambda$ are isospectral.

\vspace{0.2 cm}
Hence, the hypothesis $\Lambda_{g, \Gamma_D, \Gamma_N} = \Lambda_{\tilde{g}, \Gamma_D, \Gamma_N}$ is equivalent to both conditions (\ref{e2}) and $q_\lambda$, $\tilde{q}_\lambda$ are isospectral.

\vspace{0.2cm}
It turns out that the isospectral class in $L^2([0,1])$ associated to a given potential $q$ and equation (\ref{Eq4-2D}) has been the subject of intensive studies. We refer to the beautiful book \cite{PT} for a clear presentation of the results concerning this problem. For instance, the isospectral class in $L^2([0,1])$ of a given potential $q$ has been shown to be a real analytic submanifold of $L^2([0,1])$ lying in the hyperplane of all functions with mean $[q] = \int_0^1 q(s) ds$ (see Theorem 4.1., p 69 in \cite{PT}).

Even better, we have the following explicit characterization of the isospectral class of a given potential $q \in L^2([0,1])$.

\begin{thm}[\cite{PT}, Thm 5.2., p 102] \label{Char-PT-1}
Let $q \in L^2([0,1])$. Denote by $(v_k)_{k \geq 1} = (v_k(x,q))_{k \geq 1}$ the normalized eigenfunctions of the boundary value problem (\ref{Eq4-2D}) - (\ref{BC4-2D}) with the potential $q$. The eigenfunctions $(v_k)_{k \geq 1}$ are associated to the eigenvalues $(\alpha^2_k)_{k \geq 1}$.

Let $\xi \in l^2_1 = \{ \xi = (\xi_1, \xi_2, \dots), \ \sum_{k \geq 1} k^2 |\xi_k|^2 \ < \infty \ \}$. Define the infinite matrix $\Theta(x,\xi,q) = (\theta_{ij}(x,\xi,q))_{i,j \geq 1}$ by
$$
  \theta_{ij}(x,\xi,q) = \delta_{ij} + (e^{\xi_i} - 1) \int_x^1 v_i(s) v_j(s) ds.
$$			
Define also the determinant of $\Theta$ as the limit of the determinants of its principal minors, \textit{i.e.}
$$
  \det \Theta = \lim_{n \to \infty} \det \Theta^n, \quad \Theta^n = (\theta_{ij})_{1 \leq i,j \leq n}.
$$			
Then the isopectral class of $q$ is completely and explicitly described by the family of potentials
$$
  q_{\xi}(x) = q(x) - 2 \frac{d^2}{dx^2} \log \det \Theta(x,\xi,q),
$$
where $\xi \in l^2_1$. It is implicit in the statement of this result that the determinant of $\Theta$ always exists and never vanishes.
\end{thm}

In particular, to each sequence $\xi^k \in l^2_1$ defined by $\xi^k = (\xi^k_j)_{j \geq 1}$ with $\xi^k_j = t\delta_{kj}$, $t \in \R$, we can find a one parameter family of explicit isospectral potentials to $q \in L^2([0,1])$ by
\begin{equation} \label{Iso1}
  q_{k,t}(x) = q(x) - 2 \frac{d^2}{dx^2} \log \theta_{k,t}(x), \quad \quad \forall t \in \R,
\end{equation} 	
where
\begin{equation} \label{Iso2}
  \theta_{k,t}(x) = 1 + (e^t - 1) \int_x^1 v_k^2(s) ds.
\end{equation} 		
Let us make a few remarks on the family of potentials $(q_{k,t})_{k \geq 1, \ t \in \R}$ under the extra assumption that the potential $q$ is smooth on $[0,1]$. Then,

\begin{itemize}
\item First, the potentials $q_{k,t}$ are smooth on $[0,1]$ for all $k \geq 1$ and for all $t \in \R$. Indeed, the eigenfunctions $v_k(x,q)$ are smooth on $[0,1]$ by elliptic regularity. Hence, the functions $\theta_{k,t}$ are also smooth and never vanish on $[0,1]$ for all $k \geq 1$ and for all $t \in \R$ by (\ref{Iso2}). This proves the smoothness of $q_{k,t}$ thanks to  (\ref{Iso1}).

\item Second, for all $k \geq 1$ and for all $t \in \R$, $q_{k,t}(0) = q(0)$ and $q_{k,t}(1) = q(1)$. Indeed, a short calculation shows that
\begin{equation} \label{f1}
  2 \frac{d^2}{dx^2} \log \theta_{k,t}(x) = -2\frac{(e^t - 1) \left[ 1 + (e^t - 1) \int_x^1 v_k^2(s) ds \right] v_k'(x) v_k(x) + (e^t - 1)^2 v_k^4(x)}{\left[ 1 + (e^t - 1) \int_x^1 v_k^2(s) ds \right]^2}.
\end{equation}
Hence, since $v_k(0) = 0$ and $v_k(1) = 0$ by definition, we get the result using (\ref{Iso1}).

\item Third, if moreover $q > 0$ (resp. $q<0$), then for all $k \geq 1$, there exists $T_k > 0$ such that $q_{k,t} >0$ (resp. $q_{k,t} < 0$) for all $-T_k < t < T_k$. Indeed, from (\ref{f1}), it is clear that for a fixed $k \geq 1$, the function $2 \frac{d^2}{dx^2} \log \theta_{k,t}(x)$ can be made arbitrarily small as $t \to 0$ uniformly w.r.t. $x \in [0,1]$. Whence the result thanks to (\ref{Iso1}).
\end{itemize}

We now come back to our initial problem, that is given a frequency $\lambda^2 \ne 0$ and a smooth positive function $f(x)$, find all the \emph{smooth positive} functions $\tilde{f}$ such that (\ref{e2}) and (\ref{e3}) are satisfied. Define the smooth potential $q_\lambda = -\lambda^2 f$ as in (\ref{Eq4-2D}). The potential $q_\lambda$ is either positive or negative according to the sign of $\lambda^2$. Then, as discussed above, (\ref{e3}) is equivalent to finding the smooth positive or negative potentials $\tilde{q}_\lambda = -\lambda^2 \tilde{f}$ that are isospectral to $q_\lambda$. Thus in order to prove our result and according to Theorem \ref{Char-PT-1}, it suffices to show that the intersection of the isospectral class of $q_\lambda$ in $L^2([0,1])$ with the set of smooth positive or negative potentials satisfying (\ref{e2}) is infinite dimensional.

But using the three remarks above, we see that for all $k \geq 1$, there exists $T_k > 0$ such that the potentials $q_{\lambda, k,t}$ given by (\ref{Iso1}) - (\ref{Iso2}) with $q$ replaced by $q_\lambda$ are smooth positive or negative (according to the sign of $\lambda^2$) on $[0,1]$ for all $-T_k < t < T_k$ and satisfy (\ref{e2}). We thus conclude that the infinite dimensional family of metrics (\ref{Metric-2D}) parametrized by the positive functions
\begin{equation} \label{Iso3}
  f_{\lambda,k,t}(x) = f(x) + \frac{2}{\lambda^2} \frac{d^2}{dx^2} \log \theta_{k,t}(x), \quad \forall k \geq 1, \quad -T_k < t < T_k,
\end{equation} 	
with
\begin{equation} \label{Iso4}
  \theta_{k,t}(x) = 1 + (e^t - 1) \int_x^1 v_k^2(s) ds,
\end{equation} 		
where $v_k$ is the normalized eigenfunction of (\ref{Eq4-2D}) - (\ref{BC4-2D}) associated to the eigenvalues $\alpha^2_k$, has the same partial DN map
 $\Lambda_{g, \Gamma_D, \Gamma_N}$ as the metric (\ref{Metric-2D}) associated to $f$. This finishes the proof of the Theorem. 	
\end{proof}


\subsection{The 2D anisotropic Calderon with a potential}

In this Section, we treat the anisotropic Calderon problem \textbf{(Q3)} with a potential $V \in L^\infty(\S)$ such that $V = V(x)$ for our family of metrics (\ref{Metric-2D}). The global DN map $\Lambda_{g,V}(\lambda^2)$ associated to the Dirichlet problem
\begin{equation} \label{Schrodinger-2D}
  \left\{ \begin{array}{cc} (-\Delta_g + V) u = \lambda^2 u, & \textrm{on} \ \S, \\ u = \psi, & \textrm{on} \ \partial \S, \end{array} \right.
\end{equation}
with $\lambda^2$ not belonging to the Dirichlet spectrum of $-\Delta_g + V$ can be constructed in the same way as in Section \ref{2D} since $V = V(x)$ respects the symmetry of $(\S,g)$. On each Fourier modes $Y_m = e^{imy}$, we get the following expression for the induced DN map:
\begin{equation} \label{DN-Partiel-Potentiel-2D}
  \Lambda^{m}_{g,V}(\lambda^2) = \left( \begin{array}{cc} L_V^{m}(\lambda^2) & T^{m}_{R,V}(\lambda^2) \\ T^{m}_{L,V}(\lambda^2) & R_V^{m}(\lambda^2) \end{array} \right) = \left( \begin{array}{cc} - \frac{1}{\sqrt{f(0)}} \, M_V(m^2) &  - \frac{1}{\sqrt{f(0)}} \, \frac{1}{\Delta_V(m^2)} \\ -\frac{1}{\sqrt{f(1)}} \, \frac{1}{\Delta_V(m^2)} &  -\frac{1}{\sqrt{f(1)}} \, N_V(m^2) \end{array} \right),
\end{equation}
where the characteristic and Weyl-Titchmarsh functions $\Delta_V(m^2), M_V(m^2)$ and $N_V(m^2)$ defined by (\ref{Char-2D})-(\ref{WT-2D}) are associated to the radial ODE
\begin{equation} \label{Eq-Schrodinger-2D}
  -v'' + q_{\lambda,V}(x) v = - \mu^2 v, \quad \quad q_{\lambda,V} = (V - \lambda^2) f,
\end{equation}
with boundary conditions
\begin{equation} \label{BC-Schrodinger-2D}
  v(0) = 0, \quad v(1) = 0.
\end{equation}
We also recall the dictionnary between the above coefficients of the DN map and the notations used in the Introduction
$$
  L_V(\lambda^2) = \Lambda_{g,V,\Gamma_0}(\lambda^2), \quad R_V(\lambda^2) = \Lambda_{g,V,\Gamma_1}(\lambda^2),
$$
$$	
	T_{L,V}(\lambda^2) = \Lambda_{g,V,\Gamma_0, \Gamma_1}(\lambda^2), \quad T_{R,V}(\lambda^2) = \Lambda_{g,V,\Gamma_1, \Gamma_0}(\lambda^2),
$$
where $\Gamma_0 = \{0\} \times T^1$ and $\Gamma_1 = \{1\} \times T^1$.

We prove

\begin{thm} \label{MainThm-Schrodinger-2D}
Let $(\S,g)$ a Riemannian surface of the form (\ref{Metric-2D}), \textit{i.e.}
$$
	g = f(x) [dx^2 + dy^2].
$$
Let $V, \tilde{V} \in L^\infty(\S)$ be two potentials that only depend on the variable $x$. Let the frequency $\lambda^2$ be fixed and not belonging to the Dirichlet spectra of $-\Delta_g + V$ and $-\Delta_g + \tilde{V}$. Let $\Gamma_D$, $\Gamma_N$ be nonempty open subsets belonging to $\partial \S$. Then \\

\noindent 1) If $\Gamma_D, \Gamma_N$ belong to the same connected component of $\partial \S$ and $\Lambda_{g, V, \Gamma_D, \Gamma_N}(\lambda^2) = \Lambda_{g, \tilde{V}, \Gamma_D, \Gamma_N}(\lambda^2)$, then
$$
	\tilde{V} = V.
$$

\noindent 2) If $\Gamma_D, \Gamma_N$ belong to distinct connected components of $\partial \S$, then there exists an explicit infinite dimensional family of potentials $\tilde{V}$ that satisfies
$$
  \Lambda_{g, V, \Gamma_D, \Gamma_N}(\lambda^2) = \Lambda_{g, \tilde{V}, \Gamma_D, \Gamma_N}(\lambda^2).
$$	
\end{thm}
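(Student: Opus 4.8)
The plan is to transcribe the two strategies already used for Theorems \ref{MainThm-R-2D} and \ref{MainThm-T-2D}, the only change being that the relevant $1$D problem is now the Schr\"odinger equation (\ref{Eq-Schrodinger-2D}), whose potential $q_{\lambda,V}=(V-\lambda^2)f$ carries the unknown $V$. Note that $q_{\lambda,V}\in L^\infty([0,1])\subset L^1([0,1])$ and, crucially, that $q_{\lambda,V}$ depends genuinely on $V$ even when $\lambda^2=0$; this is exactly why part 1) will hold at every admissible frequency, in contrast with the purely metric situation of Theorem \ref{MainThm-R-2D}. Throughout, since the background metric $g$ (hence $f$) is the same for $V$ and $\tilde V$, the prefactors $1/\sqrt{f(0)}$ and $1/\sqrt{f(1)}$ occurring in (\ref{DN-Partiel-Potentiel-2D}) cancel immediately when comparing the two DN maps, and all potentials are real so that the Borg-Marchenko and P\"oschel-Trubowitz results apply.

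For part 1), I would argue as in the proof of Theorem \ref{MainThm-R-2D}. Assuming first $\Gamma_D,\Gamma_N\subset\Gamma_0$, I would test the equality $\Lambda_{g,V,\Gamma_D,\Gamma_N}(\lambda^2)=\Lambda_{g,\tilde V,\Gamma_D,\Gamma_N}(\lambda^2)$ against Dirichlet data $\Psi\in C_0^\infty$ supported in $\Gamma_D$, read the Neumann data on $\Gamma_N$, extend the Fourier coefficients of $\Psi$ and of the difference to entire functions of exponential type strictly less than $\pi$ by Paley-Wiener, and invoke Carlson's theorem together with Corollary \ref{OrderHalf} to obtain the analogues of Lemma \ref{prolongement} and Proposition \ref{equalityWT}: first $\Delta_V(\mu^2)=\tilde\Delta_{\tilde V}(\mu^2)$ on $\C$, then, after the residue-matching step and the Mittag-Leffler expansion of the proof of Proposition \ref{equalityWT}, $M_V(\mu^2)-M_V(0)=\tilde M_{\tilde V}(\mu^2)-\tilde M_{\tilde V}(0)$ for $\mu^2\in\C\setminus\R$. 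Since $M_V(0)-\tilde M_{\tilde V}(0)$ is a constant, hence an entire function of growth order $0\le\frac12$, the Borg-Marchenko Theorem \ref{BM} applied to $q_{\lambda,V},q_{\lambda,\tilde V}\in L^1([0,1])$ yields $q_{\lambda,V}=q_{\lambda,\tilde V}$ on $[0,1]$, i.e. $(V-\lambda^2)f=(\tilde V-\lambda^2)f$, and since $f>0$ this is $V=\tilde V$. The case $\Gamma_D,\Gamma_N\subset\Gamma_1$ is identical with $N_V$ in place of $M_V$.

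For part 2), I would argue as in the proof of Theorem \ref{MainThm-T-2D}. Say $\Gamma_D\subset\Gamma_1$, $\Gamma_N\subset\Gamma_0$, the other case being symmetric. The relevant entry of (\ref{DN-Partiel-Potentiel-2D}) is then the transmission coefficient $-\frac{1}{\sqrt{f(0)}}\frac{1}{\Delta_V(m^2)}$, and the same Paley-Wiener/Carlson analysis shows that $\Lambda_{g,V,\Gamma_D,\Gamma_N}(\lambda^2)=\Lambda_{g,\tilde V,\Gamma_D,\Gamma_N}(\lambda^2)$ is equivalent to $\Delta_V=\tilde\Delta_{\tilde V}$ on $\C$ (for the converse one uses, as before, that two characteristic functions with the same zeros and the asymptotics of Corollary \ref{OrderHalf} must coincide), i.e. to the Dirichlet-isospectrality of $-\frac{d^2}{dx^2}+q_{\lambda,V}$ and $-\frac{d^2}{dx^2}+q_{\lambda,\tilde V}$. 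It then remains to exhibit an infinite-dimensional family of real $L^\infty$ potentials isospectral to $q_{\lambda,V}$: by the P\"oschel-Trubowitz characterization, Theorem \ref{Char-PT-1}, the potentials $q_{\lambda,V,k,t}=q_{\lambda,V}-2\frac{d^2}{dx^2}\log\theta_{k,t}$ built as in (\ref{Iso1})-(\ref{Iso2}) from the normalized Dirichlet eigenfunctions $v_k$ of $q_{\lambda,V}$ are isospectral to $q_{\lambda,V}$ for all $k\ge1$ and all $t\in\R$, and I would set $\tilde V_{k,t}=V-\frac{2}{f}\frac{d^2}{dx^2}\log\theta_{k,t}$. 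The one genuinely new point to check — which is the main obstacle, the rest being a transcription of the earlier proofs — is that these $\tilde V_{k,t}$ remain in $L^\infty$ when $q_{\lambda,V}$ is only $L^\infty$: since $q_{\lambda,V}\in L^2$ the $v_k$ lie in $H^2([0,1])\subset C^1([0,1])$, hence $\theta_{k,t}$ is $C^1$, bounded below by a positive constant, with $\theta_{k,t}'=-(e^t-1)v_k^2$ and $\theta_{k,t}''=-2(e^t-1)v_kv_k'$ continuous, so $\frac{d^2}{dx^2}\log\theta_{k,t}$ is continuous on $[0,1]$ and $\tilde V_{k,t}\in L^\infty$. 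Contrary to Theorem \ref{MainThm-T-2D}, no positivity constraint is imposed, so $t$ ranges over all of $\R$, and more generally one may take the whole isospectral manifold of Theorem \ref{Char-PT-1} intersected with $L^\infty$.
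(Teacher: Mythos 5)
Your proposal is correct and follows essentially the same route as the paper: reduce to the $1$D problem (\ref{Eq-Schrodinger-2D}), run the Paley--Wiener/Carlson argument of Theorems \ref{MainThm-R-2D} and \ref{MainThm-T-2D} to get equality of Weyl--Titchmarsh functions (up to a constant) resp.\ of characteristic functions, then conclude via Borg--Marchenko in case 1) and via the P\"oschel--Trubowitz isospectral family (\ref{Iso-Schrodinger-2D})--(\ref{Iso-Schrodinger1-2D}) with unrestricted $t\in\R$ in case 2). Your added verification that the potentials $\tilde V_{k,t}$ stay in $L^\infty$ is a small point the paper leaves implicit, but it does not change the argument.
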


\begin{proof}
1) Assume for instance that $\Gamma_D, \Gamma_N \subset \Gamma_0$ and $\Lambda_{g, V, \Gamma_D, \Gamma_N}(\lambda^2) = \Lambda_{g, \tilde{V}, \Gamma_D, \Gamma_N}(\lambda^2)$. Then the same proof as in Theorem \ref{MainThm-R-2D} shows first that
\begin{equation} \label{WT=2D}
  M_{V}(\mu^2) - M_{V}(0)= M_{\tilde{V}}(\mu^2) -M_{\tilde{V}}(0), \quad \forall \mu \in \C \setminus \R.
\end{equation}
Hence the Borg-Marchenko Theorem \ref{BM} entails
\begin{equation} \label{QV=2D}
  q_{\lambda,V} = q_{\lambda, \tilde{V}}, \quad \textrm{on} \ [0,1].
\end{equation}
We conclude using (\ref{QV=2D}) and (\ref{Eq-Schrodinger-2D}) that $V = \tilde{V}$ on $[0,1]$. \\

\noindent 2) Assume that $\Gamma_D$ and $\Gamma_N$ belong to distinct connected components of $\partial \S$ and that
$$
  \Lambda_{g, V, \Gamma_D, \Gamma_N}(\lambda^2) = \Lambda_{g, \tilde{V}, \Gamma_D, \Gamma_N}(\lambda^2).
$$
Then the same proof as in Theorem \ref{MainThm-T-2D} shows first that our hypothesis is equivalent to the unique condition
\begin{equation} \label{Char=2D}
  \Delta_{V}(\mu^2) = \Delta_{\tilde{V}}(\mu^2), \quad \forall \mu \in \C.
\end{equation}
But, as explained in the proof of Theorem \ref{MainThm-T-2D}, this condition is in turn equivalent to the fact that the potentials $q_{\lambda,V}$ and $q_{\lambda,\tilde{V}}$ are isospectral. We deduce thus from \cite{PT} that given a potential $V \in L^\infty(\S)$ as above, there exists an infinite dimensional family of explicit potentials $\tilde{V}$ satisfying (\ref{Char=2D}). More precisely, the family
\begin{equation} \label{Iso-Schrodinger-2D}
  \tilde{V}_{\lambda,k,t}(x) = V(x) - \frac{2}{f(x)} \frac{d^2}{dx^2} \log \theta_{k,t}(x), \quad \forall k \geq 1, \quad t \in \R,
\end{equation} 	
with
\begin{equation} \label{Iso-Schrodinger1-2D}
  \theta_{k,t}(x) = 1 + (e^t - 1) \int_x^1 v_k^2(s) ds,
\end{equation} 		
where $v_k$ is the normalized eigenfunction of (\ref{Eq-Schrodinger-2D}) - (\ref{BC-Schrodinger-2D}) associated to the eigenvalues $\alpha^2_k$ satisfies
$$
  \Lambda_{g, V, \Gamma_D, \Gamma_N}(\lambda^2) = \Lambda_{g, \tilde{V}_{\lambda,n,t}, \Gamma_D, \Gamma_N}(\lambda^2).
$$
Note here that there is no restriction in the range of the parameter $t \in \R$ since we work in the class of potentials $V \in L^{\infty}([0,1])$. In particular, we don't need $\tilde{V}$ to be smooth anylonger.
\end{proof}


\Section{The three dimensional case} \label{3D}

\subsection{The model and the Dirichlet-to-Neumann map} \label{Model-DNmap}

In this Section, we work on a Riemannian manifold $(M,g)$ which has the topology of a cylinder $M = [0,1] \times T^2$ and that is equipped with a Riemannian metric
\begin{equation} \label{Metric}
  g = f(x) dx^2 + f(x) dy^2 + h(x) dz^2,
\end{equation}
where $f, h$ are smooth positive functions on $M$. Here $(x,y,z)$ is a global coordinates system on $M$ and $T^2$ stands for the two dimensional torus. The metric $g$ obviously possesses two commuting Killing vector fields $\partial_y$ and $\partial_z$ that generate a two-parameter Abelian group of isometries on $M$. These isometries will be referred to as the toroidal symmetries of $M$. The boundary $\partial M$ of $M$ is disconnected and consists in the disjoint union of two copies of $T^2$, precisely
$$
  \partial M = \Gamma_0 \cup \Gamma_1, \quad \Gamma_0 = \{0\} \times T^2, \quad \Gamma_1 = \{1\} \times T^2.
$$

In our global coordinate system $(x,y,z)$, the positive Laplace-Beltrami operator has the expression
$$
  -\Delta_g = \frac{1}{f} \left( -\partial_x^2 - \frac{1}{2} (\log h)' \partial_x - \partial_y^2 - \frac{f}{h} \partial_z^2 \right).
$$
We look at the Dirichlet problem at a frequency $\lambda^2$ on $M$ such that $\lambda^2 \notin \{ \lambda_j^2\}_{j \geq 1}$ where $\{ \lambda_j^2\}_{j \geq 1}$ is the Dirichlet spectrum of $-\Delta_g$. We consider thus the solutions $u$ of
\begin{equation} \label{Eq1}
  - \partial_x^2 u - \frac{1}{2} (\log h)' \partial_x u - \partial_y^2 u - \frac{f}{h} \partial_z^2 u - \lambda^2 f u = 0, \quad \textrm{on} \ M,
\end{equation}
together with the boundary condition
\begin{equation} \label{BC1}
  u = \psi \quad \textrm{on} \ \partial M.
\end{equation}

In order to construct the DN map, we shall use the notations introduced in Section \ref{2D}. Recall that the boundary $\partial M$ of $M$ has two disjoint components $\partial M = \Gamma_0 \cup \Gamma_1$ where $\Gamma_0 \simeq \Gamma_1 \simeq T^2$. Hence, we can decompose the Sobolev spaces $H^s(\partial M)$ as $H^s(\partial M) = H^s(\Gamma_0) \oplus H^s(\Gamma_1)$ for any $s \in \R$ and we shall use the vector notation
$$
 \varphi = \left( \begin{array}{c} \varphi^0 \\ \varphi^1 \end{array} \right),
$$
for all elements $\varphi$ of $H^s(\partial M) = H^s(\Gamma_0) \oplus H^s(\Gamma_1)$. The DN map is a linear operator from $H^{1/2}(\partial M)$ to $H^{-1/2}(\partial M)$ and thus has the structure of an operator valued $2 \times 2$ matrix
$$
  \Lambda_g(\lambda^2) = \left( \begin{array}{cc} L(\lambda^2) & T_R(\lambda^2) \\ T_L(\lambda^2) & R(\lambda^2) \end{array} \right),
$$
where $L(\lambda^2), R(\lambda^2), T_R(\lambda^2), T_L(\lambda^2)$ are operators from $H^{1/2}(T^2)$ to $H^{-1/2}(T^2)$. Finally, we recall the dictionary established in Section \ref{2D}
$$
  L(\lambda^2) = \Lambda_{g,\Gamma_0}(\lambda^2), \quad R(\lambda^2) = \Lambda_{g,\Gamma_1}(\lambda^2),
$$
$$	
	T_L(\lambda^2) = \Lambda_{g,\Gamma_0, \Gamma_1}(\lambda^2), \quad T_R(\lambda^2) = \Lambda_{g,\Gamma_1, \Gamma_0}(\lambda^2),
$$
which makes the link with the notations used in the Introduction.

Now we use the toroidal symmetry of $(M,g)$ to find a simple expression of the DN map. We first write $\psi = (\psi^0, \psi^1) \in H^{1/2}(\Gamma_0) \times H^{1/2}(\Gamma_1)$ using their Fourier series as
$$
 \psi^0 = \sum_{m,n \in \Z} \psi^0_{mn} Y_{mn}, \quad \psi^1 = \sum_{m,n \in \Z} \psi^1_{mn} Y_{mn},
$$
where
$$
  Y_{mn}(y,z) = e^{imy} e^{inz}.
$$
Note that for any $s \in \R$, the space $H^{s}(T^2)$ can be described as
$$
  \varphi \in H^{s}(T^2) \ \Longleftrightarrow \ \left\{ \varphi \in \D'(T^2), \ \varphi = \sum_{m,n \in \Z} \varphi_{mn} Y_{mn}, \quad \sum_{m,n \in \Z} (1 + m^2 + n^2)^{s} |\varphi_{mn}|^2 < \infty \ \right\}.
$$
Next, the unique solution $u$ of (\ref{Eq1}) - (\ref{BC1}) can be looked for in the form
$$
  u = \sum_{m,n \in \Z} u_{mn}(x) Y_{mn}(y,z),
$$
and for all $m,n \in \Z$, the function $u_{mn}$ is the unique solution of the ODE (w.r.t. $x$) with boundary conditions
\begin{equation} \label{Eq2}
  \left\{ \begin{array}{c} -u_{mn}'' - \frac{1}{2} (\log h)' u_{mn}' + m^2 u_{mn} + n^2 \frac{f}{h} u_{mn} - \lambda^2 f u_{mn} = 0, \\
    u_{mn}(0) = \psi^0_{mn}, \quad u_{mn}(1) = \psi^1_{mn}. \end{array} \right.
\end{equation}
For later use, we get rid of the term of order $1$ in (\ref{Eq2}) by introducing the new functions
\begin{equation} \label{uTOv}
  v_{mn} = (h)^{1/4} u_{mn},
\end{equation}
which then satisfy
\begin{equation} \label{Eq3-3D}
  \left\{ \begin{array}{c} -v_{mn}'' + \frac{[(\log h)']^2}{16} v_{mn} + \frac{(\log h)''}{4} v_{mn} + m^2 v_{mn} + n^2 \frac{f}{h} v_{mn} - \lambda^2 f v_{mn} = 0, \\  v_{mn}(0) = (h(0))^{1/4} \psi^0_{mn}, \quad v_{mn}(1) = (h(1))^{1/4} \psi^1_{mn}. \end{array} \right.
\end{equation}

The DN map is now diagonalized on the Hilbert basis $\{ Y_{mn} \}_{m,n \in \Z}$ and is shown to have a very simple expression on each harmonic. First, a short calculation using the particular form of the metric (\ref{Metric}) shows that for smooth enough boundary data $\psi$
$$
  \Lambda_g(\lambda^2) \left( \begin{array}{c} \psi^0 \\ \psi^1 \end{array} \right) = \left( \begin{array}{c} \left( \partial_\nu u \right)_{|\Gamma_0} \\ \left( \partial_\nu u \right)_{|\Gamma_1} \end{array} \right) = \left( \begin{array}{c} - \frac{1}{\sqrt{f(0)}} (\partial_x u)_{|x=0} \\ \frac{1}{\sqrt{f(1)}} (\partial_x u)_{|x=1} \end{array} \right).
$$
Hence, if we make the DN map act on the vector space generated by the harmonic $Y_{mn}$, we get
\begin{equation} \label{DN-Partiel-0}
  \Lambda_g(\lambda^2) \left( \begin{array}{c} \psi_{mn}^0 Y_{mn} \\ \psi_{mn}^1 Y_{mn} \end{array} \right) = \left( \begin{array}{c} - \frac{1}{\sqrt{f(0)}} u_{mn}'(0) Y_{mn} \\ \frac{1}{\sqrt{f(1)}} u_{mn}'(1) Y_{mn} \end{array} \right).
\end{equation}
We denote
$$
  \Lambda_g(\lambda^2)_{|<Y_{mn}>} = \Lambda^{mn}_g(\lambda^2) = \left( \begin{array}{cc} L^{mn}(\lambda^2) & T^{mn}_R(\lambda^2) \\ T^{mn}_L(\lambda^2) & R^{mn}(\lambda^2) \end{array} \right),
$$
the restriction of the global DN map to each harmonic $<Y_{mn}>$. This operator has the structure of a $2 \times 2$ matrix and satisfies for all $m,n \in \Z$
\begin{equation} \label{DN-Partiel-1}
  \left( \begin{array}{cc} L^{mn}(\lambda^2) & T^{mn}_R(\lambda^2) \\ T^{mn}_L(\lambda^2) & R^{mn}(\lambda^2) \end{array} \right) \left( \begin{array}{c} \psi_{mn}^0  \\ \psi_{mn}^1  \end{array} \right) = \left( \begin{array}{c} - \frac{1}{\sqrt{f(0)}} u_{mn}'(0) \\ \frac{1}{\sqrt{f(1)}} u_{mn}'(1) \end{array} \right).
\end{equation}
Using the change of functions (\ref{uTOv}), we get the equivalent expression for the action of the partial DN maps $\Lambda^{mn}_g(\lambda^2)$ on vectors $(\psi_{mn}^0, \psi_{mn}^1) \in \C^2$
\begin{equation} \label{DN-Partiel-2}
  \left( \begin{array}{cc} L^{mn}(\lambda^2) & T^{mn}_R(\lambda^2) \\ T^{mn}_L(\lambda^2) & R^{mn}(\lambda^2) \end{array} \right) \left( \begin{array}{c} \psi_{mn}^0  \\ \psi_{mn}^1  \end{array} \right) = \left( \begin{array}{c} -\frac{1}{\sqrt{f(0)}} \left( -\frac{h'(0)}{4 h^{5/4}(0)} v_{mn}(0) + \frac{1}{h^{1/4}(0)} v'_{mn}(0) \right)  \\ \frac{1}{\sqrt{f(1)}} \left( -\frac{h'(1)}{4 h^{5/4}(1)} v_{mn}(1) + \frac{1}{h^{1/4}(1)} v'_{mn}(1) \right) \end{array} \right).
\end{equation}

As in Section \ref{2D}, we can further simplify the partial DN maps $\Lambda^{mn}_g(\lambda^2)$ by interpreting their coefficients as the characteristic and Weyl-Titchmarsh functions of the ODE (\ref{Eq3-3D}) with appropriate boundary conditions. Here is the procedure. First fix $n \in \Z$ and consider the ODE
\begin{equation} \label{Eq4-3D}
  -v'' + q_{\lambda n}(x) v = - \mu^2 v, \quad \quad q_{\lambda n} = \frac{[(\log h)']^2}{16}  + \frac{(\log h)''}{4}  + n^2 \frac{f}{h} - \lambda^2 f,
\end{equation}
with boundary conditions
\begin{equation} \label{BC4-3D}
  v(0) = 0, \quad v(1) = 0.
\end{equation}

Note that the equation (\ref{Eq4-3D}) is nothing but equation (\ref{Eq3-3D}) in which the parameter $-m^2$ is written as $-\mu^2$ and is interpreted as the spectral parameter of the equation. On the other hand, the boundary conditions (\ref{BC1}) have been replaced by Dirichlet boundary conditions at $x = 0$ and $x=1$. Since the potential $q_{\lambda n} \in L^1([0,1])$ and is real, we can define for all $n \in \Z$ and all $\mu \in \C$ the fundamental systems of solutions
$$
  \{ c_0(x,\mu^2,n^2), s_0(x,\mu^2,n^2)\}, \quad \{ c_1(x,\mu^2,n^2), s_1(x,\mu^2,n^2)\},
$$
of (\ref{Eq4-3D}) by imposing the Cauchy conditions
\begin{equation} \label{FSS-3D}
  \left\{ \begin{array}{cccc} c_0(0,\mu^2,n^2) = 1, & c_0'(0,\mu^2,n^2) = 0, & s_0(0,\mu^2,n^2) = 0, & s_0'(0,\mu^2,n^2) = 1, \\
 	                  c_1(1,\mu^2,n^2) = 1, & c'_1(1,\mu^2,n^2) = 0, & s_1(1,\mu^2,n^2) = 0, & s'_1(1,\mu^2,n^2) = 1. \end{array} \right.
\end{equation}
Note that the dependence of the FSS on $\lambda^2$ is not written for clarity but implicit. Also, it is clear that (\ref{Wronskian-FSS}) is satisfied. Finally, the FSS $\{ c_0(x,\mu^2,n^2), s_0(x,\mu^2,n^2)\}$ and $\{ c_1(x,\mu^2,n^2), s_1(x,\mu^2,n^2)\}$ are entire functions with respect to the variable $\mu^2 \in \C$.

Following Section \ref{Preliminary} we thus define the characteristic function of (\ref{Eq4-3D}) with boundary conditions (\ref{BC4-3D}) by
\begin{equation} \label{Char-3D}
  \Delta(\mu^2,n^2) = W(s_0, s_1),
\end{equation}
while we define the Weyl-Titchmarsh functions by
\begin{equation} \label{WT-3D}
  M(\mu^2,n^2) = - \frac{W(c_0, s_1)}{\Delta(\mu^2,n^2)} = - \frac{D(\mu^2,n^2)}{\Delta(\mu^2,n^2)}, \quad N(\mu^2,n^2) = - \frac{W(c_1, s_0)}{\Delta(\mu^2,n^2)} = - \frac{E(\mu^2,n^2)}{\Delta(\mu^2,n^2)}.
\end{equation}

\begin{rem} \label{M-WellDefined-3D}
For all $n \in \Z$, the WT function $\mu^2 \mapsto M(\mu^2,n^2)$ is meromorphic on $\C$ and has an infinite and discrete set of poles $\{ \alpha^2_{nj}\}_{j \geq 1}$ corresponding to "minus" the Dirichlet eigenvalues of $-\frac{d}{dx^2} + q_{\lambda n}$ or equivalently, corresponding to the zeros of the characteristic function $\mu^2 \mapsto \Delta(\mu^2,n^2)$. Let us show that the integers $m^2, \ m \in \Z$ cannot be poles of $\mu^2 \mapsto M(\mu^2,n^2)$ under our general assumption. Assume the converse, \textit{i.e.} there exists $m \in \Z$ such that $m^2$ is a pole of $M(\mu^2,n^2)$. Thus $-m^2$ is a Dirichlet eigenvalue of $-\frac{d}{dx^2} + q_{\lambda n}$ and we denote by $u_{mn}(x)$ the associated eigenfunction solution of (\ref{Eq3-3D}). Then the function $u(x,y) = u_{mn}(x) Y_{mn}(y,z) $ is a nontrivial solution of (\ref{Eq1}) with Dirichlet boundary conditions. We conclude that such a $u$ is an eigenfunction of $-\Delta_g$ associated to the Dirichlet eigenvalue $\lambda^2$. But this case has been ruled out from the beginning since we assume that $\lambda^2$ is not a Dirichlet eigenvalue of $\Delta_g$. Whence the contradiction.
\end{rem}

We now come back to the expression (\ref{DN-Partiel-2}) of the partial DN map $\Lambda^{mn}_g(\lambda^2)$. For all $m,n \in \Z$, we need to express $v_{mn}'(0)$ and $v_{mn}'(1)$ in terms of $\psi^0_{mn}$ and $\psi^1_{mn}$ in order to find the expressions of the coefficients $L^{mn}(\lambda^2), T^{mn}_R(\lambda^2), T^{mn}_L(\lambda^2), R^{mn}(\lambda^2)$. But the solution $v_{mn}$ of (\ref{Eq3-3D}) can be written as linear combinations of the FSS
$$
  \{ c_0(x,m^2,n^2), s_0(x,m^2,n^2)\}, \quad \quad \{ c_1(x,m^2,n^2), s_1(x,m^2,n^2)\}.
$$	
Precisely, we write
$$
  v_{mn}(x) = \alpha \,c_0(x,m^2,n^2) + \beta \,s_0(x,m^2,n^2) = \gamma \,c_1(x,m^2,n^2) + \delta \,s_1(x,m^2,n^2),
$$
for some constants $\alpha,\beta,\gamma,\delta$. Using (\ref{Eq3-3D}) and (\ref{FSS-3D}), we first get
\begin{equation} \label{a1}
  \left( \begin{array}{c} v_{mn}(0) \\ v_{mn}(1) \end{array} \right) = \left( \begin{array}{c} (h(0))^{1/4} \psi^0_{mn} \\ (h(1))^{1/4} \psi^1_{mn} \end{array} \right) = \left( \begin{array}{c} \alpha \\ \gamma \end{array} \right) = \left( \begin{array}{c} \gamma \,c_1(0,m^2,n^2) + \delta \, s_1(0,m^2,n^2),  \\ \alpha \,c_0(1,m^2,n^2) + \beta \,s_0(1,m^2,n^2) \end{array} \right).
\end{equation}
From this we obtain in particular
\begin{equation} \label{a2}
  \left( \begin{array}{c} \beta \\ \delta  \end{array} \right) = \left( \begin{array}{cc} -(h(0))^{1/4} \frac{c_0(1,m^2,n^2)}{s_0(1,m^2,n^2)} & \frac{(h(1))^{1/4}}{s_0(1,m^2,n^2)} \\ \frac{(h(0))^{1/4}}{s_1(0,m^2,n^2)} & -(h(1))^{1/4}\frac{c_1(0,m^2,n^2)}{s_1(0,m^2,n^2)} \end{array} \right) \left( \begin{array}{c} \psi^0_{mn} \\ \psi^1_{mn}  \end{array} \right).
\end{equation}
Also using (\ref{FSS-3D}) and (\ref{a2}), we have
\begin{equation} \label{a3}
  \left( \begin{array}{c} v_{mn}'(0) \\ v_{mn}'(1)  \end{array} \right) = \left( \begin{array}{c} \beta \\ \delta  \end{array} \right) = \left( \begin{array}{cc} -(h(0))^{1/4} \frac{c_0(1,m^2,n^2)}{s_0(1,m^2,n^2)} & \frac{(h(1))^{1/4}}{s_0(1,m^2,n^2)} \\ \frac{(h(0))^{1/4}}{s_1(0,m^2,n^2)} & -(h(1))^{1/4}\frac{c_1(0,m^2,n^2)}{s_1(0,m^2,n^2)} \end{array} \right) \left( \begin{array}{c} \psi^0_{mn} \\ \psi^1_{mn}  \end{array} \right).
\end{equation}
Therefore by (\ref{DN-Partiel-0}) and (\ref{a1}) - (\ref{a3}), we obtain for all $\psi \in H^{1/2}(T^2)$
$$ 
  \Lambda^{mn}_g(\lambda^2) \left( \begin{array}{c} \psi_{mn}^0 \\ \psi_{mn}^1  \end{array} \right) = \left( \begin{array}{cc} \frac{(\log h)'(0)}{4\sqrt{f(0)}} + \frac{1}{\sqrt{f(0)}} \frac{c_0(1,m^2,n^2)}{s_0(1,m^2,n^2)} &  - \frac{1}{\sqrt{f(0)}} \frac{h^{1/4}(1)}{h^{1/4}(0)} \frac{1}{s_0(1,m^2,n^2)} \\ \frac{1}{\sqrt{f(1)}} \frac{h^{1/4}(0)}{h^{1/4}(1)} \frac{1}{s_1(0,m^2,n^2)} &   -\frac{(\log h)'(1)}{4\sqrt{f(1)}} - \frac{1}{\sqrt{f(1)}} \frac{c_1(0,m^2,n^2)}{s_1(0,m^2,n^2)} \end{array} \right) \left( \begin{array}{c} \psi_{mn}^0 \\ \psi_{mn}^1  \end{array} \right),
$$
or equivalently,
$$ 
  \Lambda^{mn}_g(\lambda^2) = \left( \begin{array}{cc} \frac{(\log h)'(0)}{4\sqrt{f(0)}} + \frac{1}{\sqrt{f(0)}} \frac{c_0(1,m^2,n^2)}{s_0(1,m^2,n^2)} &  - \frac{1}{\sqrt{f(0)}} \frac{h^{1/4}(1)}{h^{1/4}(0)} \frac{1}{s_0(1,m^2,n^2)} \\ \frac{1}{\sqrt{f(1)}} \frac{h^{1/4}(0)}{h^{1/4}(1)} \frac{1}{s_1(0,m^2,n^2)} &   -\frac{(\log h)'(1)}{4\sqrt{f(1)}} - \frac{1}{\sqrt{f(1)}} \frac{c_1(0,m^2,n^2)}{s_1(0,m^2,n^2)} \end{array} \right) .
$$
Finally, using (\ref{FSS-3D}) again, we easily show that
\begin{equation} \label{DN-Partiel-5}
  \Lambda^{mn}_g(\lambda^2) = \left( \begin{array}{cc} \frac{(\log h)'(0)}{4\sqrt{f(0)}} - \frac{1}{\sqrt{f(0)}} M(m^2,n^2) &  - \frac{1}{\sqrt{f(0)}} \frac{h^{1/4}(1)}{h^{1/4}(0)} \frac{1}{\Delta(m^2,n^2)} \\ -\frac{1}{\sqrt{f(1)}} \frac{h^{1/4}(0)}{h^{1/4}(1)} \frac{1}{\Delta(m^2,n^2)} &   -\frac{(\log h)'(1)}{4\sqrt{f(1)}} - \frac{1}{\sqrt{f(1)}} N(m^2,n^2) \end{array} \right) .
\end{equation}

\begin{rem}
1. For all $m,n \in \Z$, the characterictic and WT functions $\Delta(m^2,n^2)$ and $M(m^2,n^2)$ are well defined thanks to our assumption stating that $\lambda^2$ is not a Dirichlet eigenvalue of $\Delta_g$ (see Remark \ref{M-WellDefined-3D} for this point). \\
2. There is of course no importance in the choice of the parameters $m$ or $n$ to be fixed. If $m \in \Z$ was to be fixed, we could define the same objects with respect to the parameter $\nu^2 \in \C$ which would replace the parameter $n^2$ in (\ref{Eq4-3D}). \\
\end{rem}

The result above shows that the coefficients of the partial DN map $\Lambda^{mn}_g(\lambda^2)$ can be simply interpreted in terms of the characteristic and Weyl-Titchmarsh functions associated to the ODE (\ref{Eq4-3D}) - (\ref{BC4-3D}). In particular, the partial DN maps $L(\lambda^2)$ and $R(\lambda^2)$ which correspond to the global DN map restricted to $\Gamma_0$ and $\Gamma_1$ respectively, only depend on the Weyl-Titchmarsh functions $M(m^2, n^2)$ (modulo some explicit constants). In fact, in our model, the knowledge of $L(\lambda^2)$ and $R(\lambda^2)$ is clearly equivalent to the knowledge of the sequence of Weyl-Titchmarsh functions $\{M(m^2, n^2)\}_{m,n \in \Z}$ (modulo some constants).

Similarly, the knowledge of the partial DN maps $T_L(\lambda^2)$ and $T_R(\lambda^2)$ which correspond to the global DN map whose data are measured on the disjoint sets $\Gamma_0$ and $\Gamma_1$, is equivalent to the knowledge of the characteristic functions $\Delta(m^2, n^2)$ (modulo some explicit constants).

We finish this Section with a simple result that is the key point of our non-uniqueness results for the anisotropic Calderon problem with data measured on disjoint sets.

\begin{lemma} \label{Uni-Pot-q}
  Consider two potentials $q_{\lambda n}$ and $\tilde{q}_{\lambda n}$ of the form (\ref{Eq4-3D}) associated to two metrics $g$ and $\tilde{g}$ of the form (\ref{Metric}). Assume that
$$ 	
	q_{\lambda n} = \tilde{q}_{\lambda n},
$$	
for at least two different  $n \in \Z$. Then
\begin{equation} \label{Uni-Pot}
  \tilde{f}(x) = c^4(x) f(x), \quad \quad \tilde{h}(x) = c^4(x) h(x),
\end{equation}
where the positive function $c$ satisfies the non-linear ODE
\begin{equation} \label{ODE-c}
  c'' + \frac{1}{2} (\log h)' c' + \lambda^2 f ( c - c^5) = 0.
\end{equation}
If moreover, $\lambda^2 = 0$, then we can solve (\ref{ODE-c}) explicitly and get
\begin{equation} \label{c}
  c(x) = \left( A + B \int_0^x \frac{ds}{\sqrt{h(s)}} \right),
\end{equation}
where $(A,B)$ are any constants such that $c(x) > 0$ for all $x \in [0,1]$.
\end{lemma}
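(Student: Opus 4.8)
The plan is to exploit the fact that the potential $q_{\lambda n}$ in (\ref{Eq4-3D}) depends on the index $n$ only through the single additive term $n^2 f/h$, so that taking the difference of the hypothesis at two values of $n$ isolates the quotient $f/h$.

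First I would choose two integers $n_1\neq n_2$ for which $q_{\lambda n_1}=\tilde q_{\lambda n_1}$ and $q_{\lambda n_2}=\tilde q_{\lambda n_2}$, with $n_1^2\neq n_2^2$ (this is what the hypothesis must supply, since the potential sees $n$ only through $n^2$, and in all our applications the equality holds for every $n\in\Z$). Subtracting the equality at $n_2$ from the one at $n_1$ and using the explicit form (\ref{Eq4-3D}), every $n$-independent term cancels on each side, leaving $(n_1^2-n_2^2)\bigl(\tfrac{f}{h}\bigr)=(n_1^2-n_2^2)\bigl(\tfrac{\tilde f}{\tilde h}\bigr)$, hence $\tilde f/f=\tilde h/h$ on $[0,1]$. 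Since both sides are smooth and positive we may write this common quotient as $c^4$, where $c:=(\tilde f/f)^{1/4}>0$ is smooth; this is precisely (\ref{Uni-Pot}).

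Next I would feed $\tilde f=c^4 f$ and $\tilde h=c^4 h$ back into a single one of the equalities, say $q_{\lambda n_1}=\tilde q_{\lambda n_1}$. Because $\tilde f/\tilde h=f/h$, the $n_1^2$ term drops out, and writing $\log\tilde h=\log h+4\log c$ and $w:=(\log c)'=c'/c$, an expansion of the squares reduces the surviving identity to
\[
  \tfrac12 (\log h)'\,w + w^2 + w' = \lambda^2 f\,(c^4-1).
\]
Using $w^2+w'=c''/c$ and multiplying through by $c>0$ converts this into $c'' + \tfrac12 (\log h)' c' - \lambda^2 f(c^5-c)=0$, which is exactly the ODE (\ref{ODE-c}).

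Finally, when $\lambda^2=0$ the equation (\ref{ODE-c}) becomes the first-order linear equation $(\log|c'|)'=-\tfrac12(\log h)'$ for $c'$, whose solution is $c'=B/\sqrt{h}$ for a constant $B$; one further integration gives $c(x)=A+B\int_0^x ds/\sqrt{h(s)}$, with $A,B$ arbitrary subject only to the positivity requirement $c>0$ on $[0,1]$. The argument is entirely elementary; the only points deserving attention are that one must use two indices with \emph{distinct squares} (the term $n^2 f/h$ being the sole place $n$ enters) and the short but slightly delicate algebra in the substitution step — there is no genuine obstacle.
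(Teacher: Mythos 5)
Your proof is correct and follows essentially the same route as the paper's: subtract the hypothesis at two indices to obtain $f/h=\tilde f/\tilde h$, set $c=(\tilde h/h)^{1/4}$, and substitute back to derive the ODE (\ref{ODE-c}), which becomes first order linear in $c'$ when $\lambda^2=0$. Your explicit computation via $w=(\log c)'$ simply fills in what the paper calls ``a short calculation,'' and your remark that the two indices must have distinct squares is a sensible reading of the hypothesis (shared implicitly by the paper, and satisfied in all applications where the equality holds for every $n\in\Z$).
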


\begin{proof}
Assume that $q_{\lambda n} = \tilde{q}_{\lambda n}$ for two different $n \in \Z$. Then using the definition (\ref{Eq4-3D}) for these two different values of $n$, we immediately get
\begin{equation} \label{x1}
  \left\{ \begin{array}{rcl} \frac{f}{h} & = & \frac{\tilde{f}}{\tilde{h}}, \\
	\frac{(\log h)'^2}{16} + \frac{(\log h)''}{4} - \lambda^2 f & = & \frac{(\log \tilde{h})'^2}{16} + \frac{(\log \tilde{h})''}{4} - \lambda^2 \tilde{f}. \end{array} \right.
\end{equation}
We set $c = \left( \frac{\tilde{h}}{h} \right)^{1/4}$. Hence a short calculation shows that $c$ must satisfy (\ref{ODE-c}). Moreover we get easily (\ref{Uni-Pot}) from (\ref{x1}). Finally, when $\lambda^2 = 0$, the ODE (\ref{ODE-c}) becomes a first order linear ODE in the unknown $c'$ and its solution leads to (\ref{c}).
\end{proof}


\subsection{The 3D anisotropic Calderon problem with data measured on the same connected component} \label{Calderon-R}

In this Section, we solve the anisotropic Calderon problem in the case where the Dirichlet and Neumann data are measured on non-empty open sets  $\Gamma_D$ and $\Gamma_N$ belonging to the same connected component of $\partial M$. Precisely, we prove

\begin{thm} \label{MainThm-R-3D}
Let $(M,g)$ and $(M,\tilde{g})$ denote two Riemannian manifolds of the form (\ref{Metric}), \textit{i.e.}
$$
	g = f(x) dx^2 + f(x) dy^2 + h(x) dz^2, \quad \tilde{g} = \tilde{f}(x) dx^2 + \tilde{f}(x) dy^2 + \tilde{h}(x) dz^2.
$$
We shall add the subscript $\ \tilde{}$ to all the quantities referring to $(M,\tilde{g})$. Let $\lambda^2 \in \R$. Let $\Gamma_D$ and $\Gamma_N$ be non-empty open sets belonging to the same connected component of $\partial M$. Assume that $\Gamma_D \cap \Gamma_N \ne \emptyset$, with $\Gamma_N$ containing an annular region of the type
$$
  (y_0 - \delta, y_0 + \delta) \times T^1, \quad \textrm{or} \quad T^1 \times (z_0 - \delta, z_0 + \delta),
$$
where $\delta > 0$. Assume moreover that $\Lambda_{g, \Gamma_D,\Gamma_N}(\lambda^2) = \Lambda_{\tilde{g}, \Gamma_D,\Gamma_N}(\lambda^2)$. Then
$$
  \tilde{g} = g.
$$	

\end{thm}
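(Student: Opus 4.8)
The plan is to mimic, at the level of each Fourier mode, the strategy of Theorem \ref{MainThm-R-2D}: reduce the partial--data identity to a family of one--dimensional inverse spectral problems (one for each $n\in\Z$), solve each of them by Borg--Marchenko, and then use the overlap $\Gamma_D\cap\Gamma_N\neq\emptyset$ to eliminate the residual constant. Assume without loss of generality $\Gamma_D,\Gamma_N\subset\Gamma_0=\{0\}\times T^2$, and treat first the case where $\Gamma_N$ contains an annular region $I\times T^1$ with $I=(y_0-\delta,y_0+\delta)$, i.e. a \emph{full $z$-circle}; the case $T^1\times(z_0-\delta,z_0+\delta)$ is treated similarly after interchanging $y$ and $z$ and performing a Liouville change of variable $t=\int_0^x\sqrt{f/h}$, which is required because $n^2$ enters the radial equation (\ref{Eq4-3D}) with the non-constant weight $f/h$ while $m^2$ enters with constant coefficient. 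Recall from (\ref{DN-Partiel-5}) that for Dirichlet data $\psi(y,z)=\Psi(y)\Phi(z)$ supported in $\Gamma_D$ (with $\Psi,\Phi$ of compact support in proper subintervals of $T^1$), $(L(\lambda^2)\psi)(y,z)=\sum_{m,n}\big(\kappa_0-\tfrac{1}{\sqrt{f(0)}}M(m^2,n^2)\big)\hat\Psi(m)\hat\Phi(n)e^{imy}e^{inz}$, where $\kappa_0=\tfrac{(\log h)'(0)}{4\sqrt{f(0)}}$ and the quantities $\Delta(m^2,n^2),M(m^2,n^2)$ are well defined thanks to $\lambda^2\notin\sigma(-\Delta_g)$ (Remark \ref{M-WellDefined-3D}). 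Projecting the identity $L(\lambda^2)\psi=\tilde L(\lambda^2)\psi$ restricted to $\Gamma_N\supset I\times T^1$ onto the mode $e^{inz}$ over the full circle (legitimate since $\{y\}\times T^1\subset\Gamma_N$ for $y\in I$), and dividing by $\hat\Phi(n)$ for a $\Phi$ chosen with $\hat\Phi(n)\neq0$, we obtain for each $n$ a function
$$\Theta_n(y)=\sum_{m\in\Z}\Big[\big(\kappa_0-\tfrac{1}{\sqrt{f(0)}}M(m^2,n^2)\big)-\big(\tilde\kappa_0-\tfrac{1}{\sqrt{\tilde f(0)}}\tilde M(m^2,n^2)\big)\Big]\hat\Psi(m)\,e^{imy}$$
which vanishes on $I$; hence $\mathrm{supp}\,\Theta_n\subset T^1\setminus I$ is an interval of length $<2\pi$, and by Paley--Wiener $\hat\Theta_n$ extends to an entire function of exponential type $<\pi$.

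For each fixed $n$ the situation is then exactly that of Lemma \ref{prolongement}: the function
$$F_n(\mu):=\Delta(\mu^2,n^2)\tilde\Delta(\mu^2,n^2)\hat\Theta_n(\mu)-\Big[(\kappa_0-\tilde\kappa_0)\Delta\tilde\Delta+\tfrac{1}{\sqrt{f(0)}}D(\mu^2,n^2)\tilde\Delta(\mu^2,n^2)-\tfrac{1}{\sqrt{\tilde f(0)}}\tilde D(\mu^2,n^2)\Delta(\mu^2,n^2)\Big]\hat\Psi(\mu)$$
is entire of exponential type, satisfies $|F_n(iy)|\leq Ce^{B|y|}$ with $B<\pi$ (by Corollary \ref{OrderHalf} the factors $\Delta,D,\tilde\Delta,\tilde D$ are of order $1/2$ in $\mu^2$ and bounded on the imaginary axis, while $\hat\Theta_n,\hat\Psi$ have type $<\pi$), and $F_n(m)=0$ for all $m\in\Z$ by the relation $M=-D/\Delta$; hence $F_n\equiv0$ by Carlson's theorem (\cite{Boa}). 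Now repeat the proof of Proposition \ref{equalityWT} with $n$ as a spectator: evaluating $F_n\equiv0$ at the zeros of $\Delta(\cdot,n^2)$ and using that $D(\cdot,n^2)$ and $\Delta(\cdot,n^2)$ have no common zero shows these functions and $\tilde\Delta(\cdot,n^2)$ have the same zeros, whence $\Delta(\mu^2,n^2)=\tilde\Delta(\mu^2,n^2)$ by Hadamard factorization (Corollary \ref{Hadamard}) and the $\mu\to\infty$ asymptotics; dividing $F_n\equiv0$ by $\Delta$, the residue computation at the common poles of $M(\cdot,n^2),\tilde M(\cdot,n^2)$ together with the Mittag--Leffler expansion of $(M(\mu^2,n^2)-M(0,n^2))/\mu$ gives $\tfrac{1}{\sqrt{f(0)}}\big(M(\mu^2,n^2)-M(0,n^2)\big)=\tfrac{1}{\sqrt{\tilde f(0)}}\big(\tilde M(\mu^2,n^2)-\tilde M(0,n^2)\big)$, and the asymptotics $M(\mu^2,n^2)=\mp\mu+O(1)$ as $\mu\to\pm\infty$ force $f(0)=\tilde f(0)$ and $M(\mu^2,n^2)-M(0,n^2)=\tilde M(\mu^2,n^2)-\tilde M(0,n^2)$ on $\C\setminus\R$. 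Since a constant is entire of order $0<\tfrac12$, the Borg--Marchenko Theorem \ref{BM} yields $q_{\lambda n}=\tilde q_{\lambda n}$ on $[0,1]$, and choosing $\Phi$ according to $n$ this holds for every $n\in\Z$.

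By Lemma \ref{Uni-Pot-q}, $q_{\lambda n}=\tilde q_{\lambda n}$ for all $n$ forces $\tilde f=c^4f$, $\tilde h=c^4h$ with $c=(\tilde h/h)^{1/4}>0$ solving the ODE (\ref{ODE-c}); moreover $\tilde f(0)=c(0)^4f(0)=f(0)$ gives $c(0)=1$. It remains to prove $c\equiv1$, for which one more condition at $x=0$ is needed, namely $c'(0)=0$. This is where the overlap is used: since the potentials $q_{\lambda n}$ agree for all $n$, so do the associated fundamental systems, hence $\Delta(\mu^2,n^2),D(\mu^2,n^2)$, hence $M(\mu^2,n^2)=\tilde M(\mu^2,n^2)$ for all $\mu$ and $n$; combining this with $f(0)=\tilde f(0)$ in (\ref{DN-Partiel-5}) shows $L^{mn}(\lambda^2)-\tilde L^{mn}(\lambda^2)=\kappa_0-\tilde\kappa_0$ for all $m,n$, i.e. $L(\lambda^2)-\tilde L(\lambda^2)=(\kappa_0-\tilde\kappa_0)\,\mathrm{Id}$ on $\Gamma_0$. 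The hypothesis then says $(\kappa_0-\tilde\kappa_0)\psi$ vanishes on $\Gamma_N$ for every $\psi$ supported in $\Gamma_D$; taking $\psi$ not identically zero on the non-empty open set $\Gamma_D\cap\Gamma_N$ forces $\kappa_0=\tilde\kappa_0$, i.e. $(\log h)'(0)=(\log\tilde h)'(0)$ (using $f(0)=\tilde f(0)$), which with $c(0)=1$ and $\tilde h=c^4h$ is precisely $c'(0)=0$. Since $c\equiv1$ solves (\ref{ODE-c}) with these Cauchy data (and for $\lambda^2=0$ the explicit solution $c(x)=A+B\int_0^x h^{-1/2}$ of Lemma \ref{Uni-Pot-q} forces $A=1,B=0$), uniqueness for the Cauchy problem gives $c\equiv1$ on $[0,1]$, hence $\tilde g=g$.

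The main obstacle is the second type of annular region: there the variable one must complexify is $n$, which enters the radial equation through the non-constant weight $f/h$ rather than as the standard spectral parameter, so one has to normalise this weight by a Liouville transformation and verify that the whole one-dimensional apparatus (order-$1/2$ entire functions, the Cartwright class, Carlson's theorem, Borg--Marchenko) survives it. A secondary technical point is to make precise the termwise $z$-Fourier projection of the Neumann data on the annular region and the uniformity of the exponential-type bounds on the $i\R$--restrictions that feed into Carlson's theorem.
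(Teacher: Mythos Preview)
Your argument is correct and follows essentially the same route as the paper: reduce to a one-parameter family (indexed by $n$) of 1D problems, apply Carlson's theorem in $\mu$ to pass from $m\in\Z$ to $\mu\in\C$, invoke the Proposition~\ref{equalityWT} residue/Mittag--Leffler argument to get $f(0)=\tilde f(0)$ and equality of the WT functions up to a constant, conclude $q_{\lambda n}=\tilde q_{\lambda n}$ via Borg--Marchenko, then use Lemma~\ref{Uni-Pot-q} together with the overlap $\Gamma_D\cap\Gamma_N\neq\emptyset$ to force $c(0)=1$, $c'(0)=0$ and hence $c\equiv 1$. The only cosmetic difference is that you separate variables ($\psi=\Psi(y)\Phi(z)$) and project onto the $z$-mode using the full $z$-circle in the annulus, whereas the paper keeps a general $\psi$ and quotes the two-variable Paley--Wiener theorem before fixing $n$; both lead to the same entire function $F_n(\mu)$ and the same Carlson step.

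Your closing remark about the second type of annular region $T^1\times(z_0-\delta,z_0+\delta)$ is well taken: the paper simply asserts ``without loss of generality'' and treats only the first type, but since $n^2$ enters (\ref{Eq4-3D}) through the weight $f/h$ rather than as a pure spectral parameter, the roles of $m$ and $n$ are genuinely asymmetric and a Liouville change of variable (or an analogous analysis with $m$ fixed) is indeed needed to make the reduction go through in that case.
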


To prove this Theorem, we shall use the Complex Angular Momentum (CAM) method with respect to the parameter $m^2$, that is to say that we shall allow $m^2$ to be complex and use the beautiful analytic properties of the different objects related to the DN map (such as the characteristic function $\Delta(m^2,n^2)$ and the Weyl-Titchmarsh functions $M(m^2,n^2)$ and $N(m^2,n^2)$. In what follows, we shall use freely the results recalled in Section \ref{Preliminary} on the functions $\Delta(\mu^2, n^2), D(\mu^2, n^2), E(\mu^2, n^2), M(\mu^2, n^2), \N(\mu^2, n^2)$ with $\mu \in \C$ and for fixed $n \in \Z$.

\begin{proof}
We assume that $\Gamma_D, \Gamma_N \subset \Gamma_0 = \{0\} \times T^2$. The proof for $\Gamma_D, \Gamma_N \subset \Gamma_1 = \{1\} \times T^2 $ is the same and we omit it. \\

\noindent Assume also (without loss of generality) that $\Gamma_N$ contains the annular open set
$$
  ([-\pi, -\pi + \delta] \times T^1) \cup ( \pi - \delta, \pi] \times T^1),
$$
for a small $\delta > 0$ and that $\Lambda_{g, \Gamma_D,\Gamma_N}(\lambda^2) = \Lambda_{\tilde{g}, \Gamma_D,\Gamma_N}(\lambda^2)$. According to (\ref{DN-Partiel-5}) and the discussion after it, this assumption is equivalent to
\begin{equation} \label{b1}
  \Phi(y,z) := \sum_{m,n} \left[ A - \frac{M(m^2,n^2)}{\sqrt{f(0)}} - \tilde{A} + \frac{\tilde{M}(m^2,n^2)}{\sqrt{\tilde{f}(0)}} \right] \hat{\psi}(m,n) e^{imy + inz} = 0,
\end{equation}
for all $(y,z) \in \Gamma_N$ and for all $\psi \in H^{1/2}(T^2)$ with supp $\,\psi \subset \Gamma_D$. Here $\hat{\psi}(m,n)$ denotes the Fourier coefficients of $\psi$ and
$$
  A = \frac{(\log h)'(0)}{4\sqrt{f(0)}}, \quad \tilde{A} = \frac{(\log \tilde{h})'(0)}{4\sqrt{\tilde{f}(0)}}.
$$

Let us assume from now on that $\psi$ is smooth enough (say $\psi \in H^1(T^1)$) and that supp $\,\psi \subset \Gamma_D \cap [- r, r] \times T^1$ where $r < \pi$. This is always possible since $\Gamma_D$ is open in $T^2$ and up to consider $\psi$ with smaller support than $\Gamma_D$. Then we can extract several informations from (\ref{b1}).

First the functions $\Phi$ and $\psi$ are in $L^2(T^2)$. Second, supp $\,\Phi \subset [-R,R] \times T^1$ with $R < \pi$ and supp $\,\psi \subset [- r, r] \times T^1$ where $r < \pi$. We thus conclude from the multivariable Paley-Wiener Theorem (see \cite{Ho1}, Theorem 7.3.1, p. 181) that the Fourier transforms of $\Phi$ and $\psi$ are entire functions on $\C^2$ that satisfy the estimates
\begin{equation} \label{b2}
  \left\{ \begin{array}{c} |\hat{\psi}(\mu,\nu)| \leq Ce^{r |\Im(\mu)| + \pi |\Im(\nu)|}, \\
  |\hat{\Phi}(\mu,\nu)| \leq Ce^{R |\Im(\mu)| + \pi |\Im(\nu)|}, \end{array} \right. \quad \forall (\mu,\nu) \in \C^2.
\end{equation}
Third, we can check directly that for all $m,n \in \Z$
\begin{equation} \label{b3}
  \hat{\Phi}(m,n) = \left[ A - \frac{M(m^2,n^2)}{\sqrt{f(0)}} - \tilde{A} + \frac{\tilde{M}(m^2,n^2)}{\sqrt{\tilde{f}(0)}} \right] \hat{\psi}(m,n).
\end{equation}
Recalling the definition (\ref{WT-3D}) of the Wey-Titchmarsh functions $M$ and $\tilde{M}$, this latter equality can be rewritten as
\begin{equation} \label{b4}
  \Delta(m^2,n^2) \tilde{\Delta}(m^2,n^2) \hat{\Phi}(m,n) - \left[ A - \frac{D(m^2,n^2) \tilde{\Delta}(m^2,n^2)}{\sqrt{f(0)}}  - \tilde{A} + \frac{\tilde{D}(m^2,n^2) \Delta(m^2,n^2)}{\sqrt{\tilde{f}(0)}} \right] \hat{\psi}(m,n)  = 0,
\end{equation}
for all $m,n \in \Z$.

Let us fix $n \in \Z$. We denote by $F(\mu,n)$ the function in the left-hand-side of (\ref{b4}) where $m^2$ is replaced by $\mu$. Clearly, $F$ is entire of order $1$  with respect to $\mu$ and satisfies the estimate on the imaginary axis:
$$
  |F(iy,n)| \leq C e^{\max(r,R) |y|}, \quad \forall y \in \R,
$$
thanks to Corollary \ref{OrderHalf} and (\ref{b2}). Moreover, $F$ vanishes on the integers by (\ref{b3}). Since $\max(r,R) < \pi$, we deduce from Carlson's Theorem (see \cite{Boa}, Theorem 9.2.1) that $F(\mu,n) = 0$ for all $\mu \in \C, \ n \in \Z$. This can be written equivalently as
\begin{equation} \label{b5}
  \hat{\Phi}(\mu,n) = \left[ A - \frac{M(\mu^2,n^2)}{\sqrt{f(0)}} - \tilde{A} + \frac{\tilde{M}(\mu^2,n^2)}{\sqrt{\tilde{f}(0)}} \right] \hat{\psi}(\mu,n), \quad \forall \mu \in \C, \ n \in \Z.
\end{equation}
Since the function $\hat{\Phi}(\mu,n)$ is entire w.r.t. $\mu$, the function $\left[ A - \frac{M(\mu^2,n^2)}{\sqrt{f(0)}} - \tilde{A} + \frac{\tilde{M}(\mu^2,n^2)}{\sqrt{\tilde{f}(0)}} \right] \hat{\psi}(\mu,n)$ must also be entire in $\mu$ for all $\psi$ smooth enough such that supp $\,\psi \subset \Gamma_D \cap [- r, r] \times T^1$ where $r < \pi$. In other worlds, the poles of $\frac{M(\mu^2,n^2)}{\sqrt{f(0)}}$ and $\frac{\tilde{M}(\mu^2,n^2)}{\sqrt{\tilde{f}(0)}}$ as well their residues must coincide. In fact, an easy adaption then of the argument given in the proof of Proposition \ref{equalityWT} shows that (\ref{b5}) implies

\begin{equation}\label{b6}
\Delta (\mu^2,n^2) = \tilde{\Delta}(\mu^2,n^2), \quad \forall \mu \in \C, \ n \in \Z.
\end{equation}
\begin{equation} \label{b7}
M (\mu^2, n^2)- M(0,n^2) = \tilde{M}(\mu^2,n^2) - \tilde{M}(0,n^2), \quad \forall \mu \in \C \setminus \R, \ n \in \Z,
\end{equation}
and
\begin{equation} \label{b8}
f(0) = \tilde{f}(0).
\end{equation}

We can now finish the proof as follows. From (\ref{b7}) and the Borg-Marchenko Theorem \ref{BM}, we deduce first that
\begin{equation} \label{b9}
  q_{\lambda n}(x) = \tilde{q}_{\lambda n}(x), \quad \forall x \in [0,1], \ n \in \Z,
\end{equation}
which the main assumption of Lemma \ref{Uni-Pot-q}. In particular, we know that $\tilde{f} = c^4 f$ and $\tilde{h} = c^4 h$ where $c$ is a function satisfying the ODE (\ref{ODE-c}).

But the equality (\ref{b9}) in turn implies that $M (\mu^2, n^2) = \tilde{M}(\mu^2,n^2)$ for all $\mu \in \C, \ n \in \Z$. Putting this in (\ref{b1}), we get
\begin{equation}\label{b10}
  (A - \tilde{A}) \psi(y,z) = 0, \quad \forall (y,z) \in \Gamma_N,
\end{equation}
and for all $\psi$ such that supp $\,\psi \subset \Gamma_D$. Since $\Gamma_D \cap \Gamma_N \ne \emptyset$, we conclude from (\ref{b10}) that
\begin{equation}\label{b11}
  A = \tilde{A}.
\end{equation}
Together with (\ref{b8}), (\ref{b11}) implies that the function $c = \left( \frac{\tilde{h}}{h} \right)^{1/4}$ satisfies the Cauchy conditions
\begin{equation}\label{b12}
  c(0) = 1, \quad c'(0) = 0.
\end{equation}
Hence, we deduce from (\ref{b12}) that $c = 1$ is the unique solution of the ODE (\ref{ODE-c}). In other words, we have proved that $ \tilde{f} = f$ and $\tilde{h} = h$ or equivalently, $\tilde{g} = g$ which concludes the proof of the Theorem.
\end{proof}

We now state our first non-uniqueness result for  three dimensional  Riemannian manifolds.

\begin{thm} \label{MainThm-NonUniqueness-R-3D}
Let $(M,g)$ and $(M,\tilde{g})$ denote Riemannian manifolds of the form (\ref{Metric}), \textit{i.e.}
$$
	g = f(x) dx^2 + f(x) dy^2 + h(x) dz^2, \quad \tilde{g} = \tilde{f}(x) dx^2 + \tilde{f}(x) dy^2 + \tilde{h}(x) dz^2.
$$
Let $\lambda^2 \in \R$. Let $\Gamma_D$ and $\Gamma_N$ be non-empty open sets belonging to the same connected component of $\partial M$. Assume that $\Gamma_D \cap \Gamma_N = \emptyset$. Then there exists infinitely many pairs of non-isometric metrics $(g, \tilde{g})$ given by $\tilde{g} = c^4 g$ where $c$ are smooth positive strictly increasing or decreasing functions such that $c(0) = 1$ if $\Gamma_D, \Gamma_N \subset \Gamma_0$ and $c(1) = 1$ if $\Gamma_D, \Gamma_N \subset \Gamma_1$, satisfying
$$
  \Lambda_{g, \Gamma_D,\Gamma_N}(\lambda^2) = \Lambda_{\tilde{g}, \Gamma_D,\Gamma_N}(\lambda^2).
$$		
\end{thm}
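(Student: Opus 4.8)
The plan is to run Lemma~\ref{Uni-Pot-q} backwards. Fix a smooth positive solution $c$ of the nonlinear ODE~(\ref{ODE-c}) with $c(0)=1$ (treating the case $\Gamma_D,\Gamma_N\subset\Gamma_0$; the case $\Gamma_D,\Gamma_N\subset\Gamma_1$ is symmetric, with the normalization $c(1)=1$), and set $\tilde g=c^4g$, i.e. $\tilde f=c^4f$, $\tilde h=c^4h$. The feature to exploit is that, because $\Gamma_D$ and $\Gamma_N$ are \emph{disjoint}, the only obstruction that would normally force $c$ to be trivial --- the boundary constant $A=\frac{(\log h)'(0)}{4\sqrt{f(0)}}$, which records $c'(0)$ --- drops out when the Neumann data are read on $\Gamma_N$, so that only the characteristic and Weyl--Titchmarsh functions of the radial problems are seen, and those can be kept unchanged.

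The first step is to check that $\tilde f=c^4f$, $\tilde h=c^4h$ with $c$ solving~(\ref{ODE-c}) forces $q_{\lambda n}=\tilde q_{\lambda n}$ on $[0,1]$ for \emph{every} $n\in\Z$. This is exactly the computation of Lemma~\ref{Uni-Pot-q} read in reverse: since $\tilde f/\tilde h=f/h$ the $n^2$--term in~(\ref{Eq4-3D}) is untouched, and using $(\log\tilde h)'=(\log h)'+4c'/c$ the identity $\tilde q_{\lambda n}=q_{\lambda n}$ reduces, after multiplication by $c$, precisely to~(\ref{ODE-c}). Hence the two ODEs~(\ref{Eq4-3D}) share the same potential, so the same fundamental systems of solutions, and therefore $\Delta(\mu^2,n^2)=\tilde\Delta(\mu^2,n^2)$, $M(\mu^2,n^2)=\tilde M(\mu^2,n^2)$, $N(\mu^2,n^2)=\tilde N(\mu^2,n^2)$ for all $\mu\in\C$, $n\in\Z$. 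Plugging this into~(\ref{DN-Partiel-5}) and using that $c(0)=1$ gives $f(0)=\tilde f(0)$, one finds that the $(1,1)$--entries satisfy $L^{mn}(\lambda^2)-\tilde L^{mn}(\lambda^2)=A-\tilde A$, a constant independent of $(m,n)$; equivalently $L(\lambda^2)-\tilde L(\lambda^2)$ acts on $H^{1/2}(T^2)$ as multiplication by $A-\tilde A$. For $\Psi\in H^{1/2}(\partial M)$ with $\mathrm{supp}\,\Psi\subset\Gamma_D$, the solution of~(\ref{Eq1})--(\ref{BC1}) relevant to $\Lambda_{g,\Gamma_D,\Gamma_N}(\lambda^2)$ has boundary data $(\Psi,0)$, so
\begin{equation*}
\big(\Lambda_{g,\Gamma_D,\Gamma_N}(\lambda^2)-\Lambda_{\tilde g,\Gamma_D,\Gamma_N}(\lambda^2)\big)\Psi=\big((L(\lambda^2)-\tilde L(\lambda^2))\Psi\big)\big|_{\Gamma_N}=(A-\tilde A)\,\Psi\big|_{\Gamma_N}=0,
\end{equation*}
the last equality because $\mathrm{supp}\,\Psi\subset\Gamma_D$ and $\Gamma_D\cap\Gamma_N=\emptyset$. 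Thus $\Lambda_{g,\Gamma_D,\Gamma_N}(\lambda^2)=\Lambda_{\tilde g,\Gamma_D,\Gamma_N}(\lambda^2)$, whatever the value of $c'(0)$ (i.e. of $A-\tilde A$).

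It then remains to produce infinitely many smooth positive \emph{strictly monotone} solutions $c\not\equiv1$ of~(\ref{ODE-c}) with $c(0)=1$, and to check that $(M,g)$ and $(M,\tilde g)$ are not isometric. When $\lambda^2=0$ the first point is immediate from~(\ref{c}): take $c(x)=1+B\int_0^x h(s)^{-1/2}\,ds$ with $B\ne0$ small; then $c>0$ on $[0,1]$, $c'=Bh^{-1/2}$ is sign-definite, and $c\not\equiv1$. For $\lambda^2\ne0$ I would use the solutions with $c(0)=1$, $c'(0)=\varepsilon$, $\varepsilon\ne0$ small: by continuous dependence on initial data they exist on $[0,1]$, are positive and $C^1$--close to $c\equiv1$, and their monotonicity is governed by the linearized equation $(h^{1/2}w')'=4\lambda^2 f h^{1/2}w$ with $w(0)=0$, $w'(0)=1$, which keeps $w'>0$ on $[0,1]$ when $\lambda^2\ge0$, when $\lambda^2<0$ is small in modulus, or --- and this is where one uses that the statement concerns \emph{pairs} $(g,\tilde g)$ --- after first choosing $g$ (for instance $h$ with $(\log h)'$ large and positive) so that the first-order term dominates. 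Finally, strict monotonicity with $c(0)=1$ gives $c>1$ on $(0,1]$ or $c<1$ on $(0,1]$, and since in dimension three $\sqrt{|\tilde g|}=c^6\sqrt{|g|}$, we get $\mathrm{Vol}_{\tilde g}(M)\ne\mathrm{Vol}_g(M)$, so $(M,g)$ and $(M,\tilde g)$ are not isometric and, a fortiori, $\tilde g\ne\phi^*g$ for any diffeomorphism $\phi$ fixing $\Gamma_D\cup\Gamma_N$.

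I expect the genuine difficulty to lie precisely in this last step: for a \emph{fixed} $\lambda^2<0$ of large modulus, the solutions of~(\ref{ODE-c}) that start near $c\equiv1$ oscillate about $1$ and need not be monotone on the whole interval $[0,1]$. This is exactly why the theorem is phrased in terms of pairs $(g,\tilde g)$, which leaves the freedom to choose a convenient background metric $g$ (equivalently, a convenient function $h$) guaranteeing that~(\ref{ODE-c}) admits a sign-definite, strictly monotone solution on $[0,1]$ with $c(0)=1$; everything else in the argument is the formal verification sketched above.
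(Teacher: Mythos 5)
Your core mechanism is exactly the paper's: reverse Lemma \ref{Uni-Pot-q} to get $q_{\lambda n}=\tilde q_{\lambda n}$ for all $n$, hence equality of all the Weyl--Titchmarsh functions, observe via (\ref{DN-Partiel-5}) that the only surviving discrepancy in the $(1,1)$--entry is the constant $A-\tilde A$ acting as a multiplication operator, and kill it using $\mathrm{supp}\,\psi\subset\Gamma_D$ with $\Gamma_D\cap\Gamma_N=\emptyset$; the volume comparison $\sqrt{|\tilde g|}=c^6\sqrt{|g|}$ for non-isometry is also the paper's. Where you diverge is in how the pairs are manufactured. You fix $g$ and try to solve the nonlinear ODE (\ref{ODE-c}) forward for $c$, which forces you into global-existence and monotonicity questions; the paper instead fixes $f$ and an \emph{arbitrary} smooth positive strictly monotone $c$ with $c(0)=1$ and then \emph{defines} $h$ by the explicit formula (\ref{b13}), $h = C\exp\bigl(-2\int_0^x \frac{c''+\lambda^2 f\,(c-c^5)}{c'}\,ds\bigr)$, which is legitimate precisely because $c'\neq0$, and which makes (\ref{ODE-c}) hold by construction. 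This sidesteps entirely the issues you wrestle with at the end.

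Your own construction is complete for $\lambda^2=0$ (formula (\ref{c})) and essentially complete for $\lambda^2>0$ via the perturbative argument around $c\equiv1$ (the linearized solution has $w'>0$ since $(h^{1/2}w')'\geq0$). The soft spot is $\lambda^2<0$: your suggested fix, taking $(\log h)'$ large and positive so that "the first-order term dominates", is not clearly justified --- integrating the linearized equation gives $w'(x)=h(x)^{-1/2}\bigl(h(0)^{1/2}+4\lambda^2\int_0^x f h^{1/2}w\bigr)$, and inflating $h$ also inflates the negative integral term, so positivity of $w'$ is not obviously restored. A cleaner repair within your framework is to shrink $f$ (which you are free to do, since only pairs are claimed), so that $4|\lambda^2|\int_0^1 f h^{1/2}w<h(0)^{1/2}$. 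But the paper's backward construction (\ref{b13}) is the systematic version of "choose $g$ conveniently" and removes the issue for every $\lambda^2$ at once; you should either adopt it or make the smallness-of-$f$ argument explicit.
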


\begin{proof}
Let us assume that $\Gamma_D, \Gamma_N \subset \Gamma_0 = \{0\} \times T^1$. We construct pairs of metrics $(g, \tilde{g})$ satisfying $\Lambda_{g, \Gamma_D,\Gamma_N}(\lambda^2) = \Lambda_{\tilde{g}, \Gamma_D,\Gamma_N}(\lambda^2)$ as follows. Let $f$ and $c$ be any smooth positive function on $[0,1]$ such that $c(0) = 1$ and $c'(x) \ne 0$ for all $x \in [0,1]$. In other words, $c(x)$ is a strictly monotonic function. Define
\begin{equation} \label{b13}
  h = C e^{-2 \int_0^x \frac{c''(s) + \lambda^2 f(s) (c(s) - c^5(s))}{c'(s)} ds}, \quad \tilde{f} = c^4 f, \quad \tilde{h} = c^4 h.
\end{equation}
Clearly, we have then $\tilde{g} = c^4 g$. Using Lemma \ref{Uni-Pot-q}, it is immediate to check that $q_{\lambda n} = \tilde{q}_{\lambda n}$ for all $n \in \Z$ where $q_{\lambda n}$ and $\tilde{q}_{\lambda n}$ are given by (\ref{Eq4-3D}). In particular, for such choices of metrics $(g,\tilde{g})$, we always have $M(m^2,n^2) = \tilde{M}(m^2,n^2)$ for all $m,n\in \Z$. Moreover, our assumption $c(0) = 1$ implies that $f(0) = \tilde{f}(0)$.

Now, using (\ref{DN-Partiel-5}) and the discussion after it, we see that for all $\psi \in H^{1/2}(T^2)$ with supp $\,\psi \subset \Gamma_D$, we have
\begin{eqnarray}
  \Lambda_{g, \Gamma_D,\Gamma_N}(\lambda^2)(\psi) & :=  \left[ \ds\sum_{m,n} \left( \frac{(\log h)'(0)}{4\sqrt{f(0)}} - \frac{M(m^2,n^2)}{\sqrt{f(0)}} \right) \hat{\psi}(m,n) e^{imy + inz} \right]_{| \,(y,z) \in \Gamma_N}, \\
	         & = \left[ \frac{(\log h)'(0)}{4\sqrt{f(0)}} \psi(y,z) -  \ds\sum_{m,n} \frac{M(m^2,n^2)}{\sqrt{f(0)}} \hat{\psi}(m,n) e^{imy + inz} \right]_{| \,(y,z) \in \Gamma_N}.
\end{eqnarray}
But since $\Gamma_D \cap \Gamma_N = \emptyset$, we thus deduce that
$$
  \frac{(\log h)'(0)}{4\sqrt{f(0)}} \psi(y,z) = 0,
$$
for all $(y,z) \in \Gamma_N$. Hence we obtain
\begin{eqnarray}
  \Lambda_{g, \Gamma_D,\Gamma_N}(\lambda^2)(\psi) & = - \left[ \ds\sum_{m,n} \frac{M(m^2,n^2)}{\sqrt{f(0)}} \hat{\psi}(m,n) e^{imy + inz} \right]_{| \,(y,z) \in \Gamma_N}, \\
	     & = - \left[ \ds\sum_{m,n} \frac{\tilde{M}(m^2,n^2)}{\sqrt{\tilde{f}(0)}} \hat{\psi}(m,n) e^{imy + inz} \right]_{| \,(y,z) \in \Gamma_N}, \\
			 & = \Lambda_{\tilde{g}, \Gamma_D,\Gamma_N}(\lambda^2)(\psi), \hspace{4.8cm}
\end{eqnarray}
for all $\psi \in H^{1/2}(T^2)$ with supp $\,\psi \subset \Gamma_D$. This proves the result.
\end{proof}


\begin{rem}
  If we fix a Riemannian manifold $(M,g)$ of the form (\ref{Metric}), we don't know a priori whether there exists a metric $\tilde{g}$ such that $\Lambda_{g, \Gamma_D,\Gamma_N}(\lambda^2) = \Lambda_{\tilde{g}, \Gamma_D,\Gamma_N}(\lambda^2)$ with $\Gamma_D \cap \Gamma_N = \emptyset$ except if the original metric $g$ has the form
$$
	g = f(x) dx^2 + f(x) dy^2 + h(x) dz^2,
$$
where $h$ given by (\ref{b13}). To be able to treat the general case, that is the case in which $f,h$ are any smooth positive functions on $[0,1]$, would require to know whether the non-linear ODE
\begin{equation} \label{b14}
  c'' + \frac{1}{2} (\log h)' c' + \lambda^2 f ( c - c^5) = 0,
\end{equation}
has global solutions on $[0,1]$ satisfying $c(0) = 1$. It is not difficult to see that this is not the case: we can easily construct a solution $c(x)$ of (\ref{b14}) with $h \equiv 1$ with the asymptotics $c(x) \sim (1-x)^{-\half}$ when $x \rightarrow 1^-$.
\end{rem}

In the case of zero frequency $\lambda^2 = 0$, we can do a little better since we can solve explicitly the ODE (\ref{b14}). Since this result is interesting in its own sake, we state it as a Theorem

\begin{thm} \label{MainThm-NonUniquenessZeroFrequency-R-3D}
Let $(M,g)$ denotes a Riemannian manifold of the form (\ref{Metric}), \textit{i.e.}
$$
	g = f(x) dx^2 + f(x) dy^2 + h(x) dz^2.
$$
Let $\Gamma_D$ and $\Gamma_N$ be non-empty open sets belonging to the same connected component of $\partial M$. Assume that $\Gamma_D \cap \Gamma_N = \emptyset$. Then \\

\noindent 1) if $\Gamma_D, \Gamma_N \subset \Gamma_0$, there exists a one parameter family of metrics $\tilde{g}$ given by
$$
\tilde{g} = \left[  1 +  B \int_0^x \frac{ds}{\sqrt{h(s)}} \right]^4 g, \quad B > 0,
$$
that satisfies $\Lambda_{g, \Gamma_D,\Gamma_N}(0) = \Lambda_{\tilde{g}, \Gamma_D,\Gamma_N}(0)$.	\\

\noindent 2) if $\Gamma_D, \Gamma_N \subset \Gamma_1$, there exists a one parameter family of metrics $\tilde{g}$ given by
$$
\tilde{g} = \left[  1 +  B \int_x^1 \frac{ds}{\sqrt{h(s)}} \right]^4 g, \quad B > 0,
$$
that satisfies $\Lambda_{g, \Gamma_D,\Gamma_N}(0) = \Lambda_{\tilde{g}, \Gamma_D,\Gamma_N}(0)$.	\\
\end{thm}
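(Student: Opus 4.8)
The plan is to derive this theorem as the explicit zero-frequency specialization of the construction already carried out in Theorem \ref{MainThm-NonUniqueness-R-3D}. The key point is that, by Lemma \ref{Uni-Pot-q}, the conformal factor $c$ producing the counterexample must solve the nonlinear ODE (\ref{ODE-c}), and when $\lambda^2 = 0$ this collapses to the first order linear equation $c'' + \frac12(\log h)' c' = 0$ in the unknown $c'$, whose solutions are $c'(x) = B/\sqrt{h(x)}$. Integrating and normalizing $c$ according to where the data live --- $c(0) = 1$ if $\Gamma_D, \Gamma_N \subset \Gamma_0$, and $c(1) = 1$ if $\Gamma_D, \Gamma_N \subset \Gamma_1$ --- yields exactly the one parameter families $c(x) = 1 + B\int_0^x ds/\sqrt{h(s)}$ and $c(x) = 1 + B\int_x^1 ds/\sqrt{h(s)}$. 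For $B > 0$ these functions are smooth, strictly monotonic, and bounded below by $1$ on $[0,1]$, hence positive, so $\tilde g := c^4 g$ is a bona fide smooth metric of the form (\ref{Metric}).

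Next I would verify that $g$ and $\tilde g = c^4 g$ (i.e. $\tilde f = c^4 f$, $\tilde h = c^4 h$) produce the same characteristic and Weyl-Titchmarsh functions on every Fourier mode. This is the converse direction of Lemma \ref{Uni-Pot-q}, which is a direct substitution into the definition (\ref{Eq4-3D}): since $c$ solves (\ref{ODE-c}) with $\lambda^2 = 0$, one finds $q_{0n} = \tilde q_{0n}$ for all $n \in \Z$, whence $\Delta(m^2,n^2) = \tilde\Delta(m^2,n^2)$ and $M(m^2,n^2) = \tilde M(m^2,n^2)$ for all $m,n$. The normalization of $c$ moreover gives $f(0) = \tilde f(0)$ (resp. $f(1) = \tilde f(1)$).

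Finally I would compare the partial DN maps using formula (\ref{DN-Partiel-5}). For $\psi \in H^{1/2}(T^2)$ with $\textrm{supp}\,\psi \subset \Gamma_D \subset \Gamma_0$, the upper-left block expresses $\Lambda_{g,\Gamma_D,\Gamma_N}(0)(\psi)$ restricted to $\Gamma_N$ as the sum of the multiplication term $\frac{(\log h)'(0)}{4\sqrt{f(0)}}\psi$ and a term assembled from the $M(m^2,n^2)$; because $\Gamma_D \cap \Gamma_N = \emptyset$, the multiplication term vanishes on $\Gamma_N$, leaving only the $M$-term, which is invariant under $g \mapsto \tilde g$ since $M = \tilde M$ and $f(0) = \tilde f(0)$. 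This gives $\Lambda_{g,\Gamma_D,\Gamma_N}(0) = \Lambda_{\tilde g,\Gamma_D,\Gamma_N}(0)$ and proves case 1); case 2) is identical, replacing the upper-left block by the lower-right block of (\ref{DN-Partiel-5}), using that $\frac{(\log h)'(1)}{4\sqrt{f(1)}}\psi$ drops out for the same reason, and that $f(1) = \tilde f(1)$.

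I do not expect a serious obstacle here: the whole argument is a packaging of Lemma \ref{Uni-Pot-q} together with the mechanism of Theorem \ref{MainThm-NonUniqueness-R-3D}. The only points needing a little care are checking positivity of $c$ on all of $[0,1]$ (immediate for $B > 0$ with the above normalizations), and observing that the constructed metrics are genuinely non-isometric to $g$ --- which follows since $c$ is strictly monotonic and $\not\equiv 1$, so $\textrm{Vol}_{\tilde g}(M) = \int_M c^6\sqrt{|g|}\,dx\,dy\,dz \ne \textrm{Vol}_g(M)$ --- so that these are true counterexamples to uniqueness; it is worth emphasizing that the disjointness $\Gamma_D \cap \Gamma_N = \emptyset$ is exactly what removes the only obstruction (the multiplication term coming from the diagonal of the DN map) and makes the zero-frequency phenomenon possible.
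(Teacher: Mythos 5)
Your proposal is correct and follows essentially the same route as the paper: solve the ODE (\ref{ODE-c}) explicitly at $\lambda^2=0$ to get $c(x)=1+B\int_0^x ds/\sqrt{h(s)}$ (resp.\ $1+B\int_x^1 ds/\sqrt{h(s)}$), set $\tilde g=c^4g$, and reuse the mechanism of Theorem \ref{MainThm-NonUniqueness-R-3D} whereby the diagonal multiplication term drops out on $\Gamma_N$ because $\Gamma_D\cap\Gamma_N=\emptyset$. The paper simply cites that earlier proof where you spell the details out, and your added checks (positivity of $c$, non-isometry via the volume) are accurate.
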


\begin{proof}
  We only prove 1. Assume that $\lambda^2 = 0$. For arbitrary smooth positive functions $f,h$ on $[0,1]$, the function $c$ solution of (\ref{b14}) is explicitly given by
$$
  c(x) = A + B \int_0^x \frac{ds}{\sqrt{h(s)}},
$$	
for some constants $A,B$. Since we also demand $c(0) =  1$ and $c^4 > 0$, we only consider the one parameter solutions
$$
  c(x) = 1 +  B \int_0^x \frac{ds}{\sqrt{h(s)}},
$$ 	
for some constant $B > 0$. Then we set $\tilde{g} = c^4 g$ and we use the same proof as in Theorem \ref{MainThm-NonUniqueness-R-3D} to conclude that $\Lambda_{g, \Gamma_D,\Gamma_N}(0) = \Lambda_{\tilde{g}, \Gamma_D,\Gamma_N}(0)$.
\end{proof}

\subsection{The 3D anisotropic Calderon problem with data measured on distinct connected components} \label{Calderon-T}

In this Section, we prove our third main Theorem, namely we give a counterexample to uniqueness for the anisotropic Calderon problem for three dimensional Riemannian manifolds with data measured on two different connected components of the boundary.

\begin{thm} \label{MainThm-T-3D}
Let $(M,g)$ and $(M,\tilde{g})$ denote Riemannian manifolds of the form (\ref{Metric}), \textit{i.e.}
$$
	g = f(x) dx^2 + f(x) dy^2 + h(x) dz^2, \quad \tilde{g} = \tilde{f}(x) dx^2 + \tilde{f}(x) dy^2 + \tilde{h}(x) dz^2.
$$
Let $\lambda^2 \in \R$. Let $\Gamma_D$ and $\Gamma_N$ be non-empty open sets belonging to two different connected components of $\partial M$. Then there exist infinitely many pairs of non-isometric metrics $(g, \tilde{g})$ given by $\tilde{g} = c^4 g$ where $c$ are smooth positive strictly increasing or decreasing functions such that $c(1)^3 = c(0) \ne 1$ if $\Gamma_D \subset \Gamma_0$ and $\Gamma_N \subset \Gamma_1$ or $c(0)^3 = c(1) \ne 1$ if $\Gamma_D \subset \Gamma_1$ and $\Gamma_N \subset \Gamma_0$, satisfying
$$
  \Lambda_{g, \Gamma_D,\Gamma_N}(\lambda^2) = \Lambda_{\tilde{g}, \Gamma_D,\Gamma_N}(\lambda^2).
$$		
\end{thm}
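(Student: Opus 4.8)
The plan is to adapt the construction behind Theorem~\ref{MainThm-NonUniqueness-R-3D}, but keyed to the \emph{off-diagonal} coefficient of the DN map. Since $\Gamma_D$ and $\Gamma_N$ lie in different components, say $\Gamma_D\subset\Gamma_0$ and $\Gamma_N\subset\Gamma_1$ (the other case being symmetric), $\Lambda_{g,\Gamma_D,\Gamma_N}(\lambda^2)$ is nothing but a restriction of $T_L(\lambda^2)=\Lambda_{g,\Gamma_0,\Gamma_1}(\lambda^2)$, whose coefficient on the Fourier mode $Y_{mn}$ is, by (\ref{DN-Partiel-5}),
\[
  T_L^{mn}(\lambda^2)=-\frac{1}{\sqrt{f(1)}}\,\frac{h^{1/4}(0)}{h^{1/4}(1)}\,\frac{1}{\Delta(m^2,n^2)}.
\]
So, in contrast with the uniqueness theorems, no Paley--Wiener/Carlson argument is needed: it suffices to exhibit pairs $(g,\tilde g)$ with $\tilde g=c^4g$ for which $\Delta(\mu^2,n^2)=\tilde\Delta(\mu^2,n^2)$ for all $\mu,n$ \emph{and} the constant $\tfrac{1}{\sqrt{f(1)}}\tfrac{h^{1/4}(0)}{h^{1/4}(1)}$ is unchanged; then $T_L^{mn}=\tilde T_L^{mn}$ for all $m,n$, hence $T_L=\tilde T_L$ and a fortiori the two partial DN maps agree.

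For the construction I would first fix an arbitrary smooth positive $f$ on $[0,1]$, a constant $C>0$, and a smooth positive strictly monotonic $c$ on $[0,1]$ (so $c'$ is nowhere zero on the closed interval) subject only to $c(0)=c(1)^3\neq1$; then set
\[
  h(x)=C\exp\!\Big(-2\!\int_0^x\frac{c''(s)+\lambda^2f(s)\big(c(s)-c^5(s)\big)}{c'(s)}\,ds\Big),\qquad \tilde f=c^4f,\qquad \tilde h=c^4h.
\]
Since $c'$ never vanishes, $h$ (hence $\tilde f,\tilde h$) is smooth and positive and $\tilde g=c^4g$. By construction $c$ solves the ODE (\ref{ODE-c}), so a direct computation using (\ref{Eq4-3D}) and (\ref{ODE-c}) (this is the content of Lemma~\ref{Uni-Pot-q} read in reverse) gives $q_{\lambda n}=\tilde q_{\lambda n}$ on $[0,1]$ for every $n\in\Z$, whence $\Delta(\mu^2,n^2)=\tilde\Delta(\mu^2,n^2)$ for all $\mu\in\C$, $n\in\Z$. (As throughout, one assumes the parameters chosen so that $\lambda^2$ avoids the Dirichlet spectra of $-\Delta_g$ and $-\Delta_{\tilde g}$, a generic condition, so that the DN maps are well defined.)

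The key bookkeeping step is then the transformation of the boundary constant under $\tilde g=c^4g$: using $\tilde f(1)=c(1)^4f(1)$, $\tilde h(0)=c(0)^4h(0)$ and $\tilde h(1)=c(1)^4h(1)$ one computes
\[
  \frac{1}{\sqrt{\tilde f(1)}}\,\frac{\tilde h^{1/4}(0)}{\tilde h^{1/4}(1)}=\frac{c(0)}{c(1)^3}\cdot\frac{1}{\sqrt{f(1)}}\,\frac{h^{1/4}(0)}{h^{1/4}(1)},
\]
so the constant is preserved precisely when $c(0)=c(1)^3$, which explains the cube relation in the statement. Together with $\Delta=\tilde\Delta$ this yields $T_L=\tilde T_L$, hence $\Lambda_{g,\Gamma_D,\Gamma_N}(\lambda^2)=\Lambda_{\tilde g,\Gamma_D,\Gamma_N}(\lambda^2)$. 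For the symmetric case $\Gamma_D\subset\Gamma_1$, $\Gamma_N\subset\Gamma_0$ one works with $T_R^{mn}(\lambda^2)=-\tfrac{1}{\sqrt{f(0)}}\tfrac{h^{1/4}(1)}{h^{1/4}(0)}\tfrac{1}{\Delta(m^2,n^2)}$ instead, and the same computation forces $c(0)^3=c(1)$.

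Finally I would check that these pairs are genuine counterexamples, via the volume argument already used in the Introduction: if $\phi$ were a diffeomorphism of $M$ with $\phi_{|\Gamma_D\cup\Gamma_N}=Id$ and $\phi^*g=\tilde g=c^4g$, then $\int_M f\sqrt h\,dx\,dy\,dz=\int_M c^6f\sqrt h\,dx\,dy\,dz$; but $c(0)=c(1)^3$ together with strict monotonicity forces $0<c<1$ on $[0,1]$ or $c>1$ on $[0,1]$ (e.g.\ if $c$ is increasing then $c(1)^3=c(0)<c(1)$ gives $c(1)<1$, hence $c<1$ throughout), so $c^6\not\equiv 1$ and the two volumes differ, a contradiction. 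Letting $f$ and $c$ range over their infinite-dimensional families produces infinitely many such pairs. I do not anticipate a real obstacle: the statement is a construction, and the only delicate points are the constant-matching computation that singles out the relation $c(0)=c(1)^3$ (resp.\ $c(0)^3=c(1)$) and the verification that this relation is compatible with $c$ being strictly monotonic and nowhere equal to $1$, which is what makes the volume obstruction apply.
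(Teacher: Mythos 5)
Your construction is exactly the paper's: fix $f$ and a strictly monotonic $c$ with $c(0)=c(1)^3\neq 1$, define $h$ by the same exponential formula so that $c$ solves (\ref{ODE-c}), invoke Lemma \ref{Uni-Pot-q} to get $q_{\lambda n}=\tilde q_{\lambda n}$ and hence $\Delta=\tilde\Delta$, and match the off-diagonal boundary constant, which is precisely where the cube relation comes from. Your explicit computation of how $\frac{1}{\sqrt{f(1)}}\frac{h^{1/4}(0)}{h^{1/4}(1)}$ scales under $\tilde g=c^4g$, and your inclusion of the volume-based non-isometry check (which the paper only sketches in the Introduction), are both correct and make the argument if anything slightly more complete than the published proof.
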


\begin{proof}
The proof is almost the same as that of Theorem \ref{MainThm-NonUniqueness-R-3D}.

Let us assume that $\Gamma_D \subset \Gamma_0 = \{0\} \times T^1$ and $\Gamma_D \subset \Gamma_1 = \{1\} \times T^1$. We construct pairs of metrics $(g, \tilde{g})$ satisfying $\Lambda_{g, \Gamma_D,\Gamma_N}(\lambda^2) = \Lambda_{\tilde{g}, \Gamma_D,\Gamma_N}(\lambda^2)$ as follows. Let $f$ and $c$ be any smooth positive function on $[0,1]$ such that $c(1)^3 = c(0)$ and $c'(x) \ne 0$ for all $x \in [0,1]$.

Define
\begin{equation} \label{j1}
  h = C e^{-2 \int_0^x \frac{c''(s) + \lambda^2 f(s) (c(s) - c^5(s))}{c'(s)} ds}, \quad \tilde{f} = c^4 f, \quad \tilde{h} = c^4 h.
\end{equation}
Clearly, we have then $\tilde{g} = c^4 g$. Using Lemma \ref{Uni-Pot-q}, it is immediate to check that $q_{\lambda n} = \tilde{q}_{\lambda n}$ for all $n \in \Z$ where $q_{\lambda n}$ and $\tilde{q}_{\lambda n}$ are given by (\ref{Eq4-3D}). In particular, for such choices of metrics $(g,\tilde{g})$, we always have $\Delta(m^2,n^2) = \tilde{\Delta}(m^2,n^2)$ for all $m,n\in \Z$. Moreover, our assumption $c(1)^3 = c(0)$ is then equivalent to
$$
  \frac{1}{\sqrt{f(1)}} \left( \frac{h(0)}{h(1)} \right)^{1/4} = \frac{1}{\sqrt{\tilde{f}(1)}} \left( \frac{\tilde{h}(0)}{\tilde{h}(1)} \right)^{1/4}.
$$	

Now, using (\ref{DN-Partiel-5}) and the ensuing discussion, we see that for all $\psi \in H^{1/2}(T^2)$ with supp $\,\psi \subset \Gamma_D$, we have
\begin{eqnarray}
  \Lambda_{g, \Gamma_D,\Gamma_N}(\lambda^2)(\psi) & :=  - \left[ \ds\sum_{m,n} \left( \frac{1}{\sqrt{f(1)}} \left( \frac{h(0)}{h(1)} \right)^{1/4} \frac{1}{\Delta(m^2,n^2)} \right) \hat{\psi}(m,n) e^{imy + inz} \right]_{| \,(y,z) \in \Gamma_N}, \\
	        & =  - \left[ \ds\sum_{m,n} \left( \frac{1}{\sqrt{\tilde{f}(1)}} \left( \frac{\tilde{h}(0)}{\tilde{h}(1)} \right)^{1/4} \frac{1}{\tilde{\Delta}(m^2,n^2)} \right) \hat{\psi}(m,n) e^{imy + inz} \right]_{| \,(y,z) \in \Gamma_N}, \\
					& = \Lambda_{\tilde{g}, \Gamma_D,\Gamma_N}(\lambda^2)(\psi). \hspace{8.7cm}	
\end{eqnarray}
Since the above equality holds for all $\psi \in H^{1/2}(T^2)$ with supp $\,\psi \subset \Gamma_D$, the result is proved.  	
\end{proof}

Finally, we can slightly improve our non-uniqueness result in the case of zero frequency. We have

\begin{thm} \label{MainThm-NonUniquenessZeroFrequency-T-3D}
Let $(M,g)$ denotes a Riemannian manifold of the form (\ref{Metric}), \textit{i.e.}
$$
	g = f(x) dx^2 + f(x) dy^2 + h(x) dz^2.
$$
Let $\Gamma_D$ and $\Gamma_N$ be non-empty open sets belonging to two different connected component of $\partial M$. Then \\

\noindent 1) if $\Gamma_D \subset \Gamma_0$ and $\Gamma_N \subset \Gamma_1$, there exists a one parameter family of metrics $\tilde{g}$ given by
$$
\tilde{g} = \left[ A + \frac{A^3 - A}{\int_0^1 \frac{ds}{\sqrt{h(s)}}} \int_x^1 \frac{ds}{\sqrt{h(s)}} \right]^4 g, \quad A > 0,
$$
that satisfies $\Lambda_{g, \Gamma_D,\Gamma_N}(0) = \Lambda_{\tilde{g}, \Gamma_D,\Gamma_N}(0)$.	\\

\noindent 2) if $\Gamma_D \subset \Gamma_1$ and $\Gamma_N \subset \Gamma_0$, there exists a one parameter family of metrics $\tilde{g}$ given by
$$
\tilde{g} = \left[ A + \frac{A^3 - A}{\int_0^1 \frac{ds}{\sqrt{h(s)}}} \int_0^x \frac{ds}{\sqrt{h(s)}} \right]^4 g, \quad A > 0,
$$
that satisfies $\Lambda_{g, \Gamma_D,\Gamma_N}(0) = \Lambda_{\tilde{g}, \Gamma_D,\Gamma_N}(0)$.	\\
\end{thm}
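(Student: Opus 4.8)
The plan is to imitate the proofs of Theorems \ref{MainThm-T-3D} and \ref{MainThm-NonUniqueness-R-3D}, the only new input being that at zero frequency the nonlinear ODE (\ref{ODE-c}) can be solved in closed form for an \emph{arbitrary} pair of smooth positive functions $(f,h)$, so that no special choice of $h$ is needed. I only treat 1); part 2) is completely symmetric (exchange the endpoints $x=0$ and $x=1$, integrate from $0$ to $x$ instead of from $x$ to $1$, and use the off-diagonal coefficient $T_R^{mn}$ of (\ref{DN-Partiel-5}) in place of $T_L^{mn}$, which only changes the boundary constraint from $c(1)^3=c(0)$ to $c(0)^3=c(1)$).

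So assume $\Gamma_D \subset \Gamma_0$ and $\Gamma_N \subset \Gamma_1$, and let $\lambda^2 = 0$. For $A > 0$ with $A \neq 1$, put $I = \int_0^1 \frac{ds}{\sqrt{h(s)}}$ and define
\[
  c(x) = A + \frac{A^3 - A}{I} \int_x^1 \frac{ds}{\sqrt{h(s)}}, \qquad \tilde g = c^4 g .
\]
By Lemma \ref{Uni-Pot-q} (formula (\ref{c}) with $\lambda^2 = 0$), this $c$ is the general solution of $c'' + \tfrac12 (\log h)' c' = 0$; hence the converse of Lemma \ref{Uni-Pot-q} — obtained by inserting $\tilde f = c^4 f$, $\tilde h = c^4 h$ directly into (\ref{Eq4-3D}), exactly as in the proof of Theorem \ref{MainThm-T-3D} — yields that the potentials $q_{\lambda n}$ and $\tilde q_{\lambda n}$ of (\ref{Eq4-3D}) coincide for every $n \in \Z$, and therefore $\Delta(m^2,n^2) = \tilde\Delta(m^2,n^2)$ for all $m,n \in \Z$. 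I would then record the boundary values $c(1) = A$ and $c(0) = A^3$, so that the constraint $c(1)^3 = c(0)$ holds; moreover $c$ is smooth with $c'(x) = -\frac{A^3-A}{I\sqrt{h(x)}} \neq 0$, and $c$ is affine, hence monotone, between the positive numbers $A$ and $A^3$, so $c > 0$ on $[0,1]$. Thus $\tilde g$ is a genuine metric of the form (\ref{Metric}).

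It remains to compare the partial DN maps. By (\ref{DN-Partiel-5}), when $\Gamma_D \subset \Gamma_0$ and $\Gamma_N \subset \Gamma_1$ the operator $\Lambda_{g,\Gamma_D,\Gamma_N}(0)$ acts on $\psi$ with $\mathrm{supp}\,\psi \subset \Gamma_D$ as the restriction to $\Gamma_N$ of the Fourier multiplier with symbol $-\frac{1}{\sqrt{f(1)}}\frac{h^{1/4}(0)}{h^{1/4}(1)}\frac{1}{\Delta(m^2,n^2)}$, and likewise for $\tilde g$. Using $\tilde f = c^4 f$, $\tilde h = c^4 h$, the constant in front gets multiplied by $c(0)/c(1)^3 = 1$; combined with $\Delta = \tilde\Delta$ this gives $\Lambda_{g,\Gamma_D,\Gamma_N}(0)\psi = \Lambda_{\tilde g,\Gamma_D,\Gamma_N}(0)\psi$ for every such $\psi$, as in the proof of Theorem \ref{MainThm-T-3D}. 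Finally, since $c$ is monotone between $A$ and $A^3$, for $A \neq 1$ one has $c > 1$ everywhere (if $A > 1$) or $c < 1$ everywhere (if $0 < A < 1$); the volume comparison of the Introduction then shows $(M,g)$ and $(M,\tilde g)$ are not isometric, so $\{\tilde g\}_{A > 0,\, A \neq 1}$ is a genuine one-parameter family of counterexamples.

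The only real obstacle here is conceptual: one must recognize that the difficulty of the general case — global solvability on $[0,1]$ of the nonlinear ODE (\ref{ODE-c}), which in Theorem \ref{MainThm-T-3D} forced the special form (\ref{b13}) of $h$ — simply evaporates at $\lambda^2 = 0$, since the equation is then linear in $c'$ and solvable for any $h$. Once this is seen, the remaining point is the elementary verification that within the two-parameter family of affine-in-$\int ds/\sqrt h$ solutions one can pick, for each $A > 0$ with $A \neq 1$, a positive monotone representative meeting the boundary condition $c(1)^3 = c(0)$ (resp. $c(0)^3 = c(1)$ in case 2).
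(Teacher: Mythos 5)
Your proposal is correct and follows essentially the same route as the paper: at $\lambda^2=0$ the ODE (\ref{ODE-c}) linearizes in $c'$ and is solved by $c = A + B\int_x^1 ds/\sqrt{h}$ for arbitrary $h$, the constraint $c(1)^3=c(0)$ fixes $B = (A^3-A)/\int_0^1 ds/\sqrt{h}$, and one then sets $\tilde g = c^4 g$ and repeats the DN-map comparison of Theorem \ref{MainThm-T-3D}. The extra verifications you supply (positivity and monotonicity of $c$, the cancellation $c(0)/c(1)^3=1$ in the off-diagonal coefficient of (\ref{DN-Partiel-5}), and the volume argument for non-isometry) are exactly the details the paper leaves implicit.
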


\begin{proof}
  We only prove 1). Assume that $\lambda^2 = 0$. For arbitrary smooth positive functions $f,h$ on $[0,1]$, the function $c$ solution of (\ref{ODE-c}) is explicitly given by
$$
  c(x) = A + B \int_x^1 \frac{ds}{\sqrt{h(s)}},
$$	
for some constants $A,B$. Since we also requires that $c(1)^3 = c(0)$, we only consider the one parameter family of functions
$$
  c(x) = A + \frac{A^3 - A}{\int_0^1 \frac{ds}{\sqrt{h(s)}}} \int_x^1 \frac{ds}{\sqrt{h(s)}}
$$ 	
for some constant $A > 0$. Then we set $\tilde{g} = c^4 g$ and we use the same proof as in Theorem \ref{MainThm-NonUniqueness-R-3D} to conclude that $\Lambda_{g, \Gamma_D,\Gamma_N}(0) = \Lambda_{\tilde{g}, \Gamma_D,\Gamma_N}(0)$.
\end{proof}


\subsection{The 3D anisotropic Calderon problem with a potential}

In this Section, we treat the anisotropic Calderon problem \textbf{(Q3)} with a potential $V=V(x) \in L^\infty(M)$, that is a potential depending only on the variable $x$, for our family of metrics (\ref{Metric}). The global DN map $\lambda_{g,V}(\lambda^2)$ associated to the Dirichlet problem
\begin{equation} \label{Schrodinger-3D}
  \left\{ \begin{array}{cc} (-\Delta_g + V) u = \lambda^2 u, & \textrm{on} \ M, \\ u = \psi, & \textrm{on} \ \partial M, \end{array} \right.
\end{equation}
with $\lambda^2$ not belonging to the Dirichlet spectrum of $-\Delta_g + V$ can be constructed in the same way as in Section \ref{3D} since $V = V(x)$ respects the symmetry of $(M,g)$. On each Fourier mode $Y_{mn} = e^{imy + inz}$, we get the following expression for the induced DN map:
\begin{equation} \label{DN-Partiel-Potentiel-3D}
  \Lambda^{mn}_{g,V}(\lambda^2) = \left( \begin{array}{cc} \frac{(\log h)'(0)}{4\sqrt{f(0)}} - \frac{1}{\sqrt{f(0)}} M_V(m^2,n^2) &  - \frac{1}{\sqrt{f(0)}} \frac{h^{1/4}(1)}{h^{1/4}(0)} \frac{1}{\Delta_V(m^2,n^2)} \\ -\frac{1}{\sqrt{f(1)}} \frac{h^{1/4}(0)}{h^{1/4}(1)} \frac{1}{\Delta_V(m^2,n^2)} &   -\frac{(\log h)'(1)}{4\sqrt{f(1)}} - \frac{1}{\sqrt{f(1)}} N_V(m^2,n^2) \end{array} \right),
\end{equation}
where the characteristic and Weyl-Titchmarsh functions $\Delta_V(m^2,n^2), M_V(m^2,n^2)$ and $N_V(m^2,n^2)$ defined by (\ref{Char-3D})-(\ref{WT-3D}) are associated to the radial ODE
\begin{equation} \label{Eq-Schrodinger-3D}
  -v'' + q_{\lambda,V,n}(x) v = - \mu^2 v, \quad \quad q_{\lambda,V,n} = \frac{[(\log h)']^2}{16}  + \frac{(\log h)''}{4}  + n^2 \frac{f}{h} + (V - \lambda^2) f,
\end{equation}
with boundary conditions
\begin{equation} \label{BC-Schrodinger-3D}
  v(0) = 0, \quad v(1) = 0.
\end{equation}
We also recall the dictionary between the above coefficients of the DN map and the notations used in the Introduction
$$
  L_{V}^{mn}(\lambda^2) = \frac{(\log h)'(0)}{4\sqrt{f(0)}} - \frac{1}{\sqrt{f(0)}} M_V(m^2,n^2) = \Lambda^{mn}_{g,V,\Gamma_0}(\lambda^2),
$$
$$
  R_V^{mn}(\lambda^2) = -\frac{(\log h)'(1)}{4\sqrt{f(1)}} - \frac{1}{\sqrt{f(1)}} N_V(m^2,n^2) = \Lambda^{mn}_{g,V,\Gamma_1}(\lambda^2),
$$
$$	
	T_{L,V}^{mn}(\lambda^2) = -\frac{1}{\sqrt{f(1)}} \frac{h^{1/4}(0)}{h^{1/4}(1)} \frac{1}{\Delta_V(m^2,n^2)} = \Lambda^{mn}_{g,V,\Gamma_0, \Gamma_1}(\lambda^2),
$$
$$
  T_{R,V}^{mn}(\lambda^2) = - \frac{1}{\sqrt{f(0)}} \frac{h^{1/4}(1)}{h^{1/4}(0)} \frac{1}{\Delta_V(m^2,n^2)} = \Lambda^{mn}_{g,V,\Gamma_1, \Gamma_0}(\lambda^2),
$$
where $\Gamma_0 = \{0\} \times T^1$ and $\Gamma_1 = \{1\} \times T^1$.

We prove:

\begin{thm} \label{MainThm-Schrodinger-3D}
Let $(M,g)$ a smooth compact Riemannian manifold of the form (\ref{Metric}), \textit{i.e.}
$$
	g = f(x) dx^2 + f(x) dy^2 + h(x) dz^2.
$$
Let $V, \tilde{V} \in L^\infty(M)$ be two potentials that only depend on the variable $x$. Let the frequency $\lambda^2$ be fixed and not belonging to the Dirichlet spectra of $-\Delta_g + V$ and $-\Delta_g + \tilde{V}$. Let $\Gamma_D$, $\Gamma_N$ be nonempty open subsets belonging to the same connected component of $\partial M$, with $\Gamma_N$ containing an annular region of the type
$$
  (y_0 - \delta, y_0 + \delta) \times T^1, \quad \textrm{or} \quad T^1 \times (z_0 - \delta, z_0 + \delta),
$$
where $\delta > 0$. Assume moreover
$$
  \Lambda_{g, V, \Gamma_D, \Gamma_N}(\lambda^2) = \Lambda_{g, \tilde{V}, \Gamma_D, \Gamma_N}(\lambda^2).
$$
Then
$$
	\tilde{V} = V.
$$
\end{thm}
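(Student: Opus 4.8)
The plan is to follow closely the proof of Theorem \ref{MainThm-R-3D}, exploiting the simplification that here the metric $g$ is the \emph{same} on both sides, so that the background quantities $f,h$ — and in particular the constant $\frac{(\log h)'(0)}{4\sqrt{f(0)}}$ appearing in (\ref{DN-Partiel-Potentiel-3D}) — cancel outright. Assume without loss of generality that $\Gamma_D,\Gamma_N \subset \Gamma_0 = \{0\}\times T^2$ and, after relabeling the torus coordinates if necessary, that $\Gamma_N$ contains an annular open set of the form $([-\pi,-\pi+\delta]\times T^1)\cup((\pi-\delta,\pi]\times T^1)$ for some $\delta>0$, while $\psi$ is taken smooth with $\mathrm{supp}\,\psi \subset \Gamma_D \cap ([-r,r]\times T^1)$, $r<\pi$. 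First I would translate the hypothesis $\Lambda_{g,V,\Gamma_D,\Gamma_N}(\lambda^2) = \Lambda_{g,\tilde V,\Gamma_D,\Gamma_N}(\lambda^2)$, via the diagonalization (\ref{DN-Partiel-Potentiel-3D}), into the identity
$$\Phi(y,z) := \frac{1}{\sqrt{f(0)}}\sum_{m,n\in\Z}\bigl(M_{\tilde V}(m^2,n^2)-M_V(m^2,n^2)\bigr)\hat\psi(m,n)\,e^{imy+inz} = 0$$
for all $(y,z)\in\Gamma_N$ and all admissible $\psi$; the terms carrying $\frac{(\log h)'(0)}{4\sqrt{f(0)}}$ disappear precisely because the metric is not varied. (The Weyl--Titchmarsh functions $M_V(m^2,n^2)$, $M_{\tilde V}(m^2,n^2)$ are well defined by the analogue of Remark \ref{M-WellDefined-3D}, since $\lambda^2$ avoids the two Dirichlet spectra.)

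The next step is the \emph{Complex Angular Momentum} argument in the variable $m$, exactly as in Theorem \ref{MainThm-R-3D}. Since $\Phi$ vanishes on the annular part of $\Gamma_N$, it is supported in $[-R,R]\times T^1$ with $R<\pi$, so by the multivariable Paley--Wiener theorem both $\hat\psi$ and $\hat\Phi$ extend to entire functions on $\C^2$ with the exponential-type bounds of (\ref{b2}) (with $r,R<\pi$). Rewriting the relation $\hat\Phi(m,n) = \frac{1}{\sqrt{f(0)}}\bigl(M_{\tilde V}-M_V\bigr)(m^2,n^2)\hat\psi(m,n)$ in terms of the characteristic functions $\Delta_V,\Delta_{\tilde V}$ and using Corollary \ref{OrderHalf}, one obtains, for each fixed $n\in\Z$, an entire function of $\mu$ of order $1$, of exponential type $<\pi$ on $i\R$, vanishing on $\Z$; Carlson's theorem forces it to vanish identically, so that
$$\hat\Phi(\mu,n) = \frac{1}{\sqrt{f(0)}}\bigl(M_{\tilde V}(\mu^2,n^2)-M_V(\mu^2,n^2)\bigr)\hat\psi(\mu,n), \quad \forall\,\mu\in\C,\ n\in\Z.$$
Because $\hat\Phi(\cdot,n)$ is entire while $\hat\psi(\cdot,n)$ can be prescribed to equal $1$ at any chosen point, the poles and residues of $M_V(\cdot,n^2)$ and $M_{\tilde V}(\cdot,n^2)$ must coincide; the Mittag--Leffler/residue computation from the proof of Proposition \ref{equalityWT} then yields, for every $n\in\Z$,
$$M_V(\mu^2,n^2)-M_V(0,n^2) = M_{\tilde V}(\mu^2,n^2)-M_{\tilde V}(0,n^2), \quad \forall\,\mu\in\C\setminus\R.$$

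Finally I would invoke the Borg--Marchenko Theorem \ref{BM}: the difference of the two Weyl--Titchmarsh functions being a constant — hence entire of order $0\le\tfrac12$ — it follows that the potentials of the one-dimensional problems (\ref{Eq-Schrodinger-3D}) coincide, $q_{\lambda,V,n} = q_{\lambda,\tilde V,n}$ on $[0,1]$, for every $n$ (a single value, say $n=0$, already suffices). Subtracting the two expressions in (\ref{Eq-Schrodinger-3D}), in which the $h$-dependent terms and the $n^2 f/h$ term are identical, leaves $(V-\tilde V)f = 0$ on $[0,1]$, and since $f>0$ this gives $V=\tilde V$. The main obstacle is the passage from equality of partial DN maps to the Weyl--Titchmarsh identity: this is exactly where the technical assumption that $\Gamma_N$ contains a full annular region enters (it makes the Paley--Wiener/Carlson machinery in the angular variable applicable), and where one must reproduce the pole-and-residue matching of Proposition \ref{equalityWT}. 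Once that one-dimensional inverse-spectral input is in place, recovering $V$ is immediate by positivity of $f$.
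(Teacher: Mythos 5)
Your proposal is correct and follows essentially the same route as the paper: the paper's proof simply invokes the argument of Theorem \ref{MainThm-R-3D} (Paley--Wiener, Carlson, residue matching) to obtain the Weyl--Titchmarsh identity $M_{V}(\mu^2,n^2)-M_{V}(0,n^2)=M_{\tilde V}(\mu^2,n^2)-M_{\tilde V}(0,n^2)$, then applies Borg--Marchenko to get $q_{\lambda,V,n}=q_{\lambda,\tilde V,n}$ and concludes $V=\tilde V$ from (\ref{Eq-Schrodinger-3D}) since $f>0$. Your added observations — that the $(\log h)'(0)/4\sqrt{f(0)}$ terms cancel because the metric is unchanged, and that a single value of $n$ already suffices — are correct but do not alter the argument.
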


\begin{proof}
Assume for instance that $\Gamma_D, \Gamma_N \subset \Gamma_0$ and $\Lambda_{g, V, \Gamma_D, \Gamma_N}(\lambda^2) = \Lambda_{g, \tilde{V}, \Gamma_D, \Gamma_N}(\lambda^2)$. Then the same proof as in Theorem \ref{MainThm-R-3D} shows first that
\begin{equation} \label{WT=3D}
  M_{V}(\mu^2,n^2) - M_{V}(0,n^2)= M_{\tilde{V}}(\mu^2,n^2)-M_{\tilde{V}}(0,n^2), \quad \forall \mu \in \C \setminus \R, \ \forall n \in \Z.
\end{equation}
Hence the Borg-Marchenko Theorem \ref{BM} gives
\begin{equation} \label{QV=3D}
  q_{\lambda,V,n} = q_{\lambda, \tilde{V},n}, \quad \textrm{on} \ [0,1], \quad \forall n \in \Z.
\end{equation}
We conclude using (\ref{QV=3D}) and (\ref{Eq-Schrodinger-3D}) that $V = \tilde{V}$ on $[0,1]$.
\end{proof}

In the case where the Dirichlet and Neumann data $\Gamma_D$ and $\Gamma_N$ do not belong to the same connected component of $\partial M$, we are able to give very simple counterexamples to uniqueness, in the case where $f=h$. Precisely, we prove:

\begin{thm} \label{SecondThm-Schrodinger-3D}
Let $(M,g)$ a smooth compact Riemannian manifold of the form
$$
	g = f(x) [ dx^2 +  dy^2 +  dz^2 ].
$$
Let $V, \tilde{V} \in L^\infty(M)$ be two potentials that only depend on the variable $x$. Let the frequency $\lambda^2$ be fixed and not belonging to the Dirichlet spectra of $-\Delta_g + V$ and $-\Delta_g + \tilde{V}$. Let $\Gamma_D$, $\Gamma_N$ be nonempty open subsets belonging to distinct connected components of $\partial M$. Then, there exists an infinite dimensional family of explicit potentials $\tilde{V} \in L^\infty([0,1])$ that satisfy
$$
  \Lambda_{g, V, \Gamma_D, \Gamma_N}(\lambda^2) = \Lambda_{g, \tilde{V}, \Gamma_D, \Gamma_N}(\lambda^2).
$$
\end{thm}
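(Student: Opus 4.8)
The plan is to reduce the equality of partial DN maps, via the structure established in Section \ref{3D}, to an isospectrality condition for a single one-dimensional Schr\"odinger operator, and then to invoke the P\"oschel--Trubowitz description of isospectral classes (Theorem \ref{Char-PT-1}) to produce the family. The starting observation is that for the conformally flat model $g=f(x)[dx^2+dy^2+dz^2]$ one has $h=f$, so the radial potential (\ref{Eq-Schrodinger-3D}) collapses to
\[
q_{\lambda,V,n}(x)=q_{\lambda,V,0}(x)+n^2,\qquad q_{\lambda,V,0}=\frac{[(\log f)']^2}{16}+\frac{(\log f)''}{4}+(V-\lambda^2)f .
\]
Hence the characteristic function satisfies $\Delta_V(\mu^2,n^2)=\Delta_V^{(0)}(\mu^2+n^2)$, where $\Delta_V^{(0)}$ is the characteristic function of the single Dirichlet problem $-v''+q_{\lambda,V,0}v=-\sigma^2 v$. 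This shift identity is the structural fact that makes one isospectral deformation suffice for all angular modes simultaneously.

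Next I would carry out the reduction. Assume, say, $\Gamma_D\subset\Gamma_0$ and $\Gamma_N\subset\Gamma_1$ (the reversed case is identical). By (\ref{DN-Partiel-Potentiel-3D}) the relevant block of the DN map is $T_{L,V}^{mn}=-\tfrac{1}{\sqrt{f(1)}}(h(0)/h(1))^{1/4}\,\Delta_V(m^2,n^2)^{-1}$, i.e. a \emph{fixed} nonzero constant (determined by the common metric $g$, since $h=f$ is held fixed) divided by $\Delta_V(m^2,n^2)$. Running the Paley--Wiener plus Carlson (complex angular momentum) argument exactly as in the proof of Theorem \ref{MainThm-T-2D} and the first part of Theorem \ref{MainThm-R-3D} --- complexifying $m$, and using that $\Delta_V(\mu^2,n^2)$ is of exponential type and bounded on the imaginary axis (Corollary \ref{OrderHalf}) --- the hypothesis $\Lambda_{g,V,\Gamma_D,\Gamma_N}(\lambda^2)=\Lambda_{g,\tilde V,\Gamma_D,\Gamma_N}(\lambda^2)$ forces $\Delta_V(\cdot,n^2)$ and $\Delta_{\tilde V}(\cdot,n^2)$ to have the same zeros for every $n$; Hadamard factorization (Corollary \ref{Hadamard}), the common leading asymptotics, and the fact that $0$ is never a zero (Remark \ref{M-WellDefined-3D}) then upgrade this to $\Delta_V(\mu^2,n^2)=\Delta_{\tilde V}(\mu^2,n^2)$ for all $\mu\in\C$ and $n\in\Z$. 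Taking $n=0$ (or using the shift identity), this says precisely that $q_{\lambda,V,0}$ and $q_{\lambda,\tilde V,0}$ are isospectral for the Dirichlet problem. Conversely, if these $n=0$ potentials are isospectral then, by the shift identity, $q_{\lambda,V,n}$ and $q_{\lambda,\tilde V,n}$ are isospectral for every $n$ (a constant shift), hence $\Delta_V(\cdot,n^2)=\Delta_{\tilde V}(\cdot,n^2)$ for all $n$ and the DN maps coincide. So the whole problem is equivalent to: produce infinitely many $\tilde V\in L^\infty([0,1])$ with $q_{\lambda,\tilde V,0}$ isospectral to $q_{\lambda,V,0}$.

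For the construction, set $q:=q_{\lambda,V,0}$, which lies in $L^\infty([0,1])\subset L^2([0,1])$ since $f$ is smooth and positive and $V\in L^\infty$. By Theorem \ref{Char-PT-1} --- or already its one-parameter subfamilies (\ref{Iso1})--(\ref{Iso2}) --- every
\[
q_{k,t}(x)=q(x)-2\frac{d^2}{dx^2}\log\theta_{k,t}(x),\qquad \theta_{k,t}(x)=1+(e^t-1)\int_x^1 v_k^2(s)\,ds,\quad k\ge 1,\ t\in\R,
\]
is isospectral to $q$, where $v_k$ is the normalized Dirichlet eigenfunction of $-v''+qv=-\sigma^2 v$ for the eigenvalue $\alpha_k^2$. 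Since $q\in L^2$, the $v_k$ are in $H^2\subset C^1([0,1])$, so $\theta_{k,t}\in C^2$, never vanishes, and $\frac{d^2}{dx^2}\log\theta_{k,t}$ is continuous; as $f$ is smooth and positive,
$$\tilde V_{\lambda,k,t}(x):=V(x)-\frac{2}{f(x)}\,\frac{d^2}{dx^2}\log\theta_{k,t}(x)$$
is a genuine element of $L^\infty([0,1])$ with $q_{\lambda,\tilde V_{\lambda,k,t},0}=q_{k,t}$, hence $\Lambda_{g,V,\Gamma_D,\Gamma_N}(\lambda^2)=\Lambda_{g,\tilde V_{\lambda,k,t},\Gamma_D,\Gamma_N}(\lambda^2)$ by the equivalence above. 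Because the P\"oschel--Trubowitz parametrization of the isospectral manifold is injective, $\tilde V_{\lambda,k,t}\ne V$ for $t\ne 0$, and letting $k$ range over $\N$ (or using the full parameter $\xi\in l^2_1$) produces an explicit infinite-dimensional family. As in Theorem \ref{MainThm-Schrodinger-2D}(2), no restriction on $t$ is needed since we only ask for $\tilde V\in L^\infty$, and no endpoint normalization of $\tilde V$ is needed because the metric $g$ (hence every prefactor in the DN map) is held fixed.

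The only genuinely essential point --- and the only place where the hypothesis $f=h$ enters --- is the shift identity $q_{\lambda,V,n}=q_{\lambda,V,0}+n^2$, which allows a single isospectral deformation of the $n=0$ radial potential to propagate simultaneously to every angular mode. Without $f=h$ one would instead have to deform the entire family $\{q_{\lambda,V,n}\}_{n\in\Z}$ isospectrally at once, which is over-determined; this is presumably why the theorem is stated only for the conformally flat model. Everything else --- the Carlson/complex-angular-momentum reduction to equality of characteristic functions, the Hadamard-factorization step, and the P\"oschel--Trubowitz family --- is a direct repetition of arguments already used in Sections \ref{Preliminary}, \ref{2D} and \ref{3D}, so I expect no further obstacle.
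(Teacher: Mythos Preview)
Your proof is correct and follows essentially the same approach as the paper: the key observation in both is the shift identity $q_{\lambda,V,n}=q_{\lambda,V,0}+n^2$ coming from $h=f$, which reduces everything to isospectrality of the single $n=0$ Dirichlet problem, after which the P\"oschel--Trubowitz family (\ref{Iso1})--(\ref{Iso2}) produces the potentials $\tilde V_{\lambda,k,t}=V-\frac{2}{f}\frac{d^2}{dx^2}\log\theta_{k,t}$. Your forward direction (Carlson/Paley--Wiener argument showing the DN-map equality forces isospectrality) is correct but not needed for the statement, which only asks for the construction; the paper likewise alludes to this equivalence but only uses the constructive direction.
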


\begin{proof}
In the case where $f=h$, the potentials $q_{\lambda,V,n}$ and $q_{\lambda,\tilde{V},n}$ have the following simple form
$$
  q_{\lambda,V,n} = \frac{[(\log f)']^2}{16}  + \frac{(\log f)''}{4} + n^2 + (V(x) - \lambda^2) f(x),
$$
$$
	q_{\lambda,\tilde{V},n} = \frac{[(\log f)']^2}{16}  + \frac{(\log f)''}{4} + n^2 + (\tilde{V}(x) - \lambda^2) f(x).
$$	
Clearly, since $n^2$ does not depend on $x$, the potentials $q_{\lambda,V,n}$, $q_{\lambda,\tilde{V},n}$ are isospectral if and only if the potentials  $\frac{[(\log f)']^2}{16}  + \frac{(\log f)''}{4} + (V(x)-\lambda^2) f(x)$ and $\frac{[(\log f)']^2}{16}  + \frac{(\log f)''}{4} + (\tilde{V}(x)-\lambda^2) f(x)$ are also isospectral. We recall that this is also equivalent (see the proofs of Theorems \ref{MainThm-Schrodinger-2D} and \ref{MainThm-R-3D}) to
\begin{equation} \label{Char=2D-V}
  \Delta_{V}(\mu^2, n^2) = \Delta_{\tilde{V}}(\mu^2,n^2), \quad \forall \mu \in \C, \ \forall n\in \Z,
\end{equation}
which would imply that
\begin{equation} \label{z1}
  \Lambda_{g, V, \Gamma_D, \Gamma_N}(\lambda^2) = \Lambda_{g, \tilde{V}, \Gamma_D, \Gamma_N}(\lambda^2),
\end{equation}
according to (\ref{DN-Partiel-Potentiel-3D}).

We deduce thus from \cite{PT} that given a potential $V \in L^\infty(M)$ as above, there exists an infinite dimensional family of explicit potentials $\tilde{V}$ satisfying (\ref{Char=2D-V}) and thus (\ref{z1}). More precisely, the family
\begin{equation} \label{Iso-Schrodinger-2D-V}
  \tilde{V}_{\lambda,k,t}(x) = V(x) - \frac{2}{f(x)} \frac{d^2}{dx^2} \log \theta_{k,t}(x), \quad \forall k \geq 1, \quad t \in \R,
\end{equation} 	
with
\begin{equation} \label{Iso-Schrodinger1-2D-V}
  \theta_{k,t}(x) = 1 + (e^t - 1) \int_x^1 v_k^2(s) ds,
\end{equation} 		
where $v_k$ is the normalized eigenfunction of (\ref{Eq-Schrodinger-3D}) - (\ref{BC-Schrodinger-3D}) associated to the $k$th-eigenvalue $\alpha^2_k$ satisfies
$$
  \Lambda_{g, V, \Gamma_D, \Gamma_N}(\lambda^2) = \Lambda_{g, \tilde{V}_{\lambda,k,t}, \Gamma_D, \Gamma_N}(\lambda^2), \quad \forall k \geq 1, \quad t \in \R.
$$	
\end{proof}


\end{document}